\newtheorem{theorem}{Theorem}[section]
\newtheorem{lemma}[theorem]{Lemma}
\newtheorem{corollary}[theorem]{Corollary}
\newtheorem{observation}[theorem]{Observation}
\newtheorem{example}[theorem]{Example}
\newtheorem{claim}[theorem]{Claim}
\newtheorem{proposition}[theorem]{Proposition}
\newtheorem{definition}[theorem]{Definition}
\newtheorem{maintheorem}{Main Theorem}
\definecolor{awgreen}{RGB}{0, 150, 0}
\newcommand{\opt}{OPT}
\newcommand{\ind }[1]{\bold{1}[#1]}
\def\moverlay{\mathpalette\mov@rlay}
\def\mov@rlay#1#2{\leavevmode\vtop{%
   \baselineskip\z@skip \lineskiplimit-\maxdimen
   \ialign{\hfil$\m@th#1##$\hfil\cr#2\crcr}}}
\newcommand{\charfusion}[3][\mathord]{
    #1{\ifx#1\mathop\vphantom{#2}\fi
        \mathpalette\mov@rlay{#2\cr#3}
      }
    \ifx#1\mathop\expandafter\displaylimits\fi}
\newcommand{\bigcupdot}{\charfusion[\mathop]{\bigcup}{\cdot}}
\newcommand{\nash}[1]{\mathsf{NE}(#1)}
\newcommand{\agents}{[n]}
\newcommand{\actions}{A}
\newcommand{\actionset}{A}
\newcommand{\reward}{r}
\newcommand{\outcomes}{\Omega}
\newcommand{\cost}{c}
\newcommand{\reals}{\mathbb{R}}
\newcommand{\contract}{\vec{\alpha}}
\newcommand{\Seq}{S^\dagger}
\title{Multi-Agent Combinatorial Contracts\thanks{Tomer Ezra is supported by the Harvard University Center of Mathematical Sciences and Applications. The work of M. Feldman has been partially funded by the European Research Council (ERC) under the European Union's Horizon 2020 research and innovation program (grant agreement No. 866132), by an Amazon Research Award, by the NSF-BSF (grant number 2020788), and by a grant from TAU Center for AI and Data Science (TAD). 
}}
\author{Paul D\"utting\thanks{Google Research, Zurich, Switzerland. Email:  \texttt{duetting@google.com}} \and Tomer Ezra\thanks{Harvard University, USA. Email: \texttt{tomer@cmsa.fas.harvard.edu}} \and Michal Feldman\thanks{Tel Aviv University, Israel. Email: \texttt{mfeldman@tauex.tau.ac.il}} \and Thomas Kesselheim\thanks{University of Bonn, Germany. Email: \texttt{thomas.kesselheim@uni-bonn.de}.}}
\date{}
\begin{document}

\maketitle

\begin{abstract}
Combinatorial contracts are emerging as a key paradigm in algorithmic contract design, paralleling the role of  combinatorial auctions in algorithmic mechanism design. 
In this paper we study natural combinatorial contract settings involving teams of agents, each capable of performing multiple actions. 
This scenario extends two fundamental special cases previously examined in the literature, namely the single-agent combinatorial action  model of \cite{DuttingEFK21} and the multi-agent binary-action model of \cite{BabaioffFNW12,DuettingEFK23}. 

We study the algorithmic and computational aspects of these settings, highlighting the unique challenges posed by the absence of certain monotonicity properties essential for analyzing the previous special cases. 
To navigate these complexities, we introduce a broad set of novel tools that deepen our understanding of combinatorial contracts environments and yield good approximation guarantees. 

Our main result is a constant-factor approximation for submodular multi-agent multi-action problems with value and demand oracles access. This result is tight: we show that this problem admits no PTAS (even under binary actions). 
As a side product of our main result, we devise an FPTAS, with value and demand oracles, for single-agent combinatorial action scenarios with general reward 
functions, which is of independent interest.
We also provide bounds on the gap between the optimal welfare and the principal's utility. 
We show that, for subadditive rewards, perhaps surprisingly, this gap scales only logarithmically (rather than linearly) in the size of the action space.
\end{abstract}

\pagenumbering{arabic} 

\section{Introduction}

A fundamental problem in economics is to incentivize a team of agents to exert effort in a joint project (see, e.g., \cite{Holmstrom82}). 
Often the project is owned by a principal who can only observe the final outcome (e.g., whether the project succeeds or not), but not the combination of actions chosen by each agent. 
Since the principal cannot directly observe the agents' choice of actions, she incentivizes the agents by a contract that specifies outcome-contingent payments.
The two intertwined questions are: what is the desired set of actions and how should the principal incentivize the agents to take them?

\paragraph{Model.}
A possible abstraction of this problem is as follows. A principal interacts with a set of $n$ agents. Each agent can take any subset $S_i$ of actions $\actions_i$. We write $\actions = \bigcupdot \actions_i$ for the set of actions, and $S = \bigcupdot S_i$ for a choice of actions by the agents. Note that we may have $S_i = \emptyset$, in which case agent $i$ does not take any action. Each action $j \in \actions$ comes with a cost $c_j$ for the respective agent. Costs are additive, so that $c(S) = \sum_{j \in S} c_j$. In particular, $c(\emptyset) = 0$, so that not taking any actions comes with no cost.
There are two possible outcomes: success or failure of the project. The principal's reward in case of success is $\reward$, and it is zero otherwise.  
Each set of actions $S$ induces a success probability $f(S)$ of the entire project. The success probability $f$ is normalized, so that $f(\emptyset) = 0$, and monotonically non-decreasing, so that $f(S) \leq f(T)$ for any $S \subseteq T$.  
The principal cannot observe the agents' actions, only the final outcome. The principal incentivizes the agents through a (linear) contract 
$\contract = (\alpha_1, \ldots, \alpha_n)$, which specifies the fraction of $\reward$ to be paid to each agent $i$ in case of success. Consider a choice of actions $S$, then agent $i$ best responds to contract $\contract$ if the set of actions $S_i = S \cap \actions_i$ chosen by agent $i$ maximizes agent $i$'s expected utility given by $\alpha_i f(S) \cdot \reward - c(S_i)$. Note that agent $i$'s utility depends on the choice of actions $S_{-i} = S \setminus S_i$ of the agents other than $i$. A profile $S$ in which all agents $i \in \agents$ best respond, is a (pure) Nash equilibrium.
The principal's goal is to maximize her expected profit, given by $(1-\sum_i \alpha_i) f(S) \cdot \reward$, when agents choose a set of actions $S$ that is a best response to the announced contract. 

\paragraph{Related previous work.}
This contracting model is the joint generalization of two fundamental combinatorial contracting models. The first model is the single-agent combinatorial action model of \cite{DuttingEFK21}, which is the special case of the above model with $n = 1$ agent. The second model is the multi-agent binary action model of \cite{BabaioffFNW12,DuettingEFK23}, which corresponds to the special case of our model where each agent controls a single action, say $A_i = \{i\}$ for each agent $i \in \agents$. 
However, the approaches and techniques that have been developed by prior work for the respective models  
seem rather different and incompatible. 

Indeed, in the single-agent combinatorial action model a contract is just a scalar $\alpha$, and the problem has ``linear structure.'' Specifically, one can look at the agent's utility for each set of actions $S \subseteq \actions$ as a function of $\alpha \in [0,1]$. This gives rise to an upper envelope diagram, with breakpoints where the best choice of actions changes. These breakpoints are referred to as critical $\alpha$'s, and it's clear that an optimal contract must occur at a critical $\alpha$. Known poly-time algorithms for this problem exploit this structure by enumerating all critical values, which in particular requires that there are at most poly-many critical values. In the realm of complement-free $f$, this approach is  limited to gross-substitutes (GS) $f$, because for submodular $f$ there can be exponentially many breakpoints \cite{DuttingEFK21}.

In the multi-agent binary actions case, in contrast, for any subset $S$ of agents $A$ it's clear what the corresponding $\alpha_i$ should be. Indeed, for each agent $i \in S$ we can compare this agent's utility under $S$, namely $\alpha_i f(S) \cdot \reward - c_i$, to that if only the agents in $S \setminus \{i\}$ exert effort, namely $\alpha_i f(S \setminus \{i\}) \cdot \reward$, and solve for the smallest $\alpha_i$ that makes the agent indifferent. For all agents $i \not\in S$, we can set $\alpha_i = 0$. The difficulty in this case is that ``nice structure'' of $f$ (say submodularity) does not translate into nice structure of the principal's objective function $(1-\sum_i \alpha_i(S)) f(S) \cdot \reward$. The problem of maximizing the principal's utility admits a poly-time constant-factor approximation for submodular (and even XOS) $f$, with value (and demand) oracle access to $f$ \cite{DuettingEFK23}. The approach taken by this algorithm is to search for a ``good'' set of agents $S$ by first issuing a demand query to $f$ at suitably chosen ``item prices'' $p_i$, one for each agent $i \in \agents$. Afterwards, the algorithm prunes this set $S$ by dropping agents (using that if an agent is dropped, other agents only want to do more) to reach a state where all the remaining agents have high marginal contributions to $f$.
High marginals are important as they correspond to small payments (in terms of $\alpha_i$).

\paragraph{Challenges in the combined problem.}
The multi-agent multi-action problem that we study in this work loses the linear nature of the single-agent combinatorial action problem, and it's also unclear how to extend the multi-agent binary actions approach. 

Specifically, when an agent controls more than one action, the analogy to item prices underlying the existing approaches to the multi-agent binary-actions setting breaks. In fact, unlike in the binary actions case, there is a rather non-trivial equilibrium selection (and construction) problem. 
First and foremost, there is no longer a simple formula for the best contract $\contract$ that incentivizes a given set $S$. In particular, while in the binary actions case, we only have one-sided constraints on the $\alpha_i$'s, in the multi-action case we have upper and lower bounds. As a consequence, not all sets $S$ can be incentivized, and even fixing $S$ finding the range of $\alpha_i$'s that incentivize $S_i$ may require checking exponentially many inequalities.

The intricate interaction between agents and the actions they choose has some important implications. In particular, it causes the problem to lose some monotonicity properties that have been essential in the analysis of the previously-studied special cases.  
Specifically, in the binary actions case, for submodular $f$, under any minimal contract for a set $S$ (where the $\alpha_i$'s are as small as possible), it is possible to drop agents from $S$ setting their $\alpha_i$'s to zero, and this will keep the equilibrium intact. So if other agents do less, then agents only want to do more. As a consequence, the principal's utility is subadditive in the set of agents who are incentivized. This is no longer the case in the multi-agent multi-action setting as the following example demonstrates.

\begin{example}[One agent might do less as a result of others doing less]
\label{example:more-or-less}
Consider a principal-agent setting with one principal and two agents. The principal's reward is $\reward = 1$. The action sets are $A_1=\{1,2\}$, $A_2=\{3\}$. The costs are 
$c_1=0.1$, $c_2=0.24$, $c_3=\epsilon$ for $\epsilon > 0$ small enough. 
The success probabilities are $f(\{1\})= \frac{7}{16}, f(\{2\}) = \frac{3}{4}, f(\{3\})=\frac{1}{4},f(\{1,3\})=\frac{9}{16}, f(\{1,2\})=\frac{7}{8},f(\{2,3\})=f(\{1,2,3\})=1$.  
Note that the success probability function defined this way is submodular but not gross substitutes.

\begin{itemize}
\item The principal utility is maximized by incentivizing $\{2,3\}$ with contract $\contract = (\frac{8}{25},4\epsilon)$, which gives the principal a utility of $1-\frac{8}{25}-4\epsilon=\frac{17}{25} - 4 \epsilon = 0.68-4\epsilon$.
\item The optimal way to incentivize $\{1\}$ is by using contract $\contract = (\frac{8}{35},0)$, which gives the principal a utility of $(1-\frac{8}{35})\frac{7}{16} = \frac{27}{80} = 0.3375$
\item The optimal way to incentivize $\{2\}$ is by using contract $\contract = (\frac{56}{125},0)$, which gives the principal a utility of $(1- \frac{56}{125})\frac{3}{4} = \frac{207}{500} = 0.414 $
\item The optimal way to incentivize $\{1,2\}$ is by using contract $\contract = (\frac{4}{5},0)$, which gives the principal a utility of $(1-\frac{4}{5}) \frac{7}{8} = \frac{7}{40} = 0.175$.
\item The optimal way to incentivize $\{3\}$ is by using contract $\contract = (0, 4\epsilon)$, which gives the principal a utility of $\frac{1}{4}(1- 4\epsilon) = \frac{1}{4}-\epsilon = 0.25-\epsilon$.
\end{itemize}
We note that it is also possible to incentivize $\emptyset$ and $\{1,3\}$ but these give smaller principal utility. 

We can first observe that when agent~2 is dropped, then we need to pay more to agent~1 to incentivize him to take action 2 ($\frac{56}{125} > \frac{8}{25}$). This in particular means that when we keep $\alpha_1 = \frac{8}{25}$ and set $\alpha_2 = 0$, then agent $2$ will take no action, while agent $1$ will do $\{1\}$. So one agent (agent 2) doing less causes another agent (agent 1) to do less as well.

We also observe, that the best contracts for agent $1$ individually and agent $2$ individually incentivize $\{2\}$ and $\{3\}$ respectively. The sum of principal's utilities of these contracts is $\frac{207}{500} + \frac{1}{4} - \epsilon = 0.664 - \epsilon$. This is less than the principal's utility in the best contract for agents $1$ and $2$, which is $\frac{17}{25} - 4 \epsilon = 0.68 - 4\epsilon$ for $\epsilon$ small enough.
\end{example}

\subsection{Our Results}

We start with a simple but important observation (which was independently observed in \cite{DeoCampoVuongEtAl2024} for the multi-agent binary actions model). Namely, for any fixed contract $\contract$, the induced game among the agents is a potential game (see Section~\ref{sec:model},  Proposition~\ref{prop:potential-function}). This shows that any contract $\contract$ admits at least one (pure) Nash equilibrium. 
Moreover, from the  particular form of the potential function, it's also clear that one can find \emph{some} equilibrium $S$ with a single demand query to $f$.

\paragraph{Constant-factor approximation.}  Our first main result is a polynomial-time algorithm that gives constant approximation to the optimal principal's utility for any setting with submodular success probability function $f$, under value and demand oracles.
Notably, our algorithm finds a contract $\contract$, so that \emph{any} equilibrium $S$ of this contracts yields a high utility for the principal. 
This worst-case type guarantee is significantly stronger than simply ensuring the existence of a good equilibrium.

\begin{maintheorem}[See Sections \ref{sec:no-agent-is-large}, \ref{sec:single-agent}, \ref{sec:putting-together}]\label{mainthm:1}
There exists a poly-time algorithm, issuing poly-many value and demand queries that, for every multi-agent multi-action contract setting with submodular reward function, computes a contract $\contract$ such that every equilibrium of $\contract$ gives a constant approximation to the optimal principal's utility. 
\end{maintheorem}

Our proof of this theorem 
requires several new insights, and new techniques. The proof
reduces the problem to one of two cases: 
Either no agent is large, or only a single agent is incentivized. 
(We remark that although a similar reduction appeared in \cite{DuettingEFK23}, it's all but obvious that this would also hold beyond binary actions, because of the non-monotonicity illustrated in Example~\ref{example:more-or-less}).

For the ``no agent is large'' case we give a constant-factor approximation with access to a value oracle.  
Our algorithm 
relies on a novel concept of subset stability, and a new Doubling Lemma, which links this concept to contracts and equilibria. 
Our approach for this case also uncovers 
an interesting 
connection to a particular form of bundle prices, that we exploit to argue that
we can efficiently optimize over these prices.

To address the ``only a single agent is incentivized'' case, we first show how to get an FPTAS for the single-agent problem with access to a value and demand oracle, even for general reward functions. The key difficulty in showing this result is to argue that the optimal contract is strictly bounded away from $1$. 
Our argument for this exploits the (known) upper bound on the utility-to-welfare gap \cite{DuttingRT19}. 
We then show how to extend and robustify the resulting contracts for the setting with multiple agents, using subset stability and the Doubling Lemma, losing only 
an additional constant factor.

The FPTAS that we give for the single-agent problem  
is of independent interest. It improves upon a weakly FPTAS that has been established by \cite{DuttingEFK21}.
It is also tight in that it is known that, even for submodular $f$ computing an optimal contract is \textsf{NP}-hard \cite{DuttingEFK21}, and may require exponentially many demand queries \cite{DuettingFGR24}.

We remark that Main Theorem~\ref{mainthm:1} also implies the first constant-factor approximation for the multi-agent multi-action problem with gross substitutes $f$, with value oracle access only. This problem is known to be $\mathsf{NP}$-hard,
even for additive $f$ and binary actions \cite{DuettingEFK23}.

\paragraph{Tightness/impossibility.} Our second main result shows that Main Theorem~\ref{mainthm:1} is essentially tight. Namely, it shows that constant approximation that it establishes for submodular $f$ with value and demand oracle access is the best we can hope for, even with binary actions.

\begin{maintheorem}[Section ~\ref{sec:noptas}]\label{mainthm:2}
No PTAS exists for the optimal contract problem in multi-agent  
contract settings with submodular reward and with value and demand oracles, even in the special case of binary actions.
\end{maintheorem}

This theorem strengthens two previous impossibility results for the multi-agent binary actions problem. Specifically, for submodular rewards $f$, it was known that there cannot be a PTAS, assuming value oracle access only \cite{EzraFS24}.
Under demand oracle access, it was only known that there can be no PTAS for the broader class of XOS success probability functions $f$ \cite{DuettingEFK23}. 

Our proof of the impossibility theorem differs quite significantly from these previous impossibilities: Rather than hiding a valuable set, it hides a set where the slopes (= marginals) are higher (and hence the payments are smaller).

\paragraph{Second best vs.~first best.}
The social welfare of an action set $S$ is the difference between its reward and cost, namely $f(S)-\cost(S)$. Our final set of results concerns the gap between the optimal principal's utility from a contract and the optimal social welfare of an instance.  
This gap is also known as second-best vs.~first-best in economics. 

\begin{maintheorem}[See 
Section \ref{sec:first-best}]
\label{mainthm:3}
The following guarantees hold with respect to the first-best:
\begin{itemize}
    \item For a single agent with  subadditive reward function $f$, the gap between the optimal social welfare and the best principal's utility is upper bounded by the number of actions, $m$. Moreover, the same holds for the best equilibrium in multi-agent settings with subadditive reward $f$.
    \item Combined with Main~Theorem~\ref{mainthm:1}, for submodular rewards $f$, the best contract obtains an $\Omega(\frac{1}{m})$ fraction of the optimal social welfare under every equilibrium.
\end{itemize}
\end{maintheorem}

Two remarks are in order. First, this result may be surprising in light of previous results. Specifically, \cite{DuttingRT19} show that the gap between the optimal social welfare and the best principal utility is bounded by the number of critical $\alpha$'s, while \cite{DuttingEFK21} showed that even for submodular reward functions the number of best responses can be exponential in $m$. 
Thus, our result shows that the actual bound is exponentially better than what one could expect given \cite{DuttingRT19,DuttingEFK21}.
Second, this result, beyond being interesting in its own right, implies an improved query complexity of our FPTAS for the single-agent combinatorial actions setting when rewards are subadditive.

\subsection{Our Techniques}\label{sec:techniques}

\paragraph{Techniques for Main Theorem~\ref{mainthm:1}.}
We first present and discuss the notion of subset stability (Definition~\ref{def:subset-deviations}) along with the  Doubling Lemma (Lemma~\ref{lemma:anyequilibrium}), which are crucial components of our approximation algorithm.
Recall that, in an equilibrium of a contract, no agent can increase their utility by deviating to any other choice of actions. We introduce a weaker condition called subset stability: No agent wants to deviate to any subset of the actions. This condition turns out to be extremely powerful because of our Doubling Lemma (Lemma~\ref{lemma:anyequilibrium}). This lemma asserts that if $S$ is subset stable with respect to a contract $\contract$, then any equilibrium $\Seq{}$ of a contract $2 \contract + \vec{\epsilon}$ fulfills $f(\Seq{}) \geq \frac{1}{2} f(S)$. Therefore, in a way, it is sufficient to compute a contract admitting a subset stable set of high value. 
The advantage of subset stability (compared to standard stability required in equilibria) is that it admits useful properties, such as closure under removal of actions, making it easier to ensure.

Our method for designing an approximation algorithm involves taking the better of two contracts.
The first contract we compute approximates the best contract in the case that no agent is ``large'', meaning that, for the optimal contract $\contract^\star$ and the optimal equilibrium $S^\star$ of contract $\contract^\star$, for each agent $i \in \agents$, $f(S^\star_i)$ is bounded away from $f(S^\star)$. In this case, we design an algorithm to compute a contract $\contract$ and a subset stable set $S$ such that $\sum_{i \in \agents} \alpha_i \leq \frac{1}{4}$ and $f(S) = \Omega(1) \cdot f(S^\star)$. Then our Doubling Lemma shows that $2 \contract + \vec{\epsilon}$ is a good approximation of the optimal contract. Furthermore, while subset stability is easier to ensure than standard stability, it still requires verifying exponentially many conditions. To mitigate this problem, we identify a sufficient condition for subset stability. 
Morally, it suffices to find a set that, for some $\gamma > 0$, maximizes 
\[
f(S) - \gamma \sum_{i \in \agents} \sqrt{\sum_{j \in S_i} c_i}.
\]
This can be interpreted as finding a demand set with respect to a particular type of (non-additive) bundle prices. For this latter problem, we devise an approximation algorithm.

The second contract we compute approximates the best contract in which only one agent is incentivized, meaning that, in the optimal contract $\contract^\star$,  $\alpha^\star_i = 0$ for all but one agent $i$. We first assume that there is only a single agent, which is the setting studied in \cite{DuttingEFK21}. We show that this problem admits an FPTAS even for general $f$; so far only algorithms running in weakly polynomial time were known. 
To this end we show that, in the optimal contract for this agent, $\alpha$ is bounded away from $0$ and $1$. 
The bounds we derive exploit (known) bounds on the gap between first best and second best, in particular the fact that the principal can achieve a utility of at least $\frac{1}{2^m}\max_{S'}(f(S') - c(S'))$.
We then discretize the range of possible $\alpha$'s, and determine the best contract in this polynomial-sized set. This leads to a query complexity of $O(m^2).$ (In Proposition~\ref{prop:welfare-single}, we sharpen the lower bound on the utility the principal can achieve with a linear contract when $f$ is subadditive to  $\frac{1}{m}\max_{S'}(f(S') - c(S'))$. This yields an improved query complexity of $O(m \log m).$)
Transitioning back to the multi-agent setting, other equilibria may emerge, involving additional agents (beyond the intended one) taking actions.
By appealing once more to the concept of subset stability and the Doubling Lemma, we show that the loss to the principal's utility due to these agents can be bounded.

The final step is to argue that the optimal principal utility is well approximated (to within a constant factor) by the two cases: no agent is large, and only a single agent is incentivized. This last step involves demonstrating that in the presence of a large agent in the optimal solution, incentivizing a single agent is sufficient. Due to the lack of monotonicity observed in Example~\ref{example:more-or-less}, this property is not as straightforward as it may appear. Instead, we use the facts that every equilibrium is also subset stable and that subset stability is preserved under removal of actions, and then apply our Doubling Lemma.

\paragraph{Techniques for Main Theorem~\ref{mainthm:2}.}
To establish Main Theorem~\ref{mainthm:2}, our impossibility for the multi-agent setting with binary actions, we construct a family of hard instances based on a carefully chosen function on two types of agents, ``good" agents which we want to incentivize in the optimal contract, and ``bad" agents which we don't want to exert effort. While this general approach was also adopted by former constructions \cite{DuettingEFK23,EzraFS24}, our approach differs from these prior approaches in a fundamental way. Namely, rather than hiding a set of high value, we hide a set with high marginals.

In our construction, 
the value of a set that contains a higher fraction of good agents is no higher than the value of a set of agents of the same size with a lower fraction of good agents. 
Moreover, the value is strictly lower, unless the fraction of the good agents is small enough, or the size of the set of agents is large enough. 
This construction makes the marginal contribution of every good agent to the set of good agents higher, which in turn decreases the payments necessary to incentivize them.
The good set of agents (and all its subsets) remains hidden, as demand oracles are good at finding sets with higher values as opposed to sets with lower values.

\paragraph{Techniques for Main Theorem~\ref{mainthm:3}.}
We first analyze the case of a single agent. For this case, instead of considering all critical $\alpha$'s, we split the interval of contracts $[0,1]$ to sub-intervals in places where new actions enter the best response of the agent for the first time. By the fact that the welfare can be written as the integral over the success probability as a function of the contract $\alpha$, and with the subadditivity of the success probability function, this split allows us to bound the social welfare as a sum of expressions, one for each action. 
On the other hand, the minimal contract that incentivizes an action $j$, is one of these points in the split, and achieves the corresponding expression in the bound of the welfare. This gives us the bound on the gap to the welfare of at most the number of actions. 

To extend it beyond the single agent setting, we design a randomized contract that incentivizes only a single agent. We choose the agent to incentivize with a probability that is proportional to the number of actions he has. This random choice balances the contributions of the agents since the approximation for the single agent is proportional to the inverse of the number of actions, which gives us the approximation of the number of actions. The optimal contract achieves at least the utility of the randomized contract we analyze.

\subsection{Related Work}

\paragraph{(Algorithmic) contract theory.} Contract theory is one of the pillars of microeconomic theory \cite{Ross73,Holmstrom79,GrossmanH83}, and has been recognized by the 2016 Nobel Prize to Hart and Holmstr\"om. 
Motivated by the fact that more an more of the classic applications of contract theory are moving online and growing in scale, recent work has started to explore contracts from an algorithmic perspective (see, e.g., \cite{BabaioffFNW12,HoSV14,DuttingRT19}).

A central topic in the emerging field of algorithmic contract theory are ``combinatorial contracts.'' The multi-agent binary action model is pioneered and explored in \cite{BabaioffFNW12,DuettingEFK23}, with \cite{DuettingEFK23} exploring the hierarchy of complement free success probability functions. The single-agent combinatorial action model was proposed and first studied in \cite{DuttingEFK21}. Additional work that provides both positive and negative results for these two models includes \cite{EzraFS24,DeoCampoVuongEtAl2024,DuettingFGR24,DuttingFG24}. As a strict generalization of both these models, our work inherits the hardness results that have been established for the two special cases.

Most closely related to our work is a recent working paper by \cite{cacciamani2024multi}. This paper studies a related but different multi-agent contracting setting with multiple actions per agent. This work, unlike our work, assumes that each agent $i$ has a fixed set of actions $\actions_i$, and can take any action $j \in \actions_i$. Moreover, any action profile (a choice of a single action for each agent) induces a probability distribution over possible outcomes (not necessarily binary). A main contribution of this work is the introduction of  the concept of randomized contracts for this setting (also see work on typed contracts below). 
They show that randomized contracts can be arbitrarily better than deterministic ones, and give an algorithm that finds a $(1+\epsilon)$-approximate randomized contract in time polynomial in $O(1/\epsilon)$ and the description size of the problem. The work also shows a reduction from the multi-agent setting to a single-agent setting (via virtual costs), and uses this to show an approximation guarantee for linear contracts.

Two additional directions in combinatorial contracts include \cite{DuttingRT21} and \cite{CastiglioniM023}. The former studies a single agent contracting setting with combinatorial outcome space \cite{DuttingRT21}, the latter studies a related (but different) multi-agent contracting problem, where each agent's action leads to an observable individual outcome which an agent's payment can depend on \cite{CastiglioniM023}.

The studies of \cite{DuttingRT19,BalmacedaEtAl16} explore the gap between linear contracts and optimal contracts, as well as between the best contract and the optimal social welfare. There is a significant amount of work on typed contracts, 
including \cite{GuruganeshSW21,CastiglioniM021,CastiglioniM022,AlonDT21,AlonDLT23}. There is also work that considers the problem of learning contracts, especially from an online learning, no-regret perspective \cite{HoSV14,DuettingGSW23,ZhuBYWJJ23}.

\paragraph{Combinatorial optimization.} On a technical level, our work is related to work on combinatorial optimization, in particular work on combinatorial auctions. Here, gross substitues appear as a natural frontier for economic but also algorithmic reasons (e.g., \cite{PaesLeme17}). In a seminal paper, \cite{Feige09}  provides a constant-factor approximation for the welfare maximization problem in combinatorial auctions with submodular, XOS, and subadditive bidders. An exciting line of work, seeks truthful approximation algorithms for submodular and XOS bidders \cite{Dobzinski21,AssadiKS21}. A different line of work has approached this problem, by analysing the Price of Anarchy of simple combinatorial auctions, such as combinatorial auctions with item bidding \cite{ChristodoulouKS16,FeldmanFGL13}. 
Related set-valued optimization problems come up in prophet inequalities and online posted price mechanisms \cite{FeldmanGL15,DuettingFGL17,DuttingKL20,CorreaC23}. There are also polynomial-time constant-factor approximation results for
truthful revenue maximization with unit-demand bidders \cite{ChawlaHMS10}, additive bidders
\cite{Yao15}, and XOS bidders \cite{CaiOZ22}. An important difference between these works on combinatorial allocation problems and our work, in addition to the different incentive problems arising due to the hidden action aspect of the problem, is that in our case the set-valued function $f$ is not additively separable across agents.

\paragraph{Optimizing the effort of others.} 
More broadly,
our work contributes to an emerging frontier in computer science, seeking to optimize the effort of others. Additional domains within this frontier encompass strategic classification (e.g., \cite{KleinbergR19,BechavodEtAl22}), optimal scoring
rule design (e.g., \cite{HartlineSLW23,YiHSW22}), and delegation (e.g., \cite{KleinbergK18,BechtelDP22}).

\section{Model}\label{sec:model}
\paragraph{Principal-agent model.} We consider the case where a single principal (she, her) delegates the execution of a project to $n$ agents (he, his). We focus on the binary outcome case, where the project can either succeed or fail. We refer to these two outcomes as $0$ and $1$, respectively, and let $\outcomes = \{0,1\}$. The principal receives a reward of $\reward > 0$ for outcome $1$, and a reward of zero for outcome $0$. A simple scaling argument shows that it is without loss of generality to assume that $r = 1$, and this would be our default assumption. 
However, for some examples and constructions it will be convenient to work with general $r > 0$. Each agent $i \in \agents$ has a set of actions $\actions_i$ that he can choose from. We assume that the sets of actions are disjoint, i.e., $\actions_i \cap \actions_{i'} = \emptyset$ for any pair of agents $i \neq i'$ with $i,i' \in \agents$. We write $\actionset = \bigcupdot_{i \in \agents} \actions_i$ for the set of actions. We use $m = |A|$ to refer to the total number of actions.

 Agent $i$ can take any subset $S_i \subseteq \actions_i$ of actions (including $S_i = \emptyset$). We write $S = \bigcupdot_{i \in [n]} S_i \subseteq \actionset$ for the set of actions taken by the agents. Each action $j \in \actionset$ comes with a (non-negative, real) cost $c_j \in \reals_+$.  
 The cost of a set of actions $S \subseteq \actionset$ is additive, i.e., $c(S) = \sum_{j \in S} c_j$ (with $c(\emptyset) = 0$). A success probability function $f: 2^A \rightarrow [0,1]$ maps each set of actions to a success probability.  When the reward for success is normalized to $r = 1$, the success probability coincides with the expected reward of a set of actions. We thus also refer to $f$ as the reward function.  

\paragraph{Contracts and utilities.} In general, a contract defines outcome-contingent transfers from the principal to the agent. In the binary outcome case that we focus on, it is without loss of generality to pay the agents only on success (see Observation II.1 in \cite{DuttingEFK21}).
It is therefore without loss of generality to consider linear contracts. A linear contract $\contract = (\alpha_1, \ldots, \alpha_n)$ specifies, for each agent $i \in \agents$, a fraction $\alpha_i \in [0,1]$ of the reward that should be transferred to the agent upon success. Suppose given contract $\contract$ the agents choose actions $S$, then the principal's utility is given by
\[
u_p(S,\contract) = \left(1-\sum_{i \in \agents} \alpha_i\right) \cdot f(S) \cdot r. 
\]
In turn, agent $i$'s utility is given by
\[
u_i(S,\contract) = \alpha_i \cdot f(S) \cdot r - \sum_{j \in S \cap \actions_i} c_j.
\]

Note that agent $i$'s utility depends on his own actions $S_i = S \cap \actions_i$, but also on the actions $S_{-i} = S \cap (\bigcup_{i' \neq i} \actions_{i'})$ chosen by the other agents. To highlight which actions are controlled by which agents, we sometimes write $f(S_i,S_{-i})$ to mean $f(S_i \cup S_{-i})$, and similarly for other functions. 
A choice of actions $S$ is a (pure) Nash equilibrium for contract $\contract$, if for each agent $i \in [n]$ and all $S'_i \subseteq \actions_i$ it holds that
\[
u_i(S_i,S_{-i},\contract) 
\geq u_i(S'_i,S_{-i},\contract).
\]

We note that the special structure of the games we consider ensures the existence of a pure Nash equilibrium for any $\contract$.
This can be shown by observing that $\phi(S) = f(S) - \sum_{i \in \agents} c(S_i)/\alpha_i$ is a potential function. (If $\alpha_i = 0$ we define $c_j/\alpha_i = \infty$ when $c_j > 0$, and we let $c_j/\alpha_i = 0$ when $c_j = 0$.) For the special case of binary actions, this  observation has been made by \cite{DeoCampoVuongEtAl2024}. A proof appears in Appendix~\ref{apx:potential-function}. 

\begin{proposition}[See also \cite{DeoCampoVuongEtAl2024}]\label{prop:potential-function}
    Every contract $\contract$ admits at least one pure Nash equilibrium $S$.
\end{proposition}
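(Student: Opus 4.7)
The plan is to verify that the claimed function $\phi(S) = f(S) - \sum_{i \in \agents} c(S_i)/\alpha_i$ is indeed an (exact or ordinal) potential for the game induced by a fixed contract $\contract$, and then argue that since the strategy space is finite, $\phi$ attains a maximum, which must be a pure Nash equilibrium.

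First I would handle the generic case where every $\alpha_i > 0$. Fix an agent $i$ and a deviation from $S_i$ to $S_i'$, keeping $S_{-i}$ fixed. I would compute the change in agent $i$'s utility,
\[
u_i(S_i', S_{-i}, \contract) - u_i(S_i, S_{-i}, \contract) = \alpha_i \bigl(f(S_i' \cup S_{-i}) - f(S_i \cup S_{-i})\bigr) - \bigl(c(S_i') - c(S_i)\bigr),
\]
and compare it to the change in $\phi$. Since only the $i$-th summand of $\sum_{i' \in \agents} c(S_{i'})/\alpha_{i'}$ changes, the change in $\phi$ equals
\[
\bigl(f(S_i' \cup S_{-i}) - f(S_i \cup S_{-i})\bigr) - \bigl(c(S_i') - c(S_i)\bigr)/\alpha_i,
\]
so multiplying by $\alpha_i$ recovers the change in $u_i$. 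Hence $\phi$ is an exact potential (up to the positive rescaling $\alpha_i$), and in particular an ordinal potential: the sign of the utility change under any unilateral deviation matches the sign of the potential change.

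Next I would deal with the degenerate case $\alpha_i = 0$. With the stated convention ($c_j/\alpha_i = \infty$ if $c_j > 0$ and $0$ if $c_j = 0$), for $\phi(S)$ to be finite $S_i$ must contain only zero-cost actions of agent $i$. But in that regime agent $i$'s utility reduces to $-c(S_i) \le 0$, which is maximized precisely on zero-cost subsets; among those $\phi$ is invariant in $S_i$, as is $u_i$. Thus the potential/utility alignment persists.

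Finally, since $A$ is finite, $2^A$ is finite and $\phi$ attains its maximum at some $S^\star$ (restricting attention to finite-valued configurations, which are nonempty since $S = \emptyset$ gives $\phi = 0$). For any agent $i$ and any deviation $S_i'$, we have $\phi(S_i', S_{-i}^\star) \le \phi(S_i^\star, S_{-i}^\star)$, which by the alignment above gives $u_i(S_i', S_{-i}^\star, \contract) \le u_i(S_i^\star, S_{-i}^\star, \contract)$. Therefore $S^\star$ is a pure Nash equilibrium, as required. The only subtlety I expect is the bookkeeping around $\alpha_i = 0$ agents with positive-cost actions, which the convention cleanly resolves by ruling those configurations out of contention for the maximum.
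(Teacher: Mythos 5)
Your approach is the same as the paper's: both verify that $\phi(S) = f(S) - \sum_i c(S_i)/\alpha_i$ aligns sign-wise with unilateral utility changes (the paper calls this a \emph{generalized ordinal potential} and cites Monderer and Shapley for the existence of a pure Nash equilibrium in finite games with such a potential; you instead argue directly that the maximizer of $\phi$ is an equilibrium, which is a one-line special case of that lemma). The core computation — that the change in $u_i$ equals $\alpha_i$ times the change in $\phi$ when $\alpha_i > 0$, i.e.\ $\phi$ is a weighted potential — is exactly the paper's argument.

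One small inaccuracy in your handling of the degenerate case: you claim that when $\alpha_i = 0$ and $S_i$ ranges over zero-cost subsets of $A_i$, ``$\phi$ is invariant in $S_i$.'' That is false — the $c(S_i)/\alpha_i$ term is indeed constant (zero), but $\phi$ still contains $f(S)$, which does depend on $S_i$ (and increases as $S_i$ grows, by monotonicity). This does not break your conclusion, because for an agent with $\alpha_i = 0$ at a finite-$\phi$ profile we already have $u_i = -c(S_i^\star) = 0$, which is the global maximum of $u_i$ over all $S_i \subseteq A_i$, so no deviation can improve $i$'s utility regardless of how $\phi$ moves. But the ``invariance'' claim as written should be dropped and replaced by that observation, as it is the actual reason the maximizer of $\phi$ is a best response for zero-payment agents.
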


\paragraph{The optimization problem.} Our goal is to compute a contract $\contract$ and a choice of actions $S$ (to be recommended to the agents) such that the choice of actions $S$ constitutes a Nash equilibrium for contract $\alpha$ and such that 
\begin{equation} 
    u_p(S,\contract) \geq \kappa \cdot \sup_{\contract'} \sup_{S' \in \nash{\contract'}} u_p(S',\contract') = \kappa \cdot u_p(S^\star,\contract^\star), \nonumber\label{eq:principal-utility}
\end{equation}
where $\nash{\contract'}$ constitutes the set of actions that establish a Nash equilibrium between the agents under contract $\contract'$. We note that the supremum in the last equation is attained at some optimal contract $\contract^\star$ with corresponding equilibrium $S^\star$ (see Proposition~\ref{prop:sup-max}). We say that a pair of contract and actions is optimal if $\kappa = 1$. Otherwise, $\kappa \in [0,1)$ captures how close to optimal the solution is.

We remark that our algorithmic results are actually stronger: We find a contract $\contract$ such that any choice of actions $S'$ that constitutes an equilibrium for this $\contract$ achieves the claimed approximation guarantee. Namely, we find a contract $\contract$ such that
\[
\inf_{S' \in \nash{\contract}} 
u_p(S', \contract) 
\geq \kappa \cdot u_p(S^\star,\contract^*).
\]

\paragraph{Classes of reward functions.} 
We generally assume that the reward function $f$ is normalized so that $f(\emptyset) = 0$, and that it is non-decreasing, i.e., that for any $S \subseteq S' \subseteq A$ it holds that $f(S) \leq f(S').$ We sometimes use the shorthand $f(j) = f(\{j\})$. For an action $j$ and a set of actions $S \subseteq A$, we write $f(j \mid S) = f(S \cup \{j\}) - f(S)$ for the marginal contribution of $j$ to $S$. Our main interest will be in the following classes of reward functions $f: 2^A \rightarrow [0,1]$ that come from the hierarchy of complement-free functions \cite{LehmannLN06}: 
\begin{itemize}
\item Set function $f$  
is \emph{additive} if there exist values $v_1, \ldots, v_m \in \reals_+$ such that $f(S) = \sum_{j \in S} v_j$.
\item A set function $f$ is \emph{gross substitutes} if for any two vectors $p, q \in \reals_+^m$ and any $S \subseteq \actionset$ such that $f(S) - \sum_{j \in S} p_j \in \arg\max_{S' \subseteq \actionset} (f(S') - \sum_{j \in S'} p_j)$ there is a $T \subseteq \actionset$ such that $f(T) - \sum_{j \in T} q_j \in \arg\max_{T' \subseteq \actionset} (f(T') - \sum_{j \in T'} q_j)$ and $T \supseteq \{j \in S \mid q_j \leq p_j\}$.
\item Set function $f$ is \emph{submodular} if for every two sets $S, S' \subseteq \actionset$ with $S \subseteq S'$ and any $j \in \actionset$ it holds that $f(j \mid S) \geq f(j \mid S')$.
\item Set function $f$ is \emph{XOS} if there exist a collection of additive functions $\{a_s: 2^\actionset \rightarrow \reals_+\}_{\ell = 1,\ldots, k}$ such that for each $S \subseteq \actionset$ it holds that $f(S) = \max_{\ell=1,\ldots,k} \sum_{j \in S} a_{\ell j}$. Given an XOS function $f$ and a set $S$ there exists an additive function $a_\ell$ such that $a_\ell(S) = f(S)$ and $a_\ell(T) \leq f(T)$ for all $T \subseteq \actionset$; this function is called the \emph{additive supporting function} of $f$ on $S$.
\item Set function $f$ is \emph{subadditive} if for any two sets $S,S' \subseteq \actionset$, it holds that $f(S) + f(S') \geq f(S \cup S').$
\end{itemize}

A well-known fact is that additive $\subseteq$ gross substitutes $\subseteq$ submodular $\subseteq$ XOS $\subseteq$ subadditive, and that all containment relations are strict \cite{LehmannLN06}.

We emphasize that we do \emph{not} assume additive separability of $f$ across agents, i.e., we do \emph{not} assume that $f(S) = \sum_{i} f_i(S_i)$. While additive separability is a natural assumption in combinatorial auctions (where notions such as the welfare of an allocation are defined as the sum of the agents' valuations), it would preclude natural effects between agents and the actions they take in the contracts setting.

\paragraph{Value and demand oracles.} As is common in the literature on combinatorial optimization problems with set functions, we consider two types of primitives for accessing the reward function $f$.
\begin{itemize}
\item \emph{Value oracle access}: A value oracle for $f$ is given a set $S \subseteq \actionset$ and returns the value $f(S)$.
\item \emph{Demand oracle access:} A demand oracle is given $f$ and a (price) vector $p \in \reals^{m}$ and returns a set $S \subseteq A$ that maxmizes $f(S) - \sum_{j \in S} p_j$.
\end{itemize}

Both value and demand oracles are standard in the literature on combinatorial optimization. They also naturally come up in the context of combinatorial contracts. For instance, in the single-agent combinatorial actions case, a demand oracle corresponds to the agent's best response problem \cite{DuttingEFK21}. Indeed, there the agent's best response to a contract $\alpha$ is the set $S \subseteq A$ that maximizes $\alpha f(S) - c(S)$, or equivalently, dividing through $\alpha$ and interpreting $f(S)$ as the agent's valuation, the set $S$ that maximizes $f(S) - \sum_{j \in S} p_j$ at prices $p_j = c_j/\alpha$. More generally, in the multi-agent case, for a given contract $\contract$, a demand oracle corresponds to the best set of actions to take, had the principal to take the actions herself \cite{DuettingEFK23}. Specifically, Proposition~\ref{prop:potential-function} shows that, for any contract $\contract$, a demand oracle at prices $p_j/\alpha_i$ returns some equilibrium $S$.

\section{A Constant-Factor Approximation When No Agent Is Large}
\label{sec:no-agent-is-large}

As a first ingredient to our proof of Main Theorem~\ref{mainthm:1}, we show how to get a constant-factor approximation to the best contract in which no agent is large.

\begin{theorem}\label{thm:noagentislarge}
For every submodular success probability $f$, Algorithm~\ref{alg:cap} runs in polynomial time using polynomially many value queries, and returns a contract $\contract$ with a guarantee $\Lambda$ such that in any equilibrium $S$ of $\contract$  the principal's utility is at least $\Lambda$. 
The guarantee $\Lambda$ satisfies that: There exist constants $\phi, \kappa>0$ such that for any contract $\contract^\star$ and any equilibrium $S^\star$ of $\contract^\star$ such that $f(S^\star_i) \leq \phi\cdot f(S^\star)$ for all $i \in \agents$, it holds that  $\Lambda \geq \kappa \cdot (1-\sum_{i \in \agents} \alpha^\star_i)f(S^\star)$. 
\end{theorem}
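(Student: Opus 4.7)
The plan is to instantiate the strategy advertised in Section~\ref{sec:techniques}: for each $\gamma$ on a polynomial-size geometric grid, compute a subset-stable pair $(S^\gamma, \contract^\gamma)$ with $\sum_i \alpha_i^\gamma \leq 1/4$ via a bundle-price subproblem, and then output the contract $2\contract^\gamma + \vec{\epsilon}$ for whichever $\gamma$ achieves the best guaranteed principal utility. The Doubling Lemma (Lemma~\ref{lemma:anyequilibrium}) will then certify that every equilibrium $S$ of the returned contract satisfies $f(S) \geq f(S^\gamma)/2$, while the principal's multiplier $1 - 2\sum_i \alpha_i^\gamma - o(1)$ stays above $1/2$; hence the pointwise guarantee is $\Lambda = \Omega(f(S^\gamma))$, and the remaining task is to produce, for \emph{some} $\gamma$ on the grid, a subset-stable $S^\gamma$ with $f(S^\gamma) = \Omega(f(S^\star))$.

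For each $\gamma$, I would (approximately) maximize
\[
g_\gamma(S) \;=\; f(S) \;-\; \gamma \sum_{i \in \agents} \sqrt{c(S_i)},
\]
call the result $S^\gamma$, and set $\alpha_i^\gamma = 2\sqrt{c(S^\gamma_i)}/\gamma$. The elementary inequality $\sqrt{a} - \sqrt{b} \geq (a-b)/(2\sqrt{a})$, applied agent by agent, converts approximate maximality of $g_\gamma$ at $S^\gamma$ into
\[
\alpha_i^\gamma \bigl( f(S^\gamma) - f(T_i \cup S^\gamma_{-i}) \bigr) \;\geq\; c(S^\gamma_i) - c(T_i) \qquad \text{for every } T_i \subseteq S^\gamma_i,
\]
which is precisely subset stability under $\contract^\gamma$. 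Moreover, $g_\gamma(S^\gamma) \geq g_\gamma(\emptyset) = 0$ forces $\gamma \sum_i \sqrt{c(S^\gamma_i)} \leq f(S^\gamma) \leq 1$, giving $\sum_i \alpha_i^\gamma \leq 2/\gamma^2 \leq 1/4$ as soon as $\gamma$ exceeds a fixed constant on the grid.

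To certify the approximation ratio against $(\contract^\star, S^\star)$ with $f(S^\star_i) \leq \phi f(S^\star)$ for all $i$, I would exhibit, for the right $\gamma^\star$ on the grid, a witness set $S^{\mathrm{ref}}$ derived from $S^\star$ with $f(S^{\mathrm{ref}}) = \Omega(f(S^\star))$ and $\sum_i \sqrt{c(S^{\mathrm{ref}}_i)} \leq C(\phi)\sqrt{f(S^\star)}$ for a constant $C(\phi)$. The no-large-agent condition together with submodularity and the incentive-compatibility bound $c(S^\star_i) \leq \alpha_i^\star f(S^\star_i)$ is what enables this construction: one buckets agents by the dyadic scale of $\alpha_i^\star$ and discards those of negligible contribution, so that the retained agents concentrate both the reward and the $\sqrt{\cdot}$-cost. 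Then $g_{\gamma^\star}(S^{\mathrm{ref}}) = \Omega(f(S^\star))$ for $\gamma^\star \asymp 1/C(\phi)$, and approximate optimality of $S^{\gamma^\star}$ combined with $f(S^{\gamma^\star}) \geq g_{\gamma^\star}(S^{\gamma^\star})$ yields $\Lambda = \Omega\bigl((1 - \sum_i \alpha_i^\star) f(S^\star)\bigr)$ since the principal's multiplier $1 - \sum_i \alpha_i^\star$ is at most $1$.

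The main obstacle is the bundle-price subproblem itself: $g_\gamma$ is the sum of a submodular function $f$ and a \emph{supermodular} penalty (the negative of $\sqrt{c(S_i)}$, a decreasing-concave function of the linear $c(\cdot)$), so it is neither submodular nor expressible as a single demand query on $f$. My plan is to reduce it to a polynomial family of constrained submodular maximizations: guess a dyadic estimate of $c(S_i)$, equivalently of the per-agent penalty $\gamma\sqrt{c(S_i)}$, for each agent; conditional on the guess the objective becomes $f(S)$ up to additive constants, subject to per-agent knapsack constraints, for which value-oracle constant-factor approximations are standard. Keeping the total number of guesses polynomial is the delicate step; I would arrange it by discretizing each agent's penalty on a geometric grid whose resolution is controlled by $\gamma$, and by coarsening all below-resolution agents into a single bucket whose aggregate penalty is cheap enough to absorb into the constants.
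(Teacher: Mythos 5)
Your high-level skeleton matches the paper's (grid over a scale parameter, bundle prices $p(S)=\gamma\sum_i\sqrt{c(S_i)}$, Lemma~\ref{lemma:combined}-style subset stability, Doubling Lemma), but three concrete ideas the paper relies on are missing, and two of them break your argument as written.

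First, a \emph{fixed-constant} $\gamma$ cannot simultaneously give $\sum_i\alpha_i^\gamma\leq\frac14$ and preserve reward. You obtain $\sum_i\alpha_i^\gamma\leq 2/\gamma^2$ from $g_\gamma(S^\gamma)\geq 0$ together with the crude bound $f(S^\gamma)\leq 1$, which forces $\gamma\geq\sqrt{8}$. But Cauchy--Schwarz applied to the incentive constraints $c(S^\star_i)\leq\alpha^\star_i f(S^\star_i\mid S^\star\setminus S^\star_i)$ gives $\sum_i\sqrt{c(S^\star_i)}\leq\sqrt{f(S^\star)}$, so $g_\gamma(S^\star)\geq f(S^\star)-\gamma\sqrt{f(S^\star)}$, which is $\leq 0$ whenever $\gamma\geq\sqrt{f(S^\star)}$. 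Since $f(S^\star)\leq 1$ and $\gamma\geq\sqrt 8$, the value-preservation side of your argument collapses. The paper resolves this by working with the \emph{capped} function $\hat f(S)=\min\{f(S),\xi\}$, where $\xi$ is a grid estimate of $f(S^\star)$, and setting $\gamma=\sqrt{32\xi}$; the cap replaces the crude $f(S^\gamma)\leq 1$ bound by $\hat f(\hat S)\leq\xi$ in Lemma~\ref{lem:sum-alpha}, which is exactly what makes $\sum_i\alpha_i\leq\frac14$ and $f(\hat S)=\Omega(f(S^\star))$ hold simultaneously. You could instead try the sharper bound $\sum_i\alpha_i\leq 2f(S^\gamma)/\gamma^2$, but then you face a circularity you do not address.

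Second, (multiplicative) approximate maximality of $g_\gamma$ at $S^\gamma$ does \emph{not} imply that $g_\gamma$ cannot be improved by dropping actions, which is what Lemma~\ref{lemma:combined} actually requires. An $\Omega(1)$-approximate maximizer could still contain a single bad action whose removal strictly raises $g_\gamma$. The paper closes this gap with the explicit pruning loop (Step~\ref{step:remove} of Algorithm~\ref{alg:cap}), which repeatedly removes any $j$ with $\hat f(j\mid S^\gamma\setminus\{j\})<p(S^\gamma)-p(S^\gamma\setminus\{j\})$; the resulting set satisfies the local condition Lemma~\ref{lemma:combined} needs, and the pruning can only increase $\hat f(S^\gamma)-p(S^\gamma)$, so the approximation guarantee is preserved.

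Third, your reduction of the bundle-price subproblem to per-agent dyadic guessing with per-agent knapsack constraints does not obviously run in polynomial time. Guessing a scale per agent is $\Theta((\log m)^n)$ combinations, and your ``coarsen below-resolution agents into one bucket'' fix is not enough of an argument. The paper's key algorithmic move here (Algorithms~\ref{alg:single} and~\ref{alg:multi}) is different and much simpler than your plan: for each guess $x$ of the optimal value, \emph{linearize} the square-root penalty into additive item prices $\tilde p_j\approx 18 q_j/x$ and run a standard greedy marginal-gain-minus-price procedure; then for multiple agents, iterate this $n^2$ times taking the best per-agent increment each round and repair the $1/n$ loss by an averaging argument. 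No per-agent knapsack constraints, no simultaneous multi-agent guessing. This linearization step is the missing idea you flag as ``the delicate step.''

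Minor: you set $\alpha_i^\gamma=\frac{2}{\gamma}\sqrt{c(S^\gamma_i)}$, but the chain of inequalities in the proof of Lemma~\ref{lemma:combined} yields $\frac{c(S_i)-c(S'_i)}{f(S_i\mid S_{-i})-f(S'_i\mid S_{-i})}\leq\frac{2}{\gamma}(\sqrt{c(S_i)}+\sqrt{c(S'_i)})\leq\frac{4}{\gamma}\sqrt{c(S_i)}$, so the correct coefficient is $\frac{4}{\gamma}$ unless you are implicitly assuming $S'_i=\emptyset$.
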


\paragraph{Proof outline.}
The proof of Theorem~\ref{thm:noagentislarge} is constructed step-by-step in the subsequent subsections as detailed below.

In Section~\ref{sec:subset-stable} we define the notion of a {\em subset-stable} set with respect to a contract. Subset stability is a relaxation of an equilibrium, where the set is stable against removal of actions (any subset thereof) but not against the addition of actions. 
Subsequently, we establish a highly useful property of subset-stable sets, under submodular reward functions, as specified in the ``Doubling Lemma." This property suggests that, upon doubling the contract to all agents, the reward in {\em any} equilibrium of the doubled contract will be at least half of that in the subset-stable set.

However, identifying subset-stable sets with high value and relatively small payments remains a challenge due to the exponential number of incentive constraints.
To address this, in Section~\ref{sec:bundle-prices}, we identify a simple sufficient condition for a set to be subset stable with respect to a contract.
This condition specifies bundle pricing (which is not necessarily additive) for which a demand set satisfies this condition.

But now a new challenge emerges: answering this demand query with respect to the corresponding bundle prices is hard. To overcome this challenge, in Section~\ref{sec:approximate-demand} we devise an algorithm that gives an approximate demand set with respect to the bundle prices.

Finally, in Section~\ref{sec:combines-no-large}, we show how to transform the approximate demand set into a subset-stable set (with respect to a contract that pays at most a quarter of the reward) that approximates the optimal contract with no large agent.

\subsection{Subset Stability and the Doubling Lemma}\label{sec:subset-stable}

While it is typically insufficient to ensure stability against ``subset deviations" (namely, a deviation to a strict subset of the current set of actions), we show that ensuring such stability is sufficient in some precise approximate sense.
Before formalizing this claim, we introduce the notion of subset stability.

\begin{definition}[Subset stability]
\label{def:subset-deviations}
A set of actions $S$ is {\em subset stable} with respect to a contract $\contract$
if for all agents $i \in \agents$ and all $S'_i \subseteq S_i$ it holds that
\[
\alpha_i f(S_i,S_{-i}) - c(S_i) \geq \alpha_i f(S'_i,S_{-i}) - c(S'_i).
\]
\end{definition}

The following key lemma formalizes the argument that subset stability suffices approximately.  
This lemma will serve us more than once throughout the paper.

\begin{lemma}[The Doubling Lemma]
\label{lemma:anyequilibrium}
Suppose $f$ is submodular. Let $\epsilon > 0$ and let $\vec{\epsilon} = (\epsilon,\ldots,\epsilon) \in \reals_+^n$. Let $S$ be a subset-stable action set with respect to a contract $\contract$.
Then any equilibrium $\Seq{}$ with respect to $2 \contract + \vec{\epsilon}$ fulfills $f(\Seq{}) \geq \frac{1}{2} f(S)$.
\end{lemma}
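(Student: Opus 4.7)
(The Doubling Lemma).}

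The plan is to exploit two carefully chosen deviations, one for each of the two hypotheses, and then combine them via the submodularity of $f$. For each agent $i$, let $B_i = S_i \setminus \Seq_i$; note the $B_i$ are disjoint subsets of $S$. First, from the fact that $\Seq{}$ is an equilibrium with respect to $2\contract + \vec{\epsilon}$, I would consider agent $i$'s deviation from $\Seq_i$ to $\Seq_i \cup S_i = \Seq_i \cup B_i$. Using additivity of $c$ this gives
\[
(2\alpha_i + \epsilon)\bigl[f(\Seq{} \cup B_i) - f(\Seq{})\bigr] \leq c(B_i).
\]
Next, from subset stability of $S$ with respect to $\contract$, applied to the subset $S_i \cap \Seq_i = S_i \setminus B_i$ of $S_i$, I would obtain
\[
\alpha_i \bigl[f(S) - f(S \setminus B_i)\bigr] \geq c(B_i).
\]

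Now the key step: chaining these two inequalities eliminates the $c(B_i)$ term, and since $\alpha_i / (2\alpha_i + \epsilon) < 1/2$ for every $\alpha_i \geq 0$ and $\epsilon > 0$, I can divide through to get
\[
f(S_i \setminus \Seq_i \mid \Seq{}) \;\leq\; \tfrac{1}{2} \bigl[f(S) - f(S \setminus B_i)\bigr].
\]
Summing over all agents $i$ and applying submodularity twice, once on each side, yields the desired bound. On the left, since the $B_i$ are disjoint, submodularity gives $\sum_i f(B_i \mid \Seq{}) \geq f(S \setminus \Seq{} \mid \Seq{}) = f(S \cup \Seq{}) - f(\Seq{})$. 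On the right, I would use a telescoping argument: by submodularity, $f(B_i \mid S \setminus B_i) \leq f(B_i \mid S \setminus \bigcup_{j \geq i} B_j)$, and telescoping then gives $\sum_i f(B_i \mid S \setminus B_i) \leq f(S) - f(S \setminus \bigcup_i B_i) \leq f(S)$. Combining,
\[
f(S \cup \Seq{}) - f(\Seq{}) \;\leq\; \tfrac{1}{2} f(S),
\]
and since $f(S \cup \Seq{}) \geq f(S)$ by monotonicity, this yields $f(\Seq{}) \geq \tfrac{1}{2} f(S)$.

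The main obstacle I anticipate is lining up the two uses of submodularity on the correct sides of the summed inequality, and in particular establishing the telescoping bound $\sum_i [f(S) - f(S \setminus B_i)] \leq f(S)$, which is not a one-line consequence of submodularity but requires the ordering trick above. A secondary technical point is the role of $\vec{\epsilon}$: it is used purely to guarantee the strict inequality $\alpha_i / (2\alpha_i + \epsilon) < 1/2$ uniformly over all $i$, even when $\alpha_i = 0$; without the perturbation, zero-contract agents would need to be handled separately (one could still argue that such agents play $\Seq_i \supseteq \{j : c_j = 0\}$ in any equilibrium, but the $\vec{\epsilon}$ trick is cleaner).
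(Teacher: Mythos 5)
Your proof is correct and follows essentially the same route as the paper: you combine the equilibrium deviation to $\Seq_i \cup S_i$ with the subset-stability deviation to $\Seq_i \cap S_i$, chain them to eliminate $c(B_i)$, and sum over agents using telescoping with submodularity. The paper organizes the left-hand telescoping via an explicit interpolating sequence $S^j$ and bounds the right-hand side by $f(S\setminus\Seq{})$ before relaxing to $f(S)$, but these are cosmetic differences; your version, which makes the right-hand telescoping explicit, is if anything slightly more self-contained.
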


We remark that Lemma~\ref{lemma:anyequilibrium} is quite strong. In particular, it holds for {\em any} subset-stable action set and {\em any} equilibrium with respect to $2 \contract + \vec{\epsilon}$. 
Indeed, this lemma holds with respect to submodular functions $f$, but fails for XOS functions, where there are instances in which any contract admits some bad equilibrium (see Proposition~\ref{prop:bad-xos}). We proceed with the proof of Lemma~\ref{lemma:anyequilibrium}.

\begin{proof}[Proof of Lemma~\ref{lemma:anyequilibrium}]
As $\Seq{}$ is an equilibrium with respect to $2 \contract + \vec{\epsilon}$, agent $i$ weakly prefers $\Seq{}_i$ over $\Seq{}_i \cup S_i$. That is,
\[
(2 \alpha_i + \epsilon) f(\Seq{}) - c(\Seq{}_i) \geq (2 \alpha_i + \epsilon) f(\Seq{}_i \cup S_i , \Seq{}_{-i}) - c(\Seq{}_i \cup S_i)
\]
and therefore
\[
(2 \alpha_i + \epsilon) \bigg( f(\Seq{}_i \cup S_i , \Seq{}_{-i}) - f(\Seq{}) \bigg) \leq c(\Seq{}_i \cup S_i) - c(\Seq{}_i) = c(S_i \setminus \Seq{}_i).
\]

Furthermore, since $S$ is subset stable with respect to $\contract$, agent $i$ weakly prefers $S_i$ over $\Seq{}_i \cap S_i \subseteq S_i$. So, we have
\[
\alpha_i f(S) - c(S_i) \geq \alpha_i f(\Seq{}_i \cap S_i , S_{-i}) - c(\Seq{}_i \cap S_i).
\]
This implies
\[
\alpha_i \bigg(f(S) - f(\Seq{}_i \cap S_i , S_{-i})\bigg) \geq c(S_i) - c(\Seq{}_i \cap S_i) = c(S_i \setminus \Seq{}_i). 
\]

In combination, we obtain
\begin{equation}
f(\Seq{}_i \cup S_i , \Seq{}_{-i}) - f(\Seq{}) \leq \frac{\alpha_i}{2\alpha_i + \epsilon} \left(f(S) - f(\Seq{}_i \cap S_i , S_{-i})\right) \leq \frac{1}{2} \left(f(S) - f(\Seq{}_i \cap S_i , S_{-i})\right),
\label{eq:dagger}
\end{equation}
where we use that the expression is non-negative by monotonicity of $f$.

Let $S^j_i = \Seq{}_i \cup S_i$ if $i \leq j$ and $S^j_i = \Seq{}_i$ otherwise. By this definition
\[
f(S) \leq f(S^n) = f(S^0) + \sum_{j=1}^n \left( f(S^j) - f(S^{j-1}) \right).
\]
Note that $S^0 = \Seq{}$. Furthermore, by submodularity of $f$, we have
\[
f(S^i) - f(S^{i-1}) = f(S_i \setminus \Seq{}_i \mid S^{i-1}) \leq  f(S_i \setminus \Seq{}_i \mid \Seq{})  = f(\Seq{}_i \cup S_i , \Seq{}_{-i}) - f(\Seq{}).
\]
Therefore, we have
\[
f(S) \leq f(\Seq{}) + \sum_{i=1}^n \left(f(\Seq{}_i \cup S_i , \Seq{}_{-i}) - f(\Seq{}) \right) \leq f(\Seq{}) + \sum_{i=1}^n \frac{1}{2} \left(f(S) - f(\Seq{}_i \cap S_i , S_{-i})\right) \leq f(\Seq{}) + \frac{1}{2} f(S \setminus \Seq{}),
\]
where in the second inequality we used Equation~\eqref{eq:dagger}. 
So, we have $f(S) \leq 2 f(\Seq{})$.
\end{proof}

\subsection{A Sufficient Condition for Subset Stability}
\label{sec:bundle-prices}

The following lemma defines a sufficient condition for a set to be subset stable for some contract. This is the key ingredient that drives our algorithm for finding subset stable sets with a relatively high value for which the corresponding contract does not pay more than half of the utility. 

\begin{lemma}
\label{lemma:combined}
Suppose $f$ is submodular. Fix $\gamma>0$ and let $S$ be a set such that for all $i$ and all $j \in S_i$, we have $f(j \mid S \setminus \{j\}) \geq \gamma \left(\sqrt{c(S_i)} - \sqrt{c(S_i \setminus \{j\}})\right)$. Then, $S$ is subset stable under the contract $\contract$ defined by
$\alpha_i = \frac{4}{\gamma} \sqrt{c(S_i)}$. \end{lemma}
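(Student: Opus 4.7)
Fix an agent $i$ and a subset $S'_i \subseteq S_i$; set $T_i := S_i \setminus S'_i$. The subset stability inequality rearranges to
\[
\alpha_i \bigl(f(S_i, S_{-i}) - f(S'_i, S_{-i})\bigr) \;\geq\; c(S_i) - c(S'_i) \;=\; c(T_i),
\]
so with $\alpha_i = \tfrac{4}{\gamma}\sqrt{c(S_i)}$ the task reduces to proving
\[
f(S) - f(S'_i, S_{-i}) \;\geq\; \frac{\gamma\, c(T_i)}{4\sqrt{c(S_i)}}.
\]

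\textbf{Step 1: Telescoping with submodularity.} Enumerate $T_i = \{j_1, \ldots, j_k\}$ in any order and write $f(S) - f(S'_i, S_{-i})$ as a telescoping sum of marginals $f(j_\ell \mid S'_i \cup \{j_1, \ldots, j_{\ell-1}\} \cup S_{-i})$. Since $j_\ell \in T_i$ is disjoint from $S'_i \cup \{j_1, \ldots, j_{\ell-1}\}$, each conditioning set is contained in $S \setminus \{j_\ell\}$, so submodularity gives
\[
f(S) - f(S'_i, S_{-i}) \;\geq\; \sum_{j \in T_i} f(j \mid S \setminus \{j\}).
\]

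\textbf{Step 2: Apply the hypothesis.} Using the assumed lower bound $f(j \mid S \setminus \{j\}) \geq \gamma(\sqrt{c(S_i)} - \sqrt{c(S_i \setminus \{j\})})$ termwise yields
\[
f(S) - f(S'_i, S_{-i}) \;\geq\; \gamma \sum_{j \in T_i} \bigl(\sqrt{c(S_i)} - \sqrt{c(S_i \setminus \{j\})}\bigr).
\]

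\textbf{Step 3: An elementary square-root inequality.} For each $j \in T_i$, rationalize:
\[
\sqrt{c(S_i)} - \sqrt{c(S_i \setminus \{j\})} \;=\; \frac{c_j}{\sqrt{c(S_i)} + \sqrt{c(S_i \setminus \{j\})}} \;\geq\; \frac{c_j}{2\sqrt{c(S_i)}}.
\]
Summing over $j \in T_i$ and combining with Step~2 gives
\[
f(S) - f(S'_i, S_{-i}) \;\geq\; \frac{\gamma\, c(T_i)}{2\sqrt{c(S_i)}} \;\geq\; \frac{\gamma\, c(T_i)}{4\sqrt{c(S_i)}},
\]
which is exactly what is needed. (Note this in fact yields a factor of $2$ of slack, reflecting that the stated bound $\alpha_i = \tfrac{4}{\gamma}\sqrt{c(S_i)}$ is not tight.)

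\textbf{Obstacles.} There are essentially no serious obstacles; the only delicate point is verifying in Step~1 that the telescoping order is valid and that submodularity can be applied to each marginal against the large set $S \setminus \{j\}$ rather than some intermediate superset. The rest is elementary algebra.
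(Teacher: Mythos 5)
Your proof is correct and follows essentially the same argument as the paper's: telescoping via submodularity to reduce to marginals against $S\setminus\{j\}$, applying the hypothesis termwise, and rationalizing the square-root differences to compare against $c_j/(2\sqrt{c(S_i)})$. Your observation that the constant $\tfrac{4}{\gamma}$ has a factor-of-two slack is also valid; the paper's proof obscures this by inserting an extra (lossy) comparison between $c(S_i)-c(S_i')$ and $\sqrt{c(S_i)}-\sqrt{c(S_i')}$ before computing the ratio.
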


The condition in Lemma~\ref{lemma:combined} has the following natural economic interpretation: For ``bundle prices'' $p(S) = \gamma \sum_{i \in \agents} \sqrt{c(S_i)}$, if
the (quasi-linear) ``utility'' $f(S)- p(S)$ cannot be improved upon by dropping any single action 
then the set is subset stable under contract $\contract$ (as defined in the lemma). 

One desirable property of this condition is that, unlike subset stability, it is easy to verify (we just need to check that we don't want to drop individual actions).
Our algorithms in Sections~\ref{sec:approximate-demand} and \ref{sec:combines-no-large} are inspired by this economic interpretation.

\begin{proof}[Proof of Lemma~\ref{lemma:combined}]

    Let $S_i' \subseteq S_i$, and let $X = S_i \setminus S'_i$. By submodularity, we have
\begin{equation}    
f(S_i \mid S_{-i}) - f(S_i' \mid S_{-i}) \geq \sum_{j \in X} f(j \mid S \setminus \{j\}). \label{eq:submod}
\end{equation}

Furthermore, each such term satisfies
\begin{equation}
f(j \mid S \setminus \{j\}) \geq \gamma( \sqrt{c(S_i)} - \sqrt{c(S_i \setminus \{j\})} ) = \gamma \frac{c(S_i) - c(S_i \setminus \{j\})}{ \sqrt{c(S_i)} + \sqrt{c(S_i \setminus \{j\})}} \geq \frac{\gamma c(\{j\})}{2 \sqrt{c(S_i)}}.\label{eq:marginal}
\end{equation}
By abbreviated multiplication we have that
\begin{equation}    
c(S_i) - c(S_i') = (\sqrt{c(S_i)} - \sqrt{c(S_i')})(\sqrt{c(S_i)} + \sqrt{c(S_i')}) \geq (\sqrt{c(S_i)} - \sqrt{c(S_i')})\sqrt{c(S_i)}. \label{eq:difc}
\end{equation}
Combining Equations~\eqref{eq:submod}, \eqref{eq:marginal}, and \eqref{eq:difc} we get:
\begin{eqnarray*}
f(S_i \mid S_{-i}) - f(S_i' \mid S_{-i})  & \stackrel{\eqref{eq:submod}}{\geq} & 
\sum_{j \in X} f(j \mid S \setminus \{j\})  \stackrel{\eqref{eq:marginal}}{\geq} \sum_{j \in X} \frac{\gamma c(\{j\})}{2 \sqrt{c(S_i)}} \\ &  =  & \frac{\gamma c(X)}{2 \sqrt{c(S_i)}} = \frac{\gamma}{2 \sqrt{c(S_i)}} (c(S_i) - c(S_i')) \stackrel{\eqref{eq:difc}}{\geq}  \frac{\gamma}{2} (\sqrt{c(S_i)} - \sqrt{c(S_i')}).
\end{eqnarray*}
This implies that
\[
\frac{c(S_i) - c(S_i')}{f(S_i \mid S_{-i}) - f(S_i' \mid S_{-i})} \leq \frac{c(S_i) - c(S_i')}{\frac{\gamma}{2} \left(\sqrt{c(S_i)} -  \sqrt{c(S_i')}\right)} = \frac{2}{\gamma} \left(\sqrt{c(S_i)} + \sqrt{c(S_i')} \right) \leq \frac{4}{\gamma} \sqrt{c(S_i)} = \alpha_i,
\]
where we use that $c(S_i') \leq c(S_i)$ because $S_i' \subseteq S_i$.

By rearrangement of the last inequality we get that $\alpha_i f(S) - c(S_i) \geq \alpha_i f(S_i', S_{-i}) - c(S_i')$, which concludes the proof. 
\end{proof}

\subsection{Approximate Demand Oracles for Bundle Prices}\label{sec:approximate-demand}
\newcommand{\q}{\ensuremath{q}}
\newcommand{\h}{\ensuremath{h}}
Lemma~\ref{lemma:combined} states that if $f(S) - p(S)$ cannot be increased by dropping actions, the set $S$ is subset stable under the contract $\contract$ defined there. 
Clearly, a set $S$ that maximizes $f(S) - p(S)$ fulfills this property. Motivated by this, in this section, we seek to approximately maximize $f(S) - p(S)$. This approximation algorithm will then be used as a subroutine in our final algorithm to compute a contract.

To keep notation clean and abstract away from our application, we consider the following problem. Let $\actions$ be any set partitioned into $\actions_1, \ldots, \actions_n$. Furthermore, let $\h\colon 2^\actions \to \mathbb{R}$ be any monotone submodular function and let $p\colon 2^\actions \to \mathbb{R}$ be defined by $p(S) = \sum_{i \in \agents} \sqrt{\sum_{j \in S_i} \q_j}$, where $\q_1, \ldots, \q_m \geq 0$ and $S_i = S \cap \actions_i$. Our goal is to find a set $S$ that maximizes $\h(S) - p(S)$.

\subsubsection{Single Agent}
\label{sec:single}
To simplify our optimization problem, we first consider the case that $n = 1$. That is, our goal is now to maximize $\h(S) - \sqrt{\sum_{j \in S} \q_j}$. For this problem, we propose Algorithm~\ref{alg:single}. The idea behind this algorithm is as follows. Suppose we knew the value $\h(S^\star)$ for the optimal solution $S^\star$. Then define additive prices $\tilde{p}_j \approx \frac{\q_j}{\h(S^\star)}$ and compute an approximate demand set with respect to these additive prices. 

\begin{algorithm}
\caption{Approximate demand oracle for 
bundle prices for a single agent}\label{alg:single}
   \hspace*{\algorithmicindent} \textbf{Input:}  Costs $\q_1,\ldots,\q_m \in \reals_{\geq 0}$, and value oracle access to a submodular function $\h:2^A \rightarrow \reals_{\geq 0}$ \\
    \hspace*{\algorithmicindent} \textbf{Output:}  A set $S$ 
\begin{algorithmic}[1]
\For{$j \in M$ and $k\in\{1,\ldots ,\lceil\log m\rceil \}$} 
\State Let $x = 2^k \cdot \h(j)$ 
\State $S^{x} = \emptyset$
\State Set $\tilde{p}_{j'} = \frac{18 \q_{j'}}{x}$ for all $j' \in M$
\While{there is $j' \in M \setminus S^{x}$ such that $\h(j' \mid S^{x}) - \tilde{p}_{j'} \geq 0$} 
\State Add to $S^{x}$ the element $j'$ that maximizes $\h(j' \mid S^{x}) - \tilde{p}_{j'}$
\EndWhile
\EndFor
\State \Return $S^{x}$ for which $\h(S^{x}) - \sqrt{\q( S^{x})}$ is maximized
\end{algorithmic}
\end{algorithm}

\begin{theorem}
For $\rho =\frac{1}{6}$,
Algorithm~\ref{alg:single} computes a set $S$ satisfying
\[
\h(S) - \sqrt{\sum_{j \in S} \q_j} \geq \max_{S^\star}\left( \rho \cdot \h(S^\star) - \sqrt{\sum_{j \in S^\star} \q_j}\right).
\]
\label{thm:alg-single}
\end{theorem}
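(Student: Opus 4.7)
The plan is to fix any target set $S^\star$, identify one iteration of Algorithm~\ref{alg:single} that already produces $T = S^x$ satisfying $h(T) - \sqrt{q(T)} \geq \tfrac{1}{6} h(S^\star) - \sqrt{q(S^\star)}$, and then invoke the final ``return the best $S^x$'' step to conclude. Throughout, write $A = h(S^\star)$ and $B = \sqrt{q(S^\star)}$.

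The first step is to verify the outer grid is dense enough. By monotonicity and submodularity of $h$, some $j^\star \in S^\star$ satisfies $A/m \leq h(j^\star) \leq A$, hence for an appropriate $k \in \{1, \ldots, \lceil \log m \rceil\}$ the grid point $x := 2^k h(j^\star)$ will lie in $[A, 2A]$ (the boundary case $h(j^\star) = A$ is covered by $k = 1$). Fixing this $x$ and writing $T = S^x$ for the greedy output under prices $\tilde{p}_j = 18 q_j / x$, I will derive two standard greedy estimates:
\begin{itemize}
\item[(a)] From the stopping rule $h(j \mid T) < \tilde{p}_j$ for every $j \notin T$ combined with monotone submodularity, $h(T) \geq h(S^\star) - \tilde{p}(S^\star) = A - 18 B^2/x \geq A - 18 B^2/A$.
\item[(b)] Since each greedy addition contributes non-negative net gain $h(j \mid T) - \tilde{p}_j \geq 0$, telescoping yields $h(T) \geq \tilde{p}(T) = 18 q(T)/x$, which combined with $x \leq 2A$ gives $\sqrt{q(T)} \leq \sqrt{h(T) A / 9}$.
\end{itemize}

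The hard part will be to combine (a) and (b) into the target inequality $h(T) - \sqrt{q(T)} \geq A/6 - B$ via a case split on $s := B/A$. In the small-cost regime $s \leq 1/6$, bound (a) forces $h(T) \geq A/2$, and the AM-GM estimate $\sqrt{h(T) A/9} \leq (h(T) + A)/6$ together with (a) reduces the target to the quadratic inequality $A^2 + 2AB \geq 30 B^2$, whose positive-root threshold $(1+\sqrt{31})/30$ lies just above $1/6$, so it closes. In the large-cost regime $s > 1/6$ the target $A/6 - B$ is non-positive; here the universal bound $a - \sqrt{aA/9} \geq -A/36$ (attained at $a = A/36$) already beats the target whenever $s \geq 7/36$, and the narrow intermediate window $s \in (1/6, 7/36)$ is bridged by the same AM-GM computation from the small-cost case, since the polynomial inequality $1 + 2s \geq 30 s^2$ remains valid up to $s \approx 0.219$. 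I expect the main obstacle to be precisely this endgame: the constants $18$ in the prices and $\rho = 1/6$ in the guarantee are calibrated so tightly that the two regimes just barely dovetail at $s = 1/6$, so I will need to verify the case analysis carefully across the transition window without leaving a gap.
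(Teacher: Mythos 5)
Your proof is correct, and the two key greedy estimates you extract (your (a) from the stopping rule, your (b) from telescoping the non-negative net gains) are exactly Observation~3.5 and the ingredient inside Lemma~3.6 of the paper. The grid argument showing some iteration achieves $x \in [A, 2A]$ is also the same. Where you diverge is the endgame: you aim to establish the inequality for \emph{every} $S^\star$, which forces the case split on $s = B/A$, the AM--GM estimate $\sqrt{h(T)A/9} \le (h(T)+A)/6$, and the root analysis of $1 + 2s - 30s^2$. I checked the arithmetic: the positive root is $\frac{1+\sqrt{31}}{30} \approx 0.219$, the universal lower bound $h(T) - \sqrt{q(T)} \ge -A/36$ kicks in at $s \ge 7/36 \approx 0.194$, and $7/36 < 0.219$, so the three regimes do dovetail without a gap, despite your worry.

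The paper takes a shortcut you missed: since the statement has a $\max_{S^\star}$ on the right, one may take $S^\star$ to be the maximizer, and comparing against $\emptyset$ gives $\rho\, h(S^\star) - B \ge 0$ for free, i.e.\ $s \le 1/6$ always. This collapses your whole case analysis. Concretely, from $B \le \rho A$ one gets $\tilde p(S^\star) = 18 B^2/x \le 18 \rho^2 A = A/2$, so (a) gives $h(T) \ge A/2$ outright; then (b) plus $x \le 2A$ gives $\sqrt{q(T)} \le \tfrac{\sqrt{2}}{3} h(T)$, and $(1 - \tfrac{\sqrt{2}}{3}) h(T) \ge \tfrac{3-\sqrt{2}}{6} A > \tfrac{1}{6} A \ge \rho A - B$. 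Your version proves a strictly stronger per-$S^\star$ statement at the cost of a more delicate finish; the paper's version is shorter but inherently tied to the optimizer. Both are valid; if you want to shorten yours, note the optimality observation and Case~2 disappears entirely.

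One very minor omission: you fix ``any target set $S^\star$,'' but the grid argument and bound (a) both need $S^\star \neq \emptyset$. The empty-set case should be dispatched separately (the target is then $0$, and one argues the returned set achieves non-negative objective), as the paper does in its first line.
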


\begin{proof}
Let $S^\star$ maximize $\rho \cdot \h(S^\star) - \sqrt{\sum_{j \in S^\star} \q_j}$. If $S^\star =\emptyset$ then the guarantee is trivial, since the algorithm returns a set with a non-negative value of $\h(S) - \sqrt{\sum_{j \in S} \q_j}$.
Otherwise, let $j^\star = \arg\max_{j \in S^\star} \h(j)$ note that
$ \h(j^\star) \leq \h(S^\star) \leq m  \cdot \h(j^\star)$, where the second inequality follows by subadditivity.
So, there is an execution of the algorithm in which $x/2 \leq \h(S^\star) \leq x$ since when the algorithm considers $j=j^\star$, then for $k=1$ it holds that $x/2 \leq \h(j^\star)$ while for $k=\lceil\log m \rceil$ it holds that $x \geq m \cdot \h(j^\star)$.

\begin{observation}\label{obs:fp}
The algorithm computes a set $S$ such that $\h(S) \geq \sum_{j \in S} \tilde{p}_j$.
\end{observation}

\begin{lemma}\label{lem:fss}
The algorithm computes a set $S$ such that 
$\h(S) \geq  \frac{1}{2} \h(S^\star)$.
\end{lemma}

\begin{proof}
Upon termination it holds that
$\h(j \mid S) - \tilde{p}_j \leq 0$ for all $j \in M$.

So
\begin{eqnarray}
0  & \geq &  \max_{j \in M} (\h(j \mid S) - \tilde{p}_j) \nonumber   \geq  \frac{1}{\lvert S^\star \rvert} \sum_{j \in S^\star} (\h(j \mid S) - \tilde{p}_j)  \nonumber \\ & \geq & \frac{1}{\lvert S^\star \rvert} \left (\h(S^\star \cup S) - \h(S) - \sum_{j \in S^\star} \tilde{p}_j\right)  \geq  \frac{1}{\lvert S^\star \rvert} \left (\h(S^\star) - \h(S) - \sum_{j \in S^\star} \tilde{p}_j\right),  \nonumber 
\end{eqnarray}
where the third inequality is by submodularity of $\h$, and the fourth inequality is by monotonicity of $\h$.
Combined with the fact that $|S^\star| \geq 1$, this implies that $\h(S) \geq  \h(S^\star) - \sum_{j \in S^\star} \tilde{p}_j$. Furthermore
\[
\sum_{j \in S^\star} \tilde{p}_j = \sum_{j \in S^\star} \frac{18 \q_j}{x} \leq \frac{18 (\rho \cdot  \h(S^\star))^2}{x} \leq \frac{1}{2} \h(S^\star),
\]
where the first inequality is since $\rho \cdot \h(S^\star) - \sqrt{\sum_{j \in S^\star} \q_j} \geq 0$, and the second inequality is since $x \geq \h(S^\star)$ and by the value pf $\rho$.
\end{proof}

By combining  (1) Observation~\ref{obs:fp},  (2) that $x\leq 2\h(S^\star)$, and (3) Lemma~\ref{lem:fss} we get that:
\[
\sum_{j \in S} \q_j = \frac{x}{18} \sum_{j \in S} \tilde{p}_j \leq \frac{2 \h(S^\star)}{18} \h(S) \leq \frac{4}{18} \h(S)^2,
\]
which by rearrangement gives us that $\sqrt{\sum_{j \in S} \q_j} \leq \frac{\sqrt{2}}{3} \h(S)$.
Therefore, it holds that
\[
\h(S) - \sqrt{\sum_{j \in S} \q_j} \geq \frac{3 - \sqrt{2}}{3} \h(S) \geq \rho \cdot \h(S^\star) \geq \rho \cdot \h(S^\star) - \sqrt{\sum_{j \in S^\star} \q_j},
\]
which concludes the proof. 
\end{proof}

\subsubsection{Multiple Agents}
\label{sec:multi}

We now turn to the actual problem of maximizing $\h(S) - \sum_{i \in \agents} \sqrt{\sum_{j \in S_i} \q_j}$. Algorithm~\ref{alg:multi} repeatedly calls Algorithm~\ref{alg:single}. In these calls, it sets $\h'$ to be the marginal increases and modifies $\q_1, \ldots, \q_m$ by setting $\tilde{\q}_j = 4 \q_j$.

\begin{algorithm}
\caption{Approximate demand oracle for 
bundle prices for multiple agents}\label{alg:multi}
   \hspace*{\algorithmicindent} \textbf{Input:}  Costs $\q_1,\ldots,\q_m \in \reals_{\geq 0}$, and value oracle access to a submodular function $\h:2^A \rightarrow \reals_{\geq 0}$ \\
    \hspace*{\algorithmicindent} \textbf{Output:}  A set $S$ 
\begin{algorithmic}[1]
\State Let $\tilde{\q}_j = 4 \q_j$ for every $j \in M$
\For {$t = 1, \ldots, n$} 
\For{ $i = 1, \ldots, n$}
\State Let $\h':A_i \rightarrow \reals_{\geq 0}$ be such that $\h'(S) = \h(S^{(t, i)} \mid S^{(1)} \cup \ldots \cup S^{(t-1)})$
\newline  \Comment{Note that a value query to $\h'$ can be computed using two value queries to $\h$}
\newline \Comment{If $\h$ is monotone and submodular then   $\h'$ is monotone, submodular, and normalized }
\State Call Algorithm~\ref{alg:single} on $\tilde{\q}_1,\ldots,\tilde{\q}_m$ and $\h'$ to obtain $S^{(t, i)} \subseteq A_i$
\EndFor
\State Let $S^{(t)} $ be the set that maximizes $  
\h(S^{(t, i)} \mid S^{(1)} \cup \ldots \cup S^{(t-1)}) - \sqrt{\sum_{j \in S^{(t,i)}} \tilde{\q}_j}
$ over all $i$
\EndFor
\State \Return  $S = S^{(1)} \cup \ldots \cup S^{(n)}$.
\end{algorithmic}
\end{algorithm}

\begin{theorem}\label{thm:alg-multi}
For the parameter $\rho$ of Theorem~\ref{thm:alg-single},  
the set $S$ returned by  Algorithm~\ref{alg:multi} fulfills
\[
\h(S) - \sum_{i \in \agents} \sqrt{\sum_{j \in S_i} \q_j} \geq \max_{S^\star} \left( \frac{\rho}{2} \h(S^\star) - \sum_{i \in \agents} \sqrt{\sum_{j \in S^\star_i} \q_j} \right).
\]
\end{theorem}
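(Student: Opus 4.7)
The plan is to transfer the single-agent guarantee of Theorem~\ref{thm:alg-single} to the multi-agent setting via a greedy/telescoping analysis. The ingredients I will use are: (i) the Algorithm~\ref{alg:single} guarantee applied to the marginal function $\h'_{(t)}(T) := \h(T \mid S^{(1)} \cup \cdots \cup S^{(t-1)})$; (ii) the outer selection rule, which picks the best agent at iteration $t$; (iii) two uses of submodularity (one to replace $S^{(1)} \cup \cdots \cup S^{(t-1)}$ by the larger set $S$, and one for subadditivity of marginals over disjoint pieces); and (iv) subadditivity of the square root.

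First, I will fix any benchmark $S^\star$ and apply Theorem~\ref{thm:alg-single} to the inner call at each pair $(t,i)$. Because the costs used in the inner call are $\tilde{\q}_j = 4\q_j$ (so $\sqrt{\tilde{\q}(T)} = 2\sqrt{\q(T)}$) and the conditioned function $\h'_{(t)}$ is monotone submodular, the guarantee becomes
\[
\h'_{(t)}(S^{(t,i)}) - 2\sqrt{\q(S^{(t,i)})} \;\geq\; \rho\, \h'_{(t)}(S^\star_i) - 2\sqrt{\q(S^\star_i)}.
\]
Since $S^{(t)}$ is chosen to maximize the left-hand side over $i$, the same inequality holds with $S^{(t)}$ in place of $S^{(t,i)}$, for every $i \in \agents$.

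Next, I will sum these inequalities across iterations, pairing iteration $t$ with agent $t$ (the identity bijection between iterations and agents). The left-hand side telescopes: $\sum_t \h'_{(t)}(S^{(t)}) = \h(S)$, giving $\h(S) - 2\sum_t \sqrt{\q(S^{(t)})}$. For the right-hand side, since $S^{(1)} \cup \cdots \cup S^{(t-1)} \subseteq S$, submodularity gives $\h'_{(t)}(S^\star_t) \geq \h(S^\star_t \mid S)$; then the disjointness of the $S^\star_t$ together with a second use of submodularity yields $\sum_t \h(S^\star_t \mid S) \geq \h(S \cup S^\star) - \h(S) \geq \h(S^\star) - \h(S)$. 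Rearranging produces
\[
(1+\rho)\, \h(S) - 2\sum_t \sqrt{\q(S^{(t)})} \;\geq\; \rho\, \h(S^\star) - 2 \sum_i \sqrt{\q(S^\star_i)}.
\]

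To close, I will divide by two, absorb the factor $(1+\rho)/2 \leq 1$ on the left using $\h(S) \geq 0$, and then invoke subadditivity of $\sqrt{\cdot}$ in the form $\sum_i \sqrt{\q(S_i)} \leq \sum_t \sqrt{\q(S^{(t)})}$ (valid because each agent's contribution decomposes disjointly across the iterations in which that agent was selected) to pass from per-iteration to per-agent square roots, obtaining the claimed bound. The main obstacle I anticipate is orchestrating the summation step: the telescoping identity on the left must line up with two separate uses of submodularity on the right (one to replace partial unions by $S$, and one to sum marginals over disjoint parts of $S^\star$). The inner cost-doubling $\tilde{\q} = 4\q$ and the target ratio $\rho/2$ (rather than $\rho$) are precisely what give the algebra the slack needed.
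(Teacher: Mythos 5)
Your proposal is correct and takes essentially the same route as the paper's proof: apply Theorem~\ref{thm:alg-single} once per outer iteration (the doubled costs $\tilde{\q}=4\q$ give the factor $2$ on the square roots), telescope on the left, use submodularity and disjointness of the $S^\star_i$ to lower-bound the accumulated benchmark marginals by $\rho(\h(S^\star)-\h(S))$, and absorb $(1+\rho)\h(S)\le 2\h(S)$ to land on the $\rho/2$ guarantee. The only cosmetic difference is that you pair iteration $t$ directly with agent $t$'s benchmark $S^\star_t$, whereas the paper passes through $\max_i(\cdot)\ge\frac{1}{n}\sum_i(\cdot)$ and lets the $1/n$ cancel against the $n$ iterations; both yield the identical intermediate inequality.
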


\begin{proof}
Let $(S_1^\star, \ldots, S_n^\star)$ be any sets with $S_i^\star \subseteq A_i$.

Write $\tilde{p}(X) = \sum_{i \in \agents} \sqrt{\sum_{j \in X_i} \tilde{\q}_j}$. Note that by this definition $\tilde{p}(S^\star) = \sum_{i \in \agents}\tilde{p}(S_i^\star)$ because $(S_1^\star, \ldots, S_n^\star)$ are disjoint and each belongs to a different $A_i$. Furthermore $\tilde{p}(S) \leq \sum_{t=1}^{n} \tilde{p}(S^{(t)})$ because $\tilde{p}$ is subadditive.

For every $t = 1,\ldots,n$, by Theorem~\ref{thm:alg-single}
 we have
\begin{align}
\h(S^{(t)} \mid S^{(1)} \cup \ldots \cup S^{(t-1)}) - \tilde{p}(S^{(t)}) & \geq \max_{i \in \agents} \left( \rho \h(S_i^\star \mid S^{(1)} \cup \ldots \cup S^{(t-1)}) - \tilde{p}(S_i^\star)  \right) \nonumber \\
& \geq \frac{1}{n} \sum_{i \in \agents} \left(\rho \h(S_i^\star \mid S^{(1)} \cup \ldots \cup S^{(t-1)}) - \tilde{p}(S_i^\star) \right) \nonumber\\
& \geq \frac{1}{n} \sum_{i \in \agents} \left( \rho \h(S_i^\star \mid S^{(1)} \cup \ldots \cup S^{(t-1)} \cup S_1^\star \cup \ldots \cup S_{i-1}^\star) - \tilde{p}(S_i^\star) \right) \nonumber\\
& = \frac{1}{n} \left(\rho \h(S_1^\star \cup \ldots \cup S_n^\star \mid S^{(1)} \cup \ldots \cup S^{(t-1)}) - \tilde{p}(S^\star) \right) \nonumber\\
& \geq \frac{1}{n} \left( \rho \left( \h(S_1^\star \cup \ldots \cup S_n^\star) - \h(S^{(1)} \cup \ldots \cup S^{(t-1)}) \right) - \tilde{p}(S^\star) \right)\nonumber\\
& \geq \frac{1}{n} \left(\rho \left( \h(S^\star) - \h(S) \right) - \tilde{p}(S^\star) \right) \label{eq:pfs}
\end{align}
Thus,
\begin{align*}
\h(S) - \tilde{p}(S) & \geq \h(S) - \sum_{t = 1}^{n} \tilde{p}(S^{(t)})\\
&= \sum_{t=1}^{n} \left( \h(S^{(t)} \mid S^{(1)} \cup \ldots \cup S^{(t-1)}) - \tilde{p}(S^{(t)}) \right) \\
& \stackrel{\eqref{eq:pfs}}{\geq} \sum_{t=1}^{n} \frac{1}{n} \left(\rho \left( \h(S^\star) - \h(S) \right) - \tilde{p}(S^\star) \right)\\
& = \rho \h(S^\star) - \rho \h(S) - \tilde{p}(S^\star)
\end{align*}
Therefore. since $\rho\leq 1$, we get that $$2 \h(S) - \tilde{p}(S) \geq 
(1+\rho) \h(S) - \tilde{p}(S)
\geq \rho \h(S^\star) - \tilde{p}(S^\star)$$ and thus $\h(S) - \frac{1}{2} \tilde{p}(S) \geq \frac{\rho}{2} \h(S^\star) - \frac{1}{2} \tilde{p}(S^\star)$,
which by plugging the definitions of $\tilde{p},\tilde{q}$ concludes the proof.
\end{proof}

\subsection{Approximation Algorithm Under No Large Agent}\label{sec:combines-no-large}

We are now ready to present our algorithm (Algorithm~\ref{alg:cap}) that finds an approximate optimal contract among all contracts under the assumption of no large agents.

\begin{algorithm}
\caption{Approximate optimal contract under no large agent}\label{alg:cap}
   \hspace*{\algorithmicindent} \textbf{Parameters:}  $\rho \in [0,1]$, $\epsilon \in (0,\frac{1}{4n})$ \\
   \hspace*{\algorithmicindent} \textbf{Input:}  Costs $c_1,\ldots,c_m \in \reals_{\geq 0}$, and value oracle access to a submodular function $f:2^A \rightarrow \reals_{\geq 0}$  \\
    \hspace*{\algorithmicindent} \textbf{Output:} A contract $\contract$ with a guarantee of principal's utility of $\Lambda$ for any equilibrium of $\contract$
\begin{algorithmic}[1]
\For{$j^\star \in \actions$ with $f(j^\star)>0$ and $t = 0, \ldots, \lceil \log m \rceil$}

\State Let $\xi = \frac{\rho^2}{512} \cdot f(j^\star)\cdot 2^t$ and  $\gamma = \sqrt{32\xi}$ \label{step:gamma}
\State Let $\hat{f}:2^A\rightarrow \reals_{\geq 0}$ be such that $\hat{f}(S) = \min\{f(S), \xi\}$
\State Let $p:2^A\rightarrow \reals_{\geq 0}$ be defined by $p(S) = \gamma \sum_{i \in \agents} \sqrt{\sum_{j \in S_i} c_j}$
\State  Apply Algorithm~\ref{alg:multi} on $\h = \hat{f}$ and $\q_j = \gamma^2 c_j$ to maximize $\hat{f}(S) - p(S)$, let $S^\gamma$ be its output  \label{step:subroutine}
\While{ exist $j\in S$ such that  $\hat{f}(j \mid S^\gamma \setminus \{ j \}) < p(S^\gamma) - p(S^\gamma \setminus \{ j \})$ \label{step:remove}}  
\State $S^\gamma =S^\gamma\setminus \{j\}$ 
\EndWhile\label{step:while2}
\State Let $\alpha^\gamma_i = \frac{4}{\gamma} \sqrt{c(\hat{S}_i)}$
\EndFor
\State Let $\gamma^\star = \arg\max_\gamma f(S^\gamma)$ \label{step:gamma2}
\State\Return{$\contract = 2\contract^{\gamma^\star} +\vec{\epsilon} $,  $ \Lambda = \frac{f(S^{\gamma^\star})}{8}$}

\end{algorithmic}
\end{algorithm}

Recall that in this section $\contract^\star,S^\star$ is the best contract and the corresponding best equilibrium for which $f(S^\star_i) \leq \phi f(S^\star)$ for all $i\in \agents$. We assume that $f(S^\star)>0$ (as otherwise the instance is redundant and our algorithm always returns a contract with non-negative utility). Furthermore, $\sum_{i \in \agents} \alpha^\star_i \leq 1$ because otherwise the principal utility would be negative.

Algorithm~\ref{alg:cap}
first guesses an estimate of $f(S^\star)$, which is captured by $\xi$. 
Given this estimate, the algorithm defines a new instance with a reward function that is capped at $\xi$, 
and applies
Algorithm~\ref{alg:multi} for this modified instance which returns a set $S$. From this returned set, the algorithm removes actions with negative marginal contribution with respect to the bundle prices defined by $p(\cdot)$.  Among all such created sets, the algorithm returns the one with the maximum value of $f(\cdot)$ (over all choices of $\xi$). 
 
 We analyze the guarantees of Algorithm~\ref{alg:cap} with parameter $\rho$ set as in Theorem~\ref{thm:alg-single}. 
We first claim that the algorithm is well defined. For this we note that since $f(S^\star) > 0$, there exists $j^\star\in\actions$ with $f(j^\star)>0$, which means that $\gamma^\star$ is well defined (as it is an argmax of a non-empty set). We also note that a value query to $\hat{f}$ can be computed using a value query to $f$ in order to call Algorithm~\ref{alg:multi} in Step~\ref{step:subroutine}, and that the function $\hat{f}$ is monotone, submodular,  and normalized since $f$ is monotone, submodular, and normalized, which satisfies 
the requirements of Algorithm~\ref{alg:multi}.
 
We now observe that there exists an iteration for which $\xi$ is close to $ \frac{\rho^2}{512} f(S^\star)$ 
\begin{observation}  \label{obs:xi}   
There exist $j^\star\in \actions$, and $t\in \left\{ 0, \ldots,  \lceil \log m \rceil \right\}$  such that $ \frac{\rho^2}{1024} f(S^\star)\leq \xi \leq \frac{\rho^2}{512} f(S^\star)$.
\end{observation}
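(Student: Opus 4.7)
The plan is to take $j^\star$ to be the singleton action of maximum reward and then pick $t$ by a dyadic/doubling argument on the scale $f(j^\star)\cdot 2^t$. Let $j^\star \in \arg\max_{j \in \actions} f(j)$. Since $f$ is normalized, monotone and submodular, it is in particular subadditive, so $0 < f(S^\star) \leq \sum_{j \in S^\star} f(j) \leq |S^\star| \cdot f(j^\star) \leq m \cdot f(j^\star)$, which in particular shows $f(j^\star) > 0$ and hence $j^\star$ is indeed one of the actions iterated over by the outer loop. Monotonicity also gives $f(j^\star) \leq f(S^\star)$, so in summary
\[
\frac{f(S^\star)}{m} \;\leq\; f(j^\star) \;\leq\; f(S^\star).
\]

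Next, I would examine the sequence $f(j^\star)\cdot 2^t$ as $t$ ranges over $\{0,1,\ldots,\lceil \log m \rceil\}$. At $t=0$ it equals $f(j^\star)\leq f(S^\star)$, while at $t=\lceil \log m\rceil$ it is at least $m\cdot f(j^\star) \geq f(S^\star)$; since the sequence doubles each step, there exists a smallest $t^\star$ in this range for which $f(j^\star)\cdot 2^{t^\star} \geq \tfrac{1}{2} f(S^\star)$. If $t^\star = 0$, then $f(j^\star)\cdot 2^{t^\star} = f(j^\star) \leq f(S^\star)$; if $t^\star \geq 1$, then minimality of $t^\star$ gives $f(j^\star)\cdot 2^{t^\star-1} < \tfrac{1}{2} f(S^\star)$ and hence $f(j^\star)\cdot 2^{t^\star} < f(S^\star)$. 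Either way, $\tfrac{1}{2} f(S^\star) \leq f(j^\star)\cdot 2^{t^\star} \leq f(S^\star)$.

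Substituting into the definition $\xi = \tfrac{\rho^2}{512}\cdot f(j^\star)\cdot 2^{t^\star}$ from Step~\ref{step:gamma} of Algorithm~\ref{alg:cap} gives exactly $\tfrac{\rho^2}{1024}f(S^\star) \leq \xi \leq \tfrac{\rho^2}{512}f(S^\star)$, as claimed. There is no real obstacle here: this is essentially a bucketing/dyadic-search statement, and the only ingredient beyond the loop structure is that a normalized monotone submodular function is subadditive, which is a standard fact.
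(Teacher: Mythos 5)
There is a genuine gap: you define $j^\star \in \arg\max_{j \in \actions} f(j)$, the global singleton maximizer, and then assert that "Monotonicity also gives $f(j^\star) \leq f(S^\star)$." That inference only holds when $\{j^\star\} \subseteq S^\star$, which your choice of $j^\star$ does not guarantee --- there could be an action outside $S^\star$ whose singleton value strictly exceeds $f(S^\star)$, in which case the lower endpoint of your dyadic range (at $t=0$) already exceeds the target and the bucketing argument never reaches the interval $[\tfrac{\rho^2}{1024} f(S^\star), \tfrac{\rho^2}{512} f(S^\star)]$. Concretely: if $S^\star = \{1\}$ with $f(1) = 0.1$ and there is an action $2 \notin S^\star$ with $f(2) = 0.5$, then $j^\star = 2$ and $f(j^\star) > f(S^\star)$, so $\xi$ at $t=0$ is already too large.

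The fix is exactly the choice the paper makes: take $j^\star \in \arg\max_{j \in S^\star} f(j)$. Then $j^\star \in S^\star$, so monotonicity genuinely gives $f(j^\star) \leq f(S^\star)$, and your subadditivity bound $f(S^\star) \leq \sum_{j \in S^\star} f(j) \leq |S^\star| \cdot f(j^\star) \leq m \cdot f(j^\star)$ still holds (now because $j^\star$ maximizes $f(j)$ over $j \in S^\star$, which is all that's needed). Since the outer loop of Algorithm~\ref{alg:cap} iterates over all $j^\star \in \actions$ with $f(j^\star) > 0$, this restricted choice is still one of the considered iterations. With that correction, your dyadic/doubling argument for selecting $t^\star$ is fine and is essentially the same as the paper's proof.
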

\begin{proof}
    Since $f(S^\star) > 0 $ then $S^\star \neq \emptyset$ and for the iteration with $j^\star \in \arg\max_{j\in S^\star} f(j)$ it holds for $t=0$ then $\xi = \frac{\rho^2}{512} f(j^\star)  \leq \frac{\rho^2}{512} f(S^\star) $ and for $t=\lceil \log m \rceil $ it holds that  $\xi =\frac{\rho^2}{512}  f(j^\star) \cdot 2^{\lceil \log m \rceil}\geq \frac{\rho^2}{512}  f(j^\star) \cdot  m  \geq \frac{\rho^2}{512}  f(S^\star)    $. Thus there must be a value of $t$ is each time multiplied by a factor of $2$, $\xi$ must be in the range for one iteration. 
\end{proof}
  Let $\gamma^\star$ be its corresponding value calculated in Step~\ref{step:gamma2}, let $\hat{S} = S^{\gamma^\star}$, let $\xi^\star = \frac{\gamma^{\star2}}{32}$ be its corresponding value of $\xi$,  and let $\contract = (\alpha_1,\ldots,\alpha_n) = (\alpha^{\gamma^\star}_1,\ldots,\alpha^{\gamma^\star}_n)$ be their corresponding values at the termination of Algorithm~\ref{alg:cap}. 
We first bound the sum of $\alpha$s.
\begin{lemma} It holds that
   $\sum_{i \in \agents} \alpha_i \leq \frac{1}{4}$. \label{lem:sum-alpha}
\end{lemma}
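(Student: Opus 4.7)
The plan is to reduce the bound $\sum_i \alpha_i \leq \tfrac{1}{4}$ to bounding the bundle price $p(\hat{S})$ against the cap $\xi^\star$, and then invoke the approximation guarantee of Algorithm~\ref{alg:multi} together with the fact that $\hat{f}$ is capped by $\xi^\star$.

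The first step is purely algebraic. Substituting $\alpha_i = \tfrac{4}{\gamma^\star}\sqrt{c(\hat{S}_i)}$ and using the definition $p(\hat{S}) = \gamma^\star \sum_{i \in \agents}\sqrt{c(\hat{S}_i)}$ gives
\[
\sum_{i \in \agents} \alpha_i \;=\; \frac{4}{\gamma^\star}\sum_{i \in \agents} \sqrt{c(\hat{S}_i)} \;=\; \frac{4\,p(\hat{S})}{\gamma^{\star 2}} \;=\; \frac{p(\hat{S})}{8\xi^\star},
\]
where the last equality uses $\gamma^{\star 2} = 32\xi^\star$ from Step~\ref{step:gamma}. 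Hence it suffices to prove $p(\hat{S}) \leq 2\xi^\star$.

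The second step establishes the invariant $\hat{f}(S) \geq p(S)$ as $S$ evolves through the algorithm. Applying Theorem~\ref{thm:alg-multi} to the call of Algorithm~\ref{alg:multi} in Step~\ref{step:subroutine} with $\h = \hat{f}$ and $\q_j = \gamma^{\star 2}c_j$, and comparing against $S^\star = \emptyset$ on the right-hand side, immediately yields $\hat{f}(S^{\gamma^\star}) \geq p(S^{\gamma^\star})$ for the set returned by the subroutine. I then argue by induction over the while-loop iterations in Steps~\ref{step:remove}--\ref{step:while2} that removing any element $j$ satisfying the loop condition preserves (in fact strengthens) this invariant: rewriting
\[
\hat{f}(S^\gamma \setminus \{j\}) - p(S^\gamma \setminus \{j\}) \;=\; \bigl(\hat{f}(S^\gamma) - p(S^\gamma)\bigr) + \bigl[\bigl(p(S^\gamma) - p(S^\gamma \setminus \{j\})\bigr) - \hat{f}(j \mid S^\gamma \setminus \{j\})\bigr],
\]
the bracketed term is strictly positive precisely by the loop condition in Step~\ref{step:remove}. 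So at termination $\hat{f}(\hat{S}) \geq p(\hat{S})$.

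The third step combines the two: because $\hat{f} = \min\{f,\xi^\star\}$ is capped by $\xi^\star$, we get $p(\hat{S}) \leq \hat{f}(\hat{S}) \leq \xi^\star$, and therefore $\sum_{i \in \agents} \alpha_i \leq \tfrac{1}{8} \leq \tfrac{1}{4}$. The main obstacle is the telescoping argument in the second step: one has to verify that the while-loop condition is exactly the right pivot to preserve the $\hat{f} \geq p$ invariant across pruning steps. A minor notational subtlety is keeping straight that the $\hat{S}_i$ appearing in the definition of $\alpha_i$ refers to the post-pruning set, not the output of Algorithm~\ref{alg:multi}; once these are sorted out, everything reduces to the choice $\gamma^{\star 2} = 32\xi^\star$ and the $\xi^\star$-cap.
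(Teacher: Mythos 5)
Your proof is correct, and it takes a genuinely different route from the paper's. The paper argues agent-by-agent: for each $i$, it combines submodularity with the while-loop \emph{termination} condition to show $\hat{f}(\hat{S}_i \mid \hat{S}_{-i}) \geq \frac{\gamma^\star}{2}\sqrt{c(\hat{S}_i)}$, then sums over $i$ (using $\sum_i \hat{f}(\hat{S}_i\mid\hat{S}_{-i}) \leq \hat{f}(\hat{S})$, again by submodularity) and applies the cap $\hat{f}(\hat{S})\leq\xi^\star$. You instead argue globally by tracking the invariant $\hat{f}(S)\geq p(S)$: it holds for the output of Algorithm~\ref{alg:multi} by Theorem~\ref{thm:alg-multi} with $S^\star=\emptyset$, and each pruning step in the while loop only strengthens it, since the loop's \emph{removal} condition is precisely the statement that dropping $j$ increases $\hat{f}-p$. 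Both arguments use the while loop, but where the paper leverages the condition that holds of every surviving element at termination, you leverage the condition that triggers each removal. Your route is slightly cleaner conceptually and even yields the stronger bound $\sum_i\alpha_i \leq \frac{1}{8}$, at the cost of invoking the guarantee of Theorem~\ref{thm:alg-multi} (the paper's proof only needs submodularity of $\hat{f}$ and the final state of the loop, not the provenance of the set fed into it). One small remark: you do not actually need the full strength of Theorem~\ref{thm:alg-multi} to initialize the invariant — it suffices to observe that Algorithm~\ref{alg:multi} builds its output by unioning sets $S^{(t)}$, each of which has nonnegative marginal gain against $\tilde{p}$ — but the appeal to the theorem with $S^\star=\emptyset$ is equally valid.
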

\begin{proof}
 
For every $i\in\agents$ it holds that
\begin{eqnarray}
\hat{f}(\hat{S}_i \mid \hat{S}_{-i}) &\geq & \sum_{j \in \hat{S}_i} \hat{f}(j \mid \hat{S} \setminus \{j\}) \geq \sum_{j \in \hat{S}_i} \gamma^{\star} (\sqrt{c(\hat{S}_i)} - \sqrt{c(\hat{S}_i \setminus \{j\})}) \nonumber \\ &= & \gamma^{\star} \sum_{j \in \hat{S}_i} \frac{c(\hat{S}_i) - c(\hat{S}_i \setminus \{j\})}{\sqrt{c(\hat{S}_i)} + \sqrt{c(\hat{S}_i \setminus \{j\})}} 
\geq \gamma^{\star}\sum_{j \in \hat{S}_i}  \frac{c(j)}{2 \sqrt{c(\hat{S}_i)}} = \frac{\gamma^{\star}}{2} \sqrt{c(\hat{S}_i)}, \label{eq:fs}
\end{eqnarray}
where the first inequality is by submodularity, the second inequality by Step~\ref{step:remove} of the algorithm, and the third inequality is by additivity and monotonicity of the function $c$.

Thus, 
$$\sum_i \alpha_i  = \sum_i  \frac{4}{\gamma^{\star}} \sqrt{c(\hat{S}_i)} \leq  \sum_i  \frac{4}{\gamma^{\star}} \cdot \frac{2}{\gamma^{\star}} \hat{f}(\hat{S}_i \mid \hat{S}_{-i}) \leq \frac{8}{\gamma^{\star2}}\hat{f}(\hat{S}) \leq \frac{8}{\gamma^{\star2}}\xi^\star  =   \frac{1}{4},$$
where the first inequality is by Inequality~\eqref{eq:fs}, the second inequality is by submodularity, and the third inequality is by the definition of $\hat{f}$.
\end{proof}  
We next show that $\hat{S}$ is subset stable.
\begin{lemma} \label{lem:s-stable}
The set    $\hat{S}$ is subset stable with respect to contract $\contract$.
\end{lemma}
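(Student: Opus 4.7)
The plan is to apply Lemma~\ref{lemma:combined} to the capped reward function $\hat{f}$, and then lift the resulting subset-stability guarantee from $\hat{f}$ to the original reward function $f$.

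First I would verify that $\hat{f}(S) = \min\{f(S), \xi^\star\}$ is monotone, normalized, and submodular whenever $f$ is (capping a monotone submodular function at a constant preserves submodularity via a short case analysis on which of $f(S),f(T),f(S\cup\{j\}),f(T\cup\{j\})$ cross the threshold $\xi^\star$). This ensures the hypotheses of Lemma~\ref{lemma:combined} are satisfied for $\hat{f}$.

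Next, I would translate the exit condition of the while loop (Steps~\ref{step:remove}--\ref{step:while2}) into the form required by Lemma~\ref{lemma:combined}. Since only the $i$-th summand of $p(\cdot)$ changes when an action $j \in \hat{S}_i$ is dropped, the termination condition reads, for every agent $i$ and every $j \in \hat{S}_i$,
\[
\hat{f}(j \mid \hat{S} \setminus \{j\}) \;\geq\; \gamma^\star\bigl(\sqrt{c(\hat{S}_i)} - \sqrt{c(\hat{S}_i \setminus \{j\})}\bigr).
\]
This is precisely the hypothesis of Lemma~\ref{lemma:combined} applied to $\hat{f}$ with $\gamma = \gamma^\star$, and its conclusion is exactly that $\hat{S}$ is subset stable with respect to $\hat{f}$ under the contract with $\alpha_i = \tfrac{4}{\gamma^\star}\sqrt{c(\hat{S}_i)}$, which is the contract $\contract$ defined in the algorithm.

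Finally, I would lift subset stability from $\hat{f}$ to $f$. For any agent $i$ and any $S'_i \subseteq \hat{S}_i$, a short case analysis on whether $f(\hat{S})$ and $f(S'_i,\hat{S}_{-i})$ lie below or above $\xi^\star$, combined with monotonicity of $f$ and the fact that $S'_i \subseteq \hat{S}_i$, yields the pointwise inequality
\[
f(\hat{S}_i,\hat{S}_{-i}) - f(S'_i,\hat{S}_{-i}) \;\geq\; \hat{f}(\hat{S}_i,\hat{S}_{-i}) - \hat{f}(S'_i,\hat{S}_{-i}).
\]
Multiplying by $\alpha_i \geq 0$ and combining with the $\hat{f}$-subset-stability inequality rearranges to the definition of subset stability of $\hat{S}$ under $f$ and $\contract$. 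I expect this lifting step to be the main (though still modest) obstacle: the rest is essentially reading off the algorithm's output and invoking Lemma~\ref{lemma:combined}, but the lifting depends crucially on the choice to cap $f$ rather than truncate in a different way, since capping can only shrink marginal differences and hence the incentive to drop actions under $f$ is no stronger than under $\hat{f}$.
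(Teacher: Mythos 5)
Your proof is correct and rests on the same two ingredients as the paper's --- the while-loop termination condition and the observation that capping a function at a constant can only shrink marginal differences --- but applies them in the opposite order. The paper chains these two facts up front to verify the hypothesis of Lemma~\ref{lemma:combined} for the \emph{original} $f$, namely
\[
f(j\mid\hat{S}\setminus\{j\}) \;\geq\; \hat{f}(j\mid\hat{S}\setminus\{j\}) \;\geq\; \gamma^\star\bigl(\sqrt{c(\hat{S}_i)} - \sqrt{c(\hat{S}_i\setminus\{j\})}\bigr),
\]
so that a single invocation of that lemma with $\gamma=\gamma^\star$ gives subset stability of $\hat{S}$ under $\contract$ directly. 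You instead invoke Lemma~\ref{lemma:combined} on $\hat{f}$ (obtaining subset stability with respect to the capped reward) and then \emph{lift} the conclusion to $f$, which requires the extra case analysis that $f$-differences dominate $\hat{f}$-differences on nested sets. Both routes are sound --- your lifting inequality is a true consequence of monotonicity and the threshold structure of $\hat{f}$ --- but the paper's ordering is a hair shorter: it only needs the ``capping shrinks differences'' fact for single-element drops, which is exactly what the while-loop guard already controls, and it sidesteps the separate lifting step entirely. Your version is slightly more modular (it uses $\hat{f}$ as a black box throughout and only translates back to $f$ at the end), at the cost of that additional case analysis.
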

\begin{proof}

 For all agents $i\in \agents$ and all actions $j \in A_i$ it holds that
\begin{equation}    
f(j \mid \hat{S} \setminus \{j\}) \geq \hat{f}(j \mid \hat{S} \setminus \{j\}) \geq p(\hat{S}_i) - p(\hat{S}_i \setminus \{j\}) =\gamma^\star \cdot \left(\sqrt{c(\hat{S}_i)} - \sqrt{c(\hat{S}_i \setminus \{j\})}\right) ,
\label{eq:alpha-stable}\end{equation}

where the first inequality is since capping a function by a constant can only decrease the marginal utility, and the second inequality is ensured by Step~\ref{step:remove}.
Thus, the lemma follows by Lemma~\ref{lemma:combined}, Inequality~\eqref{eq:alpha-stable} and the definition of $\alpha_i$.
\end{proof}
We next bound the value of $f(\hat{S})$.
\begin{lemma} If $S^\star$ is an equilibrium for $\contract^\star$ fulfilling $\sum_{i \in \agents} \alpha_i^\star \leq 1$ and $f(S^\star_i) \leq \phi f(S^\star)$ for all $i\in \agents$, then  
   $f(\hat{S}) \geq \left(\frac{\rho^3 }{6144} -\frac{\rho \cdot \phi}{2}\right) f(S^\star)$. \label{lem:s-value}
\end{lemma}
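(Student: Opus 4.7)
My plan is to apply the approximation guarantee of Theorem~\ref{thm:alg-multi} to the iteration of Algorithm~\ref{alg:cap} identified by Observation~\ref{obs:xi}, then case split on whether $f(\hat{S})$ reaches the cap $\xi^\star$, and finally bound $p(S^\star)$ using the equilibrium conditions together with the no-large-agent assumption.

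First I would fix notation: at the chosen iteration, $\rho^2 f(S^\star)/1024 \leq \xi^\star \leq \rho^2 f(S^\star)/512$, so $\gamma^\star = \sqrt{32\xi^\star} \leq \rho\sqrt{f(S^\star)}/4$; and since $\xi^\star < f(S^\star)$, one has $\hat{f}(S^\star) = \xi^\star$. Applying Theorem~\ref{thm:alg-multi} with $S^\dagger = S^\star$ to the subroutine call in Step~\ref{step:subroutine} yields $\hat{f}(S^{\gamma^\star}) - p(S^{\gamma^\star}) \geq (\rho/2)\xi^\star - p(S^\star)$. The pruning loop in Step~\ref{step:remove} removes an action $j$ only when its marginal $\hat{f}$-value is strictly less than its marginal $p$-cost, so each removal strictly increases the quantity $\hat{f}(\cdot)-p(\cdot)$; hence the same bound persists for $\hat{S}$: $\hat{f}(\hat{S}) - p(\hat{S}) \geq (\rho/2)\xi^\star - p(S^\star)$. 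Dropping the non-negative term $p(\hat{S})$ gives $\hat{f}(\hat{S}) \geq (\rho/2)\xi^\star - p(S^\star)$.

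Next I would split into two cases. If $f(\hat{S}) \geq \xi^\star$, then $f(\hat{S}) \geq \rho^2 f(S^\star)/1024 \geq \rho^3 f(S^\star)/6144 \geq (\rho^3/6144 - \rho\phi/2)f(S^\star)$, and we are done. Otherwise, $f(\hat{S}) < \xi^\star$ implies $\hat{f}(\hat{S}) = f(\hat{S})$, so $f(\hat{S}) \geq (\rho/2)\xi^\star - p(S^\star) \geq \rho^3 f(S^\star)/2048 - p(S^\star)$, and reducing to the claim amounts to showing $p(S^\star) \leq \rho^3 f(S^\star)/3072 + \rho\phi f(S^\star)/2$.

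The main technical obstacle will be exactly this bound on $p(S^\star)$. The equilibrium condition for agent $i$ deviating to $\emptyset$ gives $c(S^\star_i) \leq \alpha_i^\star g_i$ with $g_i := f(S^\star_i \mid S^\star_{-i})$; submodularity then supplies both $g_i \leq f(S^\star_i) \leq \phi f(S^\star)$ (using no-large-agent) and $\sum_i g_i \leq f(S^\star)$ (via the telescoping decomposition $f(S^\star)=\sum_i f(S^\star_i \mid S^\star_1 \cup \cdots \cup S^\star_{i-1})$ together with pointwise submodularity). Combined with $\sum_i \alpha_i^\star \leq 1$ and $\gamma^\star \leq \rho\sqrt{f(S^\star)}/4$, my approach is to split $\sum_i \sqrt{c(S^\star_i)} \leq \sum_i \sqrt{\alpha_i^\star g_i}$ according to whether $\alpha_i^\star$ exceeds a threshold calibrated to $\phi$, then apply AM-GM separately in each regime, exploiting $g_i \leq \phi f(S^\star)$ for the heavy-$\alpha$ agents (which are few, since $\sum \alpha_i^\star \leq 1$) and $\sum_i g_i \leq f(S^\star)$ for the light-$\alpha$ ones. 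Substituting the resulting bound into the Case~2 inequality yields $f(\hat{S}) \geq \rho^3 f(S^\star)/6144 - \rho\phi f(S^\star)/2$, as claimed.
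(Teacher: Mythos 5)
Your overall plan follows the paper's framework up to the invocation of Theorem~\ref{thm:alg-multi}, but the crucial step where you plug in $S^\star$ as the competitor set is where the argument breaks. Theorem~\ref{thm:alg-multi} gives $\hat{f}(S^{\gamma})-p(S^{\gamma}) \geq \frac{\rho}{2}\hat{f}(S^\star) - p(S^\star)$, and since $\hat{f}$ caps at $\xi^{\star}\approx\frac{\rho^2}{512}f(S^\star)$, the right-hand side is at most $\frac{\rho}{2}\xi^\star - p(S^\star)$. For this to be useful you need $p(S^\star)\lesssim\frac{\rho}{2}\xi^\star = O(\rho^3 f(S^\star))$, and your plan indeed reduces to showing $p(S^\star) \leq \frac{\rho^3}{3072}f(S^\star) + \frac{\rho\phi}{2}f(S^\star)$. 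This bound is false in general. The Cauchy--Schwarz argument (which the paper also uses) gives only $\sum_i\sqrt{c(S^\star_i)} \leq \sqrt{f(S^\star)}$, and with $\gamma^\star \approx \frac{\rho}{4}\sqrt{f(S^\star)}$ one gets $p(S^\star)$ as large as $\frac{\rho}{4}f(S^\star)$. A concrete witness: $n$ binary-action agents, additive $f$ with $v_i=1/n$, costs $c_i=1/n^2$, $\alpha_i^\star=1/n$. Then $\phi=1/n$ can be arbitrarily small, $\sum_i\sqrt{c_i}=1=\sqrt{f(S^\star)}$ is tight, and $p(S^\star)\approx\frac{\rho}{4}f(S^\star)$, exceeding your target by a factor of tens of thousands. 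Your proposed threshold-split with AM--GM cannot repair this: in this example every agent is ``light'' with identical tiny $\alpha_i^\star$ and $g_i$, so no split separates anything, and any AM--GM weighting of $\sum_i\sqrt{\alpha_i^\star g_i}$ recovers at best $\sqrt{(\sum_i\alpha_i^\star)(\sum_i g_i)} \leq \sqrt{f(S^\star)}$. The underlying issue is that $S^\star$ is simply a bad competitor for the \emph{capped} objective $\hat{f}-p$: $\hat{f}$ flattens its value to $\xi^\star$, but $p(S^\star)$ still scales with the full ``size'' of $S^\star$.

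The paper sidesteps this by never comparing against $S^\star$ directly. Instead it partitions the agents into bundles $B_1,\ldots,B_\ell$ so that each bundle's actions have $f$-value roughly $\xi^{\star\star}$ (this uses the no-large-agent hypothesis to guarantee the greedy bundling is well defined, and subadditivity to get $\ell \gtrsim f(S^\star)/\xi^{\star\star}$), picks the cheapest bundle $B_k$ by pigeonhole on $\sum_i\sqrt{c(S_i^\star)}\leq\sqrt{f(S^\star)}$, and uses $S^{\star\star}=\bigcup_{i\in B_k}S_i^\star$ as the competitor. This $S^{\star\star}$ has $\hat{f}(S^{\star\star})=f(S^{\star\star})\approx\xi^{\star\star}$ (so nothing is lost to the cap) and $p(S^{\star\star})\leq\frac{\rho\xi^{\star\star}}{3}$ (so the competitor's net value is genuinely positive). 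Constructing this cheap, cap-sized competitor subset is the key idea your proof is missing; once you have it, the rest of your outline (Observation~\ref{obs:xi}, the pruning loop preserving $\hat{f}-p$, the $f(\hat{S})\geq\hat{f}(\hat{S})$ chain) goes through as in the paper.
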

\begin{proof}
If $\phi \geq \frac{\rho^2}{3072}$ or $f(S^\star)=0$ then the claim holds trivially since $\left(\frac{\rho^3 }{6144} -\frac{\rho \cdot \phi}{2}\right) f(S^\star) \leq 0$.
Since $S^\star$ is an equilibrium, no agent $i$ can increase his utility by dropping all their actions. That is $\alpha_i^\star f(S^\star) - c(S^\star_i) \geq \alpha_i^\star f(S^\star \setminus S^\star_i)$. So, it holds that $\alpha_i^\star \geq \frac{c(S^\star_i)}{f(S^\star_i \mid S^\star \setminus S^\star_i)}$. Furthermore, $\sum_{i \in \agents} \alpha^\star_i \leq 1$.
Using the Cauchy--Schwarz inequality $\left(\sum_i x_i y_i\right)^2 \leq \sum_i x_i^2 \sum_i y_i^2$ with $x_i = \sqrt{\frac{c(S^\star_i)}{f(S^\star_i \mid S^\star \setminus S^\star_i)}}$ and $y_i = \sqrt{f(S^\star_i \mid S^\star \setminus S^\star_i)}$, we get
\[
\left(\sum_i \sqrt{c(S^\star_i)}\right)^2 \leq \sum_i \frac{c(S^\star_i)}{f(S^\star_i \mid S^\star \setminus S^\star_i)} \sum_i f(S^\star_i \mid S^\star \setminus S^\star_i) \leq \sum_i \alpha_i^\star f(S^\star) \leq f(S^\star),
\]
where the second inequality is by submodularity of $f$.
Therefore $\sum_{i \in \agents} \sqrt{c( S^\star_i)} \leq \sqrt{f(S^{\star})}$.

By Observation~\ref{obs:xi} we know that there is iterations of $j^\star,t$ of Algorithm~\ref{alg:cap} for which their corresponding $\xi \in [\frac{\rho^2}{1024} f(S^\star) , \frac{\rho^2}{512} f(S^\star)]$.
We denote this value of $\xi$ by $\xi^{\star\star}$, and let $\gamma^{\star\star}$ be its corresponding value of $\gamma$ (i.e., $\gamma^{\star\star}= \sqrt{32\xi^{\star\star}}$). 
For $\gamma^{\star\star}$, let $S^1,S^2$ be the values of $S^{\gamma^{\star\star}}$ before and after Steps~\ref{step:remove}-\ref{step:while2} respectively.

Let $B_1, \ldots, B_\ell \subseteq \agents$ be an arbitrary partition of the set of agents $\agents$ such that $\xi^{\star\star} - \max_{i \in \agents} f(S^\star_i) \leq f(\bigcup_{i \in B_k} S^\star_i) \leq \xi^{\star\star}$ for all $k<\ell$, and  $f(\bigcup_{i \in B_k} S^\star_i) \leq \xi^{\star\star}$ for $k=\ell$. 
Such a partition exists and can be created by bundling $S_i^\star$ greedily until each bundle has a value in the desired range. This greedy process is well defined since $\xi^{\star\star} > \max_i f(S^\star_i)$ which holds since $\phi < \frac{\rho^2}{3072}$.

Note that $\ell \geq \frac{f(S^\star)}{\xi^{\star\star}}$ because otherwise by subadditivity $f(S^\star) = f(\bigcup_{k=1}^\ell \bigcup_{i \in B_k} S^\star_i) \leq \sum_{k=1}^\ell f(\bigcup_{i \in B_k} S^\star_i) \leq \ell \xi^{\star\star} < f(S^\star)$.

As $B_1, \ldots, B_\ell$ partition $\agents$, it holds that $\sum_{k<\ell} \sum_{i \in B_k} \sqrt{c( S^\star_i)}  \leq \sum_{i=1}^n \sqrt{c( S^\star_i)} \leq  \sqrt{f(S^{\star})}$, and there must exist $k<\ell$ such that $\sum_{i \in B_k} \sqrt{c( S^\star_i)} \leq \frac{1}{\ell-1} \sqrt{f(S^{\star})}$. Let $S^{\star\star} = \bigcup_{i \in B_k} S^\star_i$. 
Note that 
\begin{equation}
    \hat{f}(S^{\star\star}) = f(S^{\star\star}) \geq \xi^{\star\star} - \max_{i \in \agents} f(S^\star_i) ,\label{eq:sstar-lower}
\end{equation} 
and 
\begin{equation}
    \sum_{i \in \agents} p(S^{\star\star}_i) \leq \frac{\gamma}{\ell-1} \sqrt{f(S^\star)} \leq \frac{\rho \xi^{\star\star}}{3}. \label{eq:ps-upper}
\end{equation}
Therefore, it holds that 
\begin{eqnarray}
    f(\hat{S}) & \geq & f(S^2) \geq  \hat{f}({S^2}) \geq \hat{f}(S^2) - \sum_{i\in \agents} p(S^2_i) \geq \hat{f}(S^1) - \sum_{i \in \agents} p(S^1_i)  \nonumber \\ &= & \hat{f}(S^1) - \sum_{i \in \agents} \gamma \cdot \sqrt{c(S^1_i)} \geq 
    \frac{\rho}{2} \cdot \hat{f}(S^{\star\star}) - \sum_{i \in \agents} \gamma \cdot \sqrt{c(S^{\star\star}_i)} \nonumber \\ & =& \frac{\rho}{2} \cdot \hat{f}(S^{\star\star}) - \sum_{i \in \agents}  p(S^{\star\star}_i)   \geq \frac{ \rho \xi^{\star\star}}{6} - \frac{\rho}{2} \max_{i \in \agents} f(S_i^\star) \geq \left(\frac{\rho^3 }{6144} -\frac{\rho \cdot \phi}{2}\right) f(S^\star) , \nonumber
\end{eqnarray}
where the fourth inequality is by Theorem~\ref{thm:alg-multi}, and the last inequality is by Equations~\eqref{eq:sstar-lower} and \eqref{eq:ps-upper}.
\end{proof}
We are now ready to prove the main theorem of this section.
\begin{proof}[Proof of Theorem~\ref{thm:noagentislarge}]
The algorithm clearly runs in polynomial time using polynomially many value queries. By Lemma~\ref{lem:sum-alpha} we know that the sum of alphas at the termination is at most $\frac{1}{4}$, which implies that under contract $2\contract+\vec{\epsilon}$ the principal pays at most $2 \cdot \frac{1}{4}+ n \cdot \epsilon \leq \frac{3}{4}$.
Let $S$ be some equilibrium with respect to contract $2\contract+\vec{\epsilon}$ then  by Lemma~\ref{lem:s-stable} and Lemma~\ref{lemma:anyequilibrium}, we get that $f(S) \geq \frac{1}{2} \cdot f(\hat{S}) $,
which means that the principal's utility is $$  u_p(S,2\contract+\vec{\epsilon})  \geq (1-\sum_i \alpha_i) f(S) \geq \frac{1}{8} f(\hat{S}) = \Lambda.$$
By Lemma~\ref{lem:s-value}, it holds that $\Lambda = \frac{1}{8}f(\hat{S}) \geq \frac{1}{8}\left(\frac{\rho^3 }{6144} -\frac{\rho \cdot \phi}{2}\right) f(S^\star) $, which concludes the proof. 
\end{proof}

\section{Approximating the Single Agent Benchmark}\label{sec:single-agent}
In this section, as our second ingredient for Main Theorem~\ref{mainthm:1}, we show how to obtain a constant factor approximation against the best single-agent contract.

\begin{theorem}
\label{thm:robust-single-agent}
For every submodular success probability $f$, Algorithm~\ref{alg:robust-single-agent} runs in polynomial time using polynomially many value and demand queries, and returns a contract $\contract$ with a guarantee $\Lambda$ such that in any equilibrium $S$ of $\contract$  the principal's utility is at least $\Lambda$. 
The guarantee $\Lambda$ satisfies that: For any contract $\contract^\star$ and any equilibrium $S^\star$ of $\contract^\star$ such that $\alpha^\star_i = 0$ and $S^\star_i = \emptyset$ for all but one $i \in \agents$, it holds that  $\Lambda \geq \frac{1}{24} \cdot (1-\sum_{i \in \agents} \alpha^\star_i)f(S^\star)$.
\end{theorem}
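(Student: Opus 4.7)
My plan is to reduce the multi-agent single-incentivized-agent problem to $n$ single-agent combinatorial-action problems, apply the single-agent FPTAS established later in this section, and then robustify the resulting contracts against the full multi-agent equilibrium structure by means of subset stability and the Doubling Lemma (Lemma~\ref{lemma:anyequilibrium}).

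Concretely, for each agent $i \in \agents$ I restrict to the action set $A_i$ with costs $(c_j)_{j \in A_i}$ and the induced reward function $T \mapsto f(T)$, $T \subseteq A_i$, and invoke the single-agent FPTAS (which uses value and demand oracles) to obtain a contract $\alpha^{(i)} \in [0,1)$ and a best response $S^{(i)} \subseteq A_i$ with $(1-\alpha^{(i)})f(S^{(i)})$ within a $(1-\epsilon)$ factor of the best single-agent principal's utility for agent $i$. I then embed $\alpha^{(i)}$ into the multi-agent game as $\contract^{(i)} = (0,\ldots,0,\alpha^{(i)},0,\ldots,0)$. The profile $(S^{(i)},\emptyset,\ldots,\emptyset)$ is a pure Nash equilibrium (hence subset stable) under $\contract^{(i)}$, because all other agents have zero contract and strictly positive cost. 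Applying the Doubling Lemma yields that every equilibrium $\Seq{}$ of the perturbed contract $2\contract^{(i)} + \vec{\epsilon}$ satisfies $f(\Seq{}) \geq \tfrac{1}{2} f(S^{(i)})$, and consequently the principal's worst-case utility under $2\contract^{(i)} + \vec{\epsilon}$ is at least
\[
\Lambda_i \;=\; (1-2\alpha^{(i)} - n\epsilon) \cdot \tfrac{1}{2}f(S^{(i)}).
\]
Algorithm~\ref{alg:robust-single-agent} then iterates over $i \in \agents$, computes these $\Lambda_i$, and returns the pair $(2\contract^{(i^\dagger)}+\vec{\epsilon},\Lambda_{i^\dagger})$ achieving the maximum $\Lambda_{i^\dagger}$.

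\textbf{Main obstacle.}
The quantity $\Lambda_i$ is a constant fraction of $(1-\alpha^{(i)})f(S^{(i)})$, and hence of the benchmark, only when $\alpha^{(i)}$ is bounded strictly below $1/2$; as $\alpha^{(i)} \to 1/2$ the coefficient $(1-2\alpha^{(i)})$ vanishes and the doubled contract becomes useless, while for $\alpha^{(i)} > 1/2$ the doubled contract already overpays the reward. The single-agent FPTAS only guarantees $\alpha^{(i)}$ bounded away from $1$ (via the utility-to-welfare gap of~\cite{DuttingRT19}), which is not enough on its own. I therefore expect the analysis to require a case split on $\alpha^{(i)}$: for $\alpha^{(i)}$ below an explicit threshold $c < 1/2$, the doubling argument directly yields a constant-factor approximation; for $\alpha^{(i)}$ above that threshold the benchmark $(1-\alpha_{i^\star}^\star)f(S^\star)$ is itself a bounded fraction of $f(S^\star)$, and a conservative fixed-$\alpha$ contract (with best response obtained by a single demand query to $f$ restricted to $A_i$ at prices $c_j/\alpha_i$, and again robustified via subset stability and the Doubling Lemma) already recovers a constant fraction of it. Composing the $(1-\epsilon)$ factor from the FPTAS, the $\tfrac{1}{2}$ factor from the Doubling Lemma, the worst-case $(1-2\alpha^{(i)})$ factor in the threshold regime, and the constant loss from the case split should yield the stated constant $\tfrac{1}{24}$.
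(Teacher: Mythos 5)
You correctly identify the reduction to the single-agent FPTAS, and you also correctly put your finger on the central obstacle: the Doubling Lemma requires $2\alpha^{(i)}+n\epsilon < 1$, so doubling is useless once $\alpha^{(i)}$ approaches $\tfrac{1}{2}$, yet Lemma~\ref{lem:4m} only bounds $\alpha^{(i)}$ away from $1$, not from $\tfrac{1}{2}$. Unfortunately, the fix you sketch for that regime does not work. When $\alpha^{(i)}\geq c$ for some threshold $c<\tfrac12$, it is \emph{not} true that the benchmark is a negligible fraction of $f(S^\star)$ (e.g.\ the optimal single-agent contract can sit just above $\tfrac12$, giving a benchmark $\approx\tfrac12 f(S^\star)$), and a ``conservative fixed-$\alpha$'' contract need not incentivize anything of comparable value --- the agent's best response to a small $\alpha$ can be the empty set, and if you insist on a large fixed $\alpha$ you are again blocked from doubling. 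So the case split on the size of $\alpha^{(i)}$ leaves a genuine gap.

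The paper's argument resolves this with a different mechanism. Instead of doubling the single-agent contract, it \emph{splits the remaining reward share}: it sets $\alpha_i = \tfrac{1+\alpha'_i}{2}$, so that $\alpha_i<1$ always, $1-\alpha_i=\tfrac{1-\alpha'_i}{2}$ preserves a constant fraction of the principal's share, and the increment $\alpha_i-\alpha'_i=\tfrac{1-\alpha'_i}{2}$ is still large relative to $\alpha'_i/(1-\alpha'_i)$. It then argues \emph{directly} (inequalities~\eqref{eq:1b},~\eqref{eq:2b} in Lemma~\ref{lem:single-agent-robust}, not the Doubling Lemma) that any equilibrium $S$ of the new contract satisfies $f(S)\geq f(S'_i) - \tfrac{2}{1-\alpha'_i}f(S_{-i})$. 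Since the other agents have $\alpha_{i'}=0$, the only thing they can add in equilibrium is zero-cost actions, so $S_{-i}\subseteq \actions^0$. This is where the paper's real case split lives: it compares $f(\actions^0)$ to $\tfrac13(1-\alpha'_i)f(S'_i)$. If $f(\actions^0)$ is small, the correction term is dominated and $f(S)\geq\tfrac13 f(S'_i)$; if $f(\actions^0)$ is large, the contract $\vec{\epsilon}$ together with the Doubling Lemma (applied to the subset-stable set $\actions^0$ under $\vec{0}$) already captures $\tfrac14 f(\actions^0)$, which is a constant fraction of the benchmark. Your plan uses the Doubling Lemma as the universal robustification tool; the paper uses it only in this zero-cost case and relies on the midpoint trick plus a bespoke two-inequality argument in the main case. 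The case split you would need is thus not on $\alpha^{(i)}$ but on $f(\actions^0)$, and the contract transformation is a midpoint rather than a doubling. As written, your proposal does not close the large-$\alpha^{(i)}$ regime.
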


In Section~\ref{sec:fptas-single-agent} we give a FPTAS for the single-agent problem for 
general $f$, with value and demand oracle access. This result is 
of independent interest.
Previously, only a weakly poly-time algorithm was known \cite{DuttingEFK21}.

Then, in Section~\ref{sec:making-single-agent-robust}, we extend this result to a robust approximation 
in the multi-agent case with submodular $f$; namely, a contract for which {\em every} induced equilibrium obtains the desired approximation. The challenge
arises from the existence of equilibria in which other agents take actions, even when their $\alpha_i$ is zero, which may in turn affect other agents' actions.
Our argument relies on the Doubling Lemma (Lemma~\ref{lemma:anyequilibrium}).

\subsection{FPTAS for a Single Agent}\label{sec:fptas-single-agent}

Consider a single-agent instance defined by a set $\actions$ of $m$ actions, a success probability function $f:2^\actions \rightarrow [0,1]$, and additive costs $c$. Let $\alpha^\star$, $S^\star$ be the optimal contract and the optimal set of actions.

Below we present an FPTAS for the optimal contract problem with general rewards $f$, with value and demand oracle access.
Previously, only a weakly-polynomial time FPTAS 
was known  \cite{DuttingEFK21}. Our result improves on this 
by showing that an arbitrarily good approximation to the optimal contract can be found in poly-time, independent of the representation size of the input. 

Our result is best possible in two ways: First, it is known that  computing an optimal contract is \textsf{NP}-hard for submodular $f$ \cite{DuttingEFK21}.
Second, it was recently shown that computing an optimal contract for submodular $f$ requires exponentially many demand oracle calls \cite{DuettingFGR24}.

\begin{theorem}[FPTAS for single-agent problem]\label{thm:FPTAS}
Consider the single-agent problem with 
general $f$. Then Algorithm~\ref{alg:fptas} gives a $(1-\epsilon)$-approximation to the optimal principal utility with $O\left(\frac{m^2}{\epsilon} \right)$ many value and demand queries.
\end{theorem}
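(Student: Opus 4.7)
The plan is to discretize the scalar parameter $\alpha \in [0,1]$ to a polynomial-sized grid of $O(m/\epsilon)$ candidate values; at each grid point one demand query (at prices $c_j/\alpha_k$) yields the agent's best response $S(\alpha_k)$, and one value query yields $f(S(\alpha_k))$. The algorithm returns the grid value maximizing $(1-\alpha_k) f(S(\alpha_k))$.

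The analysis rests on two ingredients. First, I bound $\alpha^\star$ away from $1$: by the first-best-vs-second-best guarantee of \cite{DuttingRT19}, the principal can always achieve utility at least $W^\star/2^m$ where $W^\star := \max_{S \subseteq A}(f(S)-c(S))$, and since $f(S^\star) \leq 1$ this gives $1 - \alpha^\star \geq W^\star/2^m$. Rescaling $f$ and $c$ jointly by $1/W^\star$ (when $W^\star>0$) leaves the agent's problem $\max_S(\alpha f(S) - c(S))$ invariant at each $\alpha$ and hence preserves $\alpha^\star$; WLOG I therefore assume $W^\star = 1$ and conclude $1 - \alpha^\star \geq 2^{-m}$. (If $W^\star = 0$, then $P^\star = 0$ and $\alpha = 0$ trivially suffices.) Second, $\alpha \mapsto f(S(\alpha))$ is non-decreasing: adding the two best-response inequalities at $\alpha < \alpha'$ gives $(\alpha-\alpha')(f(S(\alpha))-f(S(\alpha'))) \geq 0$, forcing $f(S(\alpha)) \leq f(S(\alpha'))$.

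With these two ingredients I place a multiplicative grid on $1-\alpha$: set $1-\alpha_k := (1-\epsilon/2)^k$ for $k = 0, 1, \ldots, K$, where $K = O(m/\epsilon)$ is chosen so that $(1-\epsilon/2)^K \leq 2^{-m}$. Since $1-\alpha^\star \geq 2^{-m}$, there is a smallest grid index $k^\star$ with $\alpha_{k^\star} \geq \alpha^\star$; by construction the grid spacing gives $1-\alpha_{k^\star} \geq (1-\epsilon/2)(1-\alpha^\star)$. Combining this with the monotonicity of $f(S(\cdot))$ yields
\[
(1-\alpha_{k^\star})\, f(S(\alpha_{k^\star})) \;\geq\; (1-\alpha_{k^\star})\, f(S^\star) \;\geq\; (1-\epsilon/2)(1-\alpha^\star)\, f(S^\star) \;=\; (1-\epsilon/2)\, P^\star \;\geq\; (1-\epsilon)\, P^\star,
\]
so the output is $(1-\epsilon)$-optimal. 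Each grid point uses $O(1)$ queries, totaling $O(m/\epsilon)$.

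The main obstacle, and the substantive step of the proof, is bounding $\alpha^\star$ away from $1$ in an input-independent way. Without the rescaling above, the bound $1-\alpha^\star \geq W^\star/2^m$ would make the grid size depend on $\log(1/W^\star)$, which can be arbitrary for a general instance; the rescaling trick (which exploits that linear contracts are invariant under joint positive scaling of $f$ and $c$) removes this dependence and gives a bound in terms of $m$ only. My count of $O(m/\epsilon)$ queries is in fact tighter than the theorem's stated $O(m^2/\epsilon)$; I expect the extra factor in the statement absorbs looser grid constants or an additional preprocessing step (for example, $O(m)$ value queries to estimate a scaling factor, or checking boundary cases like $\alpha = 0$ and $W^\star = 0$).
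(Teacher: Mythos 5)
Your grid construction and the monotonicity step (additivity of the two best-response inequalities giving $f(S(\alpha))$ non-decreasing in $\alpha$) are both correct and match ingredients used in the paper's proof. But the claim that, after rescaling so that $W^\star := \max_S(f(S)-c(S)) = 1$, one has $1-\alpha^\star \geq 2^{-m}$ does not hold, and this is where the argument breaks.

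The inference $1-\alpha^\star \geq W^\star/2^m$ from $(1-\alpha^\star)f(S^\star) \geq W^\star/2^m$ requires $f(S^\star)\leq 1$. Rescaling $f$ and $c$ jointly by $1/W^\star$ preserves $\alpha^\star$, but it also rescales the bound on $f$: in the rescaled instance $\tilde f = f/W^\star$ can be far larger than $1$. Concretely, with $m=1$, $f(\{1\})=1$, $c_1=1-\delta$, one has $W^\star=\delta$, $\alpha^\star = 1-\delta$, and $1-\alpha^\star=\delta$, which is not bounded below by any function of $m$ alone. The difficulty is precisely that $f(S^\star)$ (equivalently $c(S^\star)$) can be arbitrarily large compared to $W^\star$, and no normalization of $W^\star$ removes that. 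As a consequence your $K = O(m/\epsilon)$ grid anchored at $\alpha_0=0$ can miss $\alpha^\star$ entirely.

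The paper's Lemma~\ref{lem:4m} avoids this by expressing \emph{both} endpoints of the interval containing $\alpha^\star$ in terms of the instance-dependent anchor $\frac{\opt}{c_{j^\star}+\opt}$ (where $j^\star = \arg\max_{j\in S^\star} c_j$): $1-\alpha^\star$ lies between $\frac{\opt}{m 2^m(c_{j^\star}+\opt)}$ and $\frac{\opt}{c_{j^\star}+\opt}$, whose \emph{ratio} is only $m2^m$. A multiplicative grid on this interval has $O(m/\epsilon)$ points, but since $j^\star$ is unknown the algorithm tries every $j\in\actions$, giving $O(m^2/\epsilon)$ queries. That extra factor of $m$ is not "looser grid constants" — it is the price of guessing the anchor, and your approach has no substitute for it. To repair the proof you would need a lower bound on $1-\alpha^\star$ in terms of a quantity the algorithm can enumerate, which is exactly what the $c_{j^\star}$ anchor provides.
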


The main challenge of the single agent case is that the optimal $\alpha^\star$ could be very close to $1$. Indeed, if we knew that $\alpha^\star$ was bounded away from $1$, we could find a close to optimal contract through a fine-enough discretization of the contract space.
To bound $\alpha^\star$, we use that, by a reduction to the non-combinatorial model of \cite{DuttingRT19}, the gap between the maximum welfare and the principal's utility under the optimal contract is at most $2^m$.

\begin{observation}[\cite{DuttingRT19}]
    For every reward function $f:2^\actions \rightarrow \reals_{\geq 0}$ over $|A|=m$ actions and a single agent with costs $c_1,\ldots,c_m$, there exists a contract $\alpha$ that guarantees a utility for the principal of at least  $\frac{\opt}{2^m}$, where $\opt=\max_{S} (f(S)-c(S))$. \label{obs:welfare-single}
\end{observation}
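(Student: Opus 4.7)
The plan is to derive the bound via a reduction to the non-combinatorial contract model of \cite{DuttingRT19}. In that model, the agent has a fixed list of $k$ actions, each action $a$ coming with a cost $g_a$ and a success probability $\pi_a$, and \cite{DuttingRT19} establish that for any such instance there exists a linear contract $\alpha \in [0,1]$ whose induced principal's utility is at least $\frac{1}{k}\cdot\max_a(\pi_a-g_a)$, i.e., a $\frac{1}{k}$-fraction of the first-best welfare.

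First, I would construct the reduction. Enumerate all subsets $S \subseteq A$; there are $2^m$ of them. Build a non-combinatorial instance with one meta-action $a_S$ per subset $S$, of cost $g_{a_S} = c(S)$ and success probability $\pi_{a_S} = f(S)$. Since costs are additive and $f$ is defined on subsets, this is well-defined. The key observation is that, for any linear contract $\alpha$, the agent's best-response problem in the original combinatorial single-agent setting, namely
\[
\max_{S \subseteq A}\bigl(\alpha \cdot f(S) - c(S)\bigr),
\]
coincides exactly with the best-response problem in the reduced instance, $\max_a(\alpha \pi_a - g_a)$. Consequently the principal's utility from any fixed linear contract is identical in both instances, and so is the maximum over linear contracts. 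Likewise, the first-best welfare $\opt = \max_{S}(f(S)-c(S))$ equals $\max_a(\pi_a - g_a)$ in the reduced instance.

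Second, I would invoke the \cite{DuttingRT19} gap bound directly on the reduced instance, which has $k = 2^m$ meta-actions. This yields a linear contract whose principal's utility is at least $\frac{1}{2^m}\cdot \max_a(\pi_a - g_a) = \frac{\opt}{2^m}$, and by the equivalence above the same contract witnesses the claim in the original single-agent combinatorial setting. There is no real obstacle here; the only thing to be careful about is that the reduction preserves both the best-response structure and the welfare benchmark, which is immediate from the definitions, so the observation follows by a one-line application of the cited theorem.
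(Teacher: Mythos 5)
Your proposal is correct and matches the paper's approach: the paper explicitly states (just before the observation) that the bound follows ``by a reduction to the non-combinatorial model of \cite{DuttingRT19},'' and your enumeration of all $2^m$ subsets as meta-actions, together with the observation that both the agent's best-response problem and the first-best welfare are preserved, is exactly that reduction spelled out.
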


Using this observation, we now show our key lemma, which sets lower and upper bounds on $\alpha^\star$.
\begin{lemma}\label{lem:4m}
    Consider the single-agent problem with reward function $f$.  Let $\alpha^\star $, $S^\star $ be the optimal contract and the optimal set of actions, respectively, let $j^\star \in \arg\max_{j \in S^\star} c_j$, and let $\opt=\max_{S} (f(S)-c(S)) > 0$.
    Then we have
    $\alpha_{\min} \leq \alpha^\star \leq  \alpha_{\max}$, where $\alpha_{\min} =1- \frac{\opt}{ c_{j^\star} + \opt}$ and $\alpha_{\max} = 1- \frac{\opt}{m\cdot 2^m (c_{j^\star} + \opt )}$.
\end{lemma}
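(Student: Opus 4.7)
The plan is to prove the two bounds separately. Each relies on a single use of Observation~\ref{obs:welfare-single} together with two elementary inequalities: (i) the principal's utility is bounded above by the maximum welfare $\opt$, and (ii) the agent's IC constraint against deviating to $\emptyset$.

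For the upper bound $\alpha^\star \leq \alpha_{\max}$, I would first invoke Observation~\ref{obs:welfare-single}, which guarantees that the optimal principal utility is at least $\opt/2^m$, hence
\[
(1-\alpha^\star)\, f(S^\star) \;\geq\; \frac{\opt}{2^m}.
\]
Now I just need an upper bound on $f(S^\star)$. By definition of $\opt$, we have $f(S^\star)-c(S^\star)\leq \opt$, so $f(S^\star)\leq \opt + c(S^\star)$. Since $|S^\star|\leq m$ and $c_{j^\star}=\max_{j\in S^\star} c_j$, we get $c(S^\star)\leq m\,c_{j^\star}$, and thus $f(S^\star)\leq m(c_{j^\star}+\opt)$. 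Substituting yields $1-\alpha^\star \geq \opt/(m\cdot 2^m(c_{j^\star}+\opt))$, i.e.\ $\alpha^\star \leq \alpha_{\max}$.

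For the lower bound $\alpha^\star \geq \alpha_{\min}$, I plan to combine two facts. First, since $\opt>0$ we may assume $\alpha^\star>0$ (otherwise the agent's unique best response is $\emptyset$, and the principal's utility would be $0$, contradicting Observation~\ref{obs:welfare-single}). Second, because $S^\star$ is a best response it weakly dominates the deviation to $\emptyset$, giving $\alpha^\star f(S^\star) \geq c(S^\star)\geq c_{j^\star}$. Third, the principal's utility is at most the welfare at $S^\star$, which is at most $\opt$, so $(1-\alpha^\star)f(S^\star)\leq \opt$, equivalently $f(S^\star)\leq \opt/(1-\alpha^\star)$. Chaining these,
\[
c_{j^\star} \;\leq\; \alpha^\star f(S^\star) \;\leq\; \frac{\alpha^\star \opt}{1-\alpha^\star},
\]
so $c_{j^\star}(1-\alpha^\star)\leq \alpha^\star\opt$, which rearranges to $\alpha^\star\geq c_{j^\star}/(c_{j^\star}+\opt)=\alpha_{\min}$.

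There is no real obstacle here; the content of the lemma is essentially a bookkeeping exercise that wraps Observation~\ref{obs:welfare-single} (which controls the second-best vs.\ first-best gap) with the agent's trivial IC against $\emptyset$ and the crude size bound $c(S^\star)\leq m\,c_{j^\star}$. The only slightly delicate point is justifying $\alpha^\star>0$ for the lower bound; this is handled by noting that a nonzero principal utility forces $S^\star\neq\emptyset$ and hence, by agent rationality, $\alpha^\star>0$.
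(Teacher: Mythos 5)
Your proof is correct and follows essentially the same route as the paper: both bounds reduce to the three facts that $(1-\alpha^\star)f(S^\star)\geq \opt/2^m$ (Observation~\ref{obs:welfare-single}), that $(1-\alpha^\star)f(S^\star)\leq f(S^\star)-c(S^\star)\leq\opt$ (agent's non-negative utility plus the definition of $\opt$), and that $c_{j^\star}\leq c(S^\star)\leq m\,c_{j^\star}$, differing from the paper's write-up only in algebraic bookkeeping. One harmless blemish: the side remark that ``agent rationality'' forces $\alpha^\star>0$ is not true in general (if every action in $S^\star$ has zero cost, $\alpha^\star=0$ is admissible and $c_{j^\star}=0$ makes $\alpha_{\min}=0$), but your argument never actually uses $\alpha^\star>0$, so the proof stands; what it \emph{does} implicitly use is $\alpha^\star<1$, which follows from $\opt/2^m\le(1-\alpha^\star)f(S^\star)$ and $\opt>0$.
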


\begin{proof}
The set that maximizes $f(S)-c(S)$ is also the best response of the agent under contract $\alpha=1$.
We note that since $\opt>0$, it must be that $\alpha^\star <1 $.
By Observation~\ref{obs:welfare-single}, it holds that 
\begin{equation}
    \frac{\opt}{2^m}\leq (1-\alpha^\star)  f(S^\star)   \leq   f(S^\star) -c(S^\star) \leq \opt,\label{eq:welfare3}
\end{equation}
where the second inequality is since the utility of the principal under contract $\alpha^\star$ is bounded by the welfare generated by $S^\star$ as the utility of the agent is non-negative.

We first show that $\alpha^\star \geq \alpha_{\min}$. It holds that 
$$ \alpha^\star \cdot (  c(S^\star) + \opt  )-c(S^\star) \geq \alpha^\star \cdot (  c(S^\star) + f(S^\star)-c(S^\star) )-c(S^\star) =  \alpha^\star \cdot f(S^\star) -c(S^\star) \geq 0, $$
where the first inequality is by Inequality~\eqref{eq:welfare3}, and the last inequality is since the utility of the agent under $\alpha^\star$ and $S^\star$ is non-negative.

By rearranging, we get that $$ \alpha^\star \geq \frac{c(S^\star)}{ c(S^\star) + \opt} \geq \frac{c_{j^\star}}{c_{j^\star} +\opt } = 1- \frac{\opt}{ c_{j^\star} + \opt},
$$
as needed.

To show that $\alpha^\star \leq \alpha_{\max}$, observe that 
$$ (1-\alpha^\star) (c(S^\star) + \opt) \stackrel{\eqref{eq:welfare3}}{\geq }  (1-\alpha^\star) (c(S^\star) + f(S^\star) - c(S^\star)) =  (1-\alpha^\star) f(S^\star) \stackrel{\eqref{eq:welfare3}}{\geq } \frac{\opt}{2^m}.  $$
Rearranging this, we obtain
$$  \alpha^\star \leq 1- \frac{\opt}{2^m (c(S^\star) + \opt )} \leq 1- \frac{\opt}{m\cdot 2^m (c_{j^\star} + \opt )}  ,$$
 which concludes the proof.
\end{proof}

\begin{algorithm}
\caption{FPTAS for a single agent using value and demand oracles}\label{alg:fptas}
   \hspace*{\algorithmicindent} \textbf{Parameter:}  $\epsilon \in (0,1) $ \\
   \hspace*{\algorithmicindent} \textbf{Input:}  Costs $c_1,\ldots,c_m \in \reals_{\geq 0}$, value and demand oracle access to a function $f:2^A \rightarrow \reals_{\geq 0}$  \\
    \hspace*{\algorithmicindent} \textbf{Output:}  A contract $\alpha$ and a best response set $S$ 
\begin{algorithmic}[1]
\State Let $\alpha =0$ and $S = \arg\max_{\hat{S}: c(\hat{S}) = 0} f(\hat{S})$ \label{st:init}
\State Let $\opt = \max_{\hat{S} \subseteq \actions} (f(\hat{S}) - c(\hat{S}))$
\For{$j\in\actions$ with $c_j>0$}
\For{$k=0,\ldots,\lceil \log_{1/(1-\epsilon)}m \cdot 2^m \rceil $}
\State Let $\alpha_{j, k} = 1 - (1-\epsilon)^{k+1} \cdot  \frac{\opt}{ c_{j} + \opt}$
\State Let $S_{j,k} = \arg\max_{\hat{S}\subseteq \actions}  \left(f(\hat{S}) -\sum_{j'\in \hat{S}} \frac{c_{j'}}{\alpha_{j,k}}\right)$ 
\If{$(1-\alpha_{j, k}) f(S_{j,k}) > (1-\alpha)f(S)$}
\State $\alpha=\alpha_{j, k}$
\State $S= S_{j,k}$
\EndIf
\EndFor
\EndFor
\State \Return $\alpha,S$
\end{algorithmic}
\end{algorithm}

We are now ready to prove Theorem~\ref{thm:FPTAS}.

\begin{proof}[Proof of Theorem~\ref{thm:FPTAS}]
Let $\alpha^\star,S^\star$ be the best contract with its best response. Note that if $u_p(S^\star,\alpha^\star) \leq 0$, the claim holds trivially because Algorithm~\ref{alg:fptas} ensures that $u_p(S, \alpha) \geq 0$. Otherwise, $u_p(S^\star,\alpha^\star) = (1-\alpha^\star)f(S^\star) > 0$. It must then hold that $\alpha^\star < 1$ and $f(S^\star) > 0$. 
So, in particular, $S^\star \neq \emptyset$, so that $j^\star \in \arg\max_{j \in S^\star} c_j$ is well defined.  Also note that $\opt = \max_{\hat{S} \subseteq A} (f(\hat{S}) - c(\hat{S})) \geq f(S^\star) - c(S^\star) \geq (1-\alpha^\star) f(S^\star) >0$, because the agent's utility from $S^\star$ is non-negative. 

If $c_{j^\star}=0$, then the optimal contract is $\alpha=0$, which is the contract considered in Step~\ref{st:init} of the algorithm. 
Otherwise, $c_{j^\star} > 0$. Consider the iteration of Algorithm~\ref{alg:fptas} in which $j = j^\star$.
Let us denote $\alpha_{\min} =1- \frac{\opt}{ c_{j^\star} + \opt}$ and $\alpha_{\max} = 1- \frac{\opt}{m\cdot 2^m (c_{j^\star} + \opt )}$.
We claim that then there must be a value of  $k \in \{0,\ldots,\lceil \log_{1/(1-\epsilon)}m \cdot 2^m \rceil\}$ such that $1 - \alpha_{j, k}  \leq 1 - \alpha^\star \leq  \frac{1 - \alpha_{j, k}}{1-\epsilon}$.
Indeed, for $k = 0$ we have $\frac{1-\alpha_{j^\star,0}}{1-\epsilon} =  1- \alpha_{\min} \geq  1-\alpha^\star$, where the inequality follows by Lemma~\ref{lem:4m}, while for $k =\lceil \log_{1/(1-\epsilon)}m\cdot 2^m \rceil$ we have $1-\alpha^\star \geq 1- \alpha_{\max} \geq 1-\alpha_{j^\star,\lceil \log_{1/(1-\epsilon)}m\cdot 2^m \rceil}$, where the first inequality follows again by Lemma~\ref{lem:4m}. So there must be a $k \in \{0,\ldots,\lceil \log_{1/(1-\epsilon)}m \cdot 2^m \rceil\}$ with the desired properties.

We claim that for this choice of $j,k$, contract $\alpha_{j, k}$ provides a $(1-\epsilon)$-approximation to the optimal contract.
To see this, let $S$ be the choice of the agent under $\alpha_{j, k}$.
Using Proposition 3.1 in \cite{DuttingEFK21} (showing monotonicity of $f$ of the best response as a function of the contract $\alpha$), since $\alpha_{j, k} \geq \alpha^\star$, it must hold that $f(S) \geq f(S^\star)$. We thus obtain,
\[
(1-\alpha_{j, k}) f(S) \geq (1-\alpha_{j, k}) f(S^\star) \geq  (1-\epsilon)(1-\alpha^\star) f(S^\star) ,
\]
which completes the proof.
\end{proof}

The query complexity of Algorithm~\ref{alg:fptas} is upper bounded by $O(m \cdot \log_{1/(1-\epsilon)}m\cdot 2^m)$, which is $O(\frac{m^2}{\epsilon})$. 
We note that by using Proposition~\ref{prop:welfare-single}, one can tighten Lemma~\ref{lem:4m} for subadditive rewards, and improve the query complexity to $O(m \cdot \log_{1/(1-\epsilon)}m^2)$, which is $O\left(\frac{m\log m}{\epsilon}\right)$.

\subsection{Making Single-Agent Contracts Robust}\label{sec:making-single-agent-robust}

Algorithm~\ref{alg:robust-single-agent} runs, for every agent $i'$, the FPTAS from Section~\ref{sec:fptas-single-agent}, with $\epsilon = 1/2$, which returns a single-agent contract $\alpha'_i$ and best response set $S'_i$. 
The algorithm then considers an agent $i$ that maximizes $(1-\alpha'_i)f(S'_i)$.

A natural extension of this single-agent contract into a multi-agent contract $\contract$ would be to set $\alpha_i = \alpha'_i$ and $\alpha_{i'} = 0$ for all $i' \neq i$. Indeed, this ensures that $S_i = S'_i$ and $S_{i'} = \emptyset$ for all $i' \neq i$ is an equilibrium for $\contract$. However, this contract may admit additional equilibria.
In particular, agents $i' \neq i$ may take zero-cost actions, in which case $S_i = S'_i$ may no longer be a best response for agent $i$. 

To address this problem, depending on whether the total value of the zero-cost actions is high or not, the algorithm either returns $\contract = \vec{\epsilon}$ for some small enough $\epsilon > 0$, or further increases $\alpha_i$ to $\alpha_i = \frac{1+\alpha'_i}{2}$ while keeping $\alpha_{i'} = 0$ for all $i' \neq i$. The following lemma shows the robust approximation guarantee that the algorithm obtains this way. The proof once again relies on Lemma~\ref{lemma:anyequilibrium}. 

\begin{algorithm}
\caption{Robust Approximation of Single Agent Benchmark}\label{alg:robust-single-agent}
   \hspace*{\algorithmicindent} \textbf{Input:}  Costs $c_1,\ldots,c_m \in \reals_{\geq 0}$, value and demand oracle access to a submodular function $f:2^A \rightarrow \reals_{\geq 0}$ \\
    \hspace*{\algorithmicindent} \textbf{Output:}  A contract $\contract$ with a guarantee of principal's utility of $\Lambda$ for any equilibrium of $\contract$
\begin{algorithmic}[1]
\State Let $\contract' = \vec{0}$
\State Let $\actions^0 = \{j \in \actions \mid c_j = 0\}$ be the zero cost actions
\State Run Algorithm~\ref{alg:fptas} (with $\epsilon = \frac{1}{2}$) for each agent $i \in \agents$, to find $\alpha'_i$ and best response set $S'_i$
\State Let agent $i$ be the agent such that $(1-\alpha'_{i})f(S'_{i})$ is highest \label{step:best-agent}
\If{$f(A^0) > \frac{1}{3} (1-\alpha'_i)f(S'_i)$} 
\State Let $\epsilon = \frac{1}{2n}$
\State Let $\contract = \vec{\epsilon}$
\State Let $\Lambda = \frac{1}{4} f(A^0)$  
\Else
\State Let $\alpha_i = \frac{1+\alpha'_i}{2}$ and $\alpha_{i'} = 0$ for all $i' \neq i$
\State Let $\Lambda = \frac{1}{6} (1-\alpha'_i)f(S'_i)$  
\EndIf
\State \Return $\contract,\Lambda $
\end{algorithmic}
\end{algorithm}

\begin{lemma}\label{lem:single-agent-robust}
    Let $i, \alpha'_i, S'_i$ be defined as in Step~\ref{step:best-agent} of Algorithm~\ref{alg:robust-single-agent}, and $\contract,\Lambda $ be the returned contract and guarantee of Algorithm~\ref{alg:robust-single-agent}. Then, any $S$ that is a best response to $\contract$  guarantees the principal a utility of at least $\Lambda$, and it holds that
    \[
    \Lambda \geq 
    \frac{1}{12} (1-\alpha'_i) f(S'_i).
    \]
\end{lemma}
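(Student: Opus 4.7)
I will verify separately for each of the two cases of Algorithm~\ref{alg:robust-single-agent} that (i) $\Lambda \geq \tfrac{1}{12}(1-\alpha'_i) f(S'_i)$, and (ii) every equilibrium of the returned contract delivers a principal utility of at least $\Lambda$. Claim~(i) is immediate from the algorithm: in Case~1 (the if-branch), the branch condition gives $\Lambda = \tfrac{1}{4} f(A^0) > \tfrac{1}{12}(1-\alpha'_i) f(S'_i)$, and in Case~2 (the else-branch), $\Lambda = \tfrac{1}{6}(1-\alpha'_i) f(S'_i) \geq \tfrac{1}{12}(1-\alpha'_i) f(S'_i)$.

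For Case~1 (large $f(A^0)$), the key observation is that $A^0$ is trivially subset stable under the zero contract $\vec{0}$: dropping any zero-cost action changes neither the cost term nor the reward contribution (whose coefficient is zero). Applying the Doubling Lemma (Lemma~\ref{lemma:anyequilibrium}) with $S = A^0$ and $\contract = \vec 0$, every equilibrium $S^\dagger$ of $2\vec{0} + \vec{\epsilon} = \vec{\epsilon}$ satisfies $f(S^\dagger) \geq \tfrac{1}{2} f(A^0)$. Since $\sum_i \alpha_i = n\epsilon = \tfrac{1}{2}$, the principal's utility is $(1 - n\epsilon) f(S^\dagger) \geq \tfrac{1}{2} \cdot \tfrac{1}{2} f(A^0) = \Lambda$.

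For Case~2 (small $f(A^0)$), agents $i' \neq i$ have $\alpha_{i'} = 0$ so they best-respond by minimising cost, giving $S_{i'} \subseteq A^0_{i'}$ and $S_{-i} \subseteq A^0$. The heart of the argument is agent~$i$'s best-response inequality against the deviation to $S'_i$, namely $\alpha_i f(S) - c(S_i) \geq \alpha_i f(S'_i \cup S_{-i}) - c(S'_i)$. Combining this with monotonicity ($f(S'_i \cup S_{-i}) \geq f(S'_i)$) and the single-agent bound $c(S'_i) \leq \alpha'_i f(S'_i)$ (which follows from $S'_i$ being optimal at $\alpha'_i$ against the empty deviation), I obtain $(1+\alpha'_i) f(S) \geq (1-\alpha'_i) f(S'_i) + 2\, c(S_i)$, and in particular $f(S) \geq \tfrac{1-\alpha'_i}{1+\alpha'_i} f(S'_i)$. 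For $\alpha'_i \leq \tfrac{1}{2}$ this alone yields $(1-\alpha_i) f(S) = \tfrac{1-\alpha'_i}{2} f(S) \geq \tfrac{1}{6}(1-\alpha'_i) f(S'_i) = \Lambda$.

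The main obstacle is the regime $\alpha'_i > \tfrac{1}{2}$, where the bare ratio $\tfrac{1-\alpha'_i}{1+\alpha'_i}$ drops below $\tfrac{1}{3}$. The plan to close this gap is to leverage the Case~2 hypothesis $f(A^0) \leq \tfrac{1}{3}(1-\alpha'_i) f(S'_i)$ to exclude $c(S_i) = 0$: that case would force $S_i \subseteq A^0_i$ and hence $f(S) \leq f(A^0)$, contradicting the derived lower bound $\tfrac{1-\alpha'_i}{1+\alpha'_i} f(S'_i)$ unless $\alpha'_i \geq 2$. With $c(S_i) > 0$ established, I would refine the deviation analysis by exploiting all of agent~$i$'s best-response inequalities together with the subset-stability of $(S'_i, \emptyset_{-i})$ under $(\alpha'_i, 0, \ldots, 0)$, and a scaled variant of the Doubling Lemma matched to the ratio $\alpha_i/\alpha'_i$, to extract a quantitative lower bound on $c(S_i)$ which, substituted back into $(1+\alpha'_i) f(S) \geq (1-\alpha'_i) f(S'_i) + 2\, c(S_i)$, recovers the uniform constant $\tfrac{1}{6}$ across all $\alpha'_i \in [0, 1)$.
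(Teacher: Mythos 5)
Your Case~1 argument matches the paper's proof exactly and is correct. Your Case~2 argument, however, is incomplete: the inequality $(1+\alpha'_i)f(S) \geq (1-\alpha'_i)f(S'_i) + 2c(S_i)$ is correct and handles $\alpha'_i \leq \tfrac12$, but for $\alpha'_i > \tfrac12$ you only sketch a plan. The contradiction you derive to show $c(S_i) > 0$ is sound, but the step you then describe --- ``a scaled variant of the Doubling Lemma matched to the ratio $\alpha_i/\alpha'_i$'' that ``extracts a quantitative lower bound on $c(S_i)$'' --- is not carried out and does not obviously work. You cannot get a lower bound on $c(S_i)$ of the form you need just from the best-response inequalities you have used so far, because those only combine with $c(S'_i) \leq \alpha'_i f(S'_i)$, which is an individual-rationality bound (deviation to $\emptyset$) and throws away too much information.

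The missing idea is that $S'_i$ is not only IR under $\alpha'_i$, it is a \emph{best response} under $\alpha'_i$ (with the other agents idle), so in particular it beats the specific deviation to $S_i$:
\[
\alpha'_i f(S'_i) - c(S'_i) \geq \alpha'_i f(S_i) - c(S_i).
\]
Combining this with $f(S_i) \geq f(S) - f(S_{-i})$ (subadditivity) gives
$\alpha'_i f(S'_i) - c(S'_i) \geq \alpha'_i f(S) - \alpha'_i f(S_{-i}) - c(S_i)$.
Adding this to the other best-response inequality
$\alpha_i f(S) - c(S_i) \geq \alpha_i f(S'_i) - c(S'_i)$
makes the cost terms $c(S_i)$ and $c(S'_i)$ cancel, yielding
\[
f(S) \geq f(S'_i) - \frac{\alpha'_i}{\alpha_i - \alpha'_i}\, f(S_{-i}).
\]
Then $S_{-i} \subseteq A^0$, the choice $\alpha_i = \tfrac{1+\alpha'_i}{2}$ gives $\tfrac{\alpha'_i}{\alpha_i-\alpha'_i} \leq \tfrac{2}{1-\alpha'_i}$, and the else-branch condition $\tfrac{1}{1-\alpha'_i} f(A^0) \leq \tfrac13 f(S'_i)$ gives $f(S) \geq \tfrac13 f(S'_i)$ uniformly in $\alpha'_i$, with no case split and no need to bound $c(S_i)$. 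You should replace the IR bound $c(S'_i) \leq \alpha'_i f(S'_i)$ with this IC inequality against $S_i$; that closes the gap in one step.
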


\begin{proof}
    We first consider the case where $f(\actions^0) > \frac{1}{3} (1-\alpha'_i) f(S'_i)$. 
    Note that $\actions^0$ is subset stable under $\contract = \vec{0}$.
    So, by Lemma~\ref{lemma:anyequilibrium}, any equilibrium $S$ with respect to $\contract = \vec{\epsilon}$ satisfies $f(S) \geq \frac{1}{2}f(\actions^0).$ We thus obtain
    \[
    \left(1-\sum_{i'} \alpha_{i'} \right) f(S) = (1-n\epsilon) f(S) = \frac{1}{2}f(S) \geq \frac{1}{4}  f(\actions^0) = \Lambda > \frac{1}{12} (1-\alpha'_i) f(S'_i).
    \]
    
    Otherwise, $\alpha_i = \frac{1 + \alpha'_i}{2}$ and $\alpha_{i'} = 0$ for $i' \neq i$. Note that $\alpha_i > \alpha'_i$ because for the $\alpha'_i$ returned by Algorithm~\ref{alg:fptas} it holds that $\alpha'_i < 1$. Consider any set of actions $S$ that is incentivized by $\contract$.
    We claim that $f(S) \geq \frac{1}{3} f(S'_i)$.
    This shows the claim because then
    \begin{align*}
    \bigg(1 - \sum_{i'} \alpha_{i'}\bigg) f(S) = (1-\alpha_i)f(S) &= \frac{1}{2}(1-\alpha'_i) f(S) \geq 
    \frac{1}{6}(1-\alpha'_i) f(S'_i) = \Lambda .
    \end{align*}
    
    It remains to show that $f(S) \geq \frac{1}{3} f(S'_i)$. We first use that under $\contract$ agent $i$ weakly prefers $S_i$ over $S'_i$. This shows that
    \begin{equation}
    \alpha_i f(S) - c(S_i) \geq \alpha_i f(S'_i \cup S_{-i}) - c(S'_i) \geq \alpha_i f(S'_i) - c(S'_i).
    \label{eq:1b}
    \end{equation}
    On the other hand, under $\contract'$ agent $i$ weakly prefers $S'_i$ over $S_i$ and thus
    \[
    \alpha_i' f(S'_i) - c(S'_i) \geq \alpha'_i f(S_i) - c(S_i),
    \]
    which implies 
    \begin{equation}
    \alpha'_i f(S'_i) - c(S'_i) \geq \alpha'_i f(S) - \alpha'_{i} f(S_{-i}) - c(S_i).\label{eq:2b}
    \end{equation}
    By summing up~\eqref{eq:1b} and~\eqref{eq:2b}, we get 
    \[
    (\alpha_i - \alpha'_i) f(S) \geq (\alpha_i - \alpha'_i) f(S'_i) - \alpha'_i f(S_{-i}),
    \]
    or equivalently 
    \[
    f(S) \geq f(S'_i) - \frac{\alpha'_i}{\alpha_i - \alpha'_i} f(S_{-i}).
    \]
    
    Note that since $\alpha_{i'} = 0$ for all $i' \neq i$, we must have $S_{-i} \subseteq \actions^0$. We thus have $f(S) \geq f(S'_i) - \frac{\alpha'_i}{\alpha_i-\alpha'_i} f(\actions^0).$
    Moreover, by the definition of $\alpha_i$, we have $\frac{\alpha'_i}{\alpha_i - \alpha'_i} = \frac{2 \alpha'_i}{1 - \alpha'_i} \leq \frac{2}{1 - \alpha'_i}$. Furthermore, $\frac{1}{1 - \alpha'_i} f(\actions^0) \leq \frac{1}{3} f(S'_i)$. Therefore,
    \[
    f(S) \geq \frac{1}{3} f(S'_i),
    \]    
    as claimed.
    \end{proof}

    We are now ready to prove Theorem~\ref{thm:robust-single-agent}.

    \begin{proof}[Proof of Theorem~\ref{thm:robust-single-agent}] The algorithm clearly runs in polynomial time using polynomially many value and demand queries.
    We claim that Algorithm~\ref{alg:robust-single-agent} 
    satisfies the assertion of the theorem.
    Let $\contract$ be the contract returned by Algorithm~\ref{alg:robust-single-agent}, and let $S$ be any equilibrium of $\contract$.

    Let $\contract^\star$ be any contract and any equilibrium $S^\star$ of $\contract^\star$ such that $\alpha^\star_i = 0$ and $S^\star_i = \emptyset$ for all but one $i \in \agents$. Then, by Theorem~\ref{thm:FPTAS},  $\alpha'_i$ and $S'_i$ as defined in Line~\ref{step:best-agent} of the algorithm, fulfill $(1-\alpha'_i)f(S'_i) \geq \frac{1}{2} (1- \alpha^\star_i) f(S^\star_i)$. So, by Lemma~\ref{lem:single-agent-robust}, we have $\Lambda \geq \frac{1}{24} \cdot (1-\sum_{i \in \agents} \alpha^\star_i)f(S^\star)$, as desired.
    \end{proof}

\section{Putting it all Together}\label{sec:putting-together}

In this section, we wrap up the proof of Main Theorem~\ref{mainthm:1}. So far, we presented two algorithms: Algorithm~\ref{alg:cap} gives a constant-factor approximation against the optimal contract in which no agent is large; Algorithm~\ref{alg:robust-single-agent} gives an approximation against the best single-agent contract. In this section, we show that restricting to these two cases incurs only a small loss.
So, taking the better of the two solutions will give us a constant-factor approximation against the overall optimal contract.

In more detail, the main remaining statement to be shown is the following.

\begin{theorem}[Reduction to no agent is large / single agent]
    \label{theorem:nolargeorsingleagent}
    Consider any submodular success probability function $f$. Let $\contract^\star$ be any contract and $S^\star$ be an equilibrium of $\contract^\star$. For any $0 \leq \phi \leq 1$, there exists a contract $\contract'$ and an equilibrium $S'$ of $\contract'$
    fulfilling
    $(1-\sum_i \alpha'_i) f(S') \geq \frac{\phi}{1120} \cdot (1-\sum_i \alpha^\star_i) f(S^\star)$ and
    \begin{enumerate}
        \item \textbf{(no large agent)} $f(S'_i) \leq \phi f(S')$ for all $i \in \agents$, or
        \item \textbf{(single agent)} 
        $\alpha'_i = 0$ and $S_i = \emptyset$ for all but one $i \in \agents$.
    \end{enumerate}
\end{theorem}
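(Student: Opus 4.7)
The plan is a case analysis on whether $S^\star$ contains a $\phi$-large agent. In the easy case, where $f(S^\star_i) \leq \phi f(S^\star)$ for every $i \in \agents$, I take $(\contract', S') = (\contract^\star, S^\star)$; the first alternative in the conclusion is satisfied with ratio $1 \geq \phi/1120$. In the substantive case, some agent $i$ satisfies $f(S^\star_i) > \phi f(S^\star)$, and I construct a single-agent contract incentivizing only agent $i$.

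The construction rests on three observations. First, since $S^\star$ is an equilibrium of $\contract^\star$ it is subset stable, and for agent $i$ this gives $\alpha^\star_i \cdot (f(S^\star_i, S^\star_{-i}) - f(S'_i, S^\star_{-i})) \geq c(S^\star_i) - c(S'_i)$ for every $S'_i \subseteq S^\star_i$. Second, by submodularity of $f$, removing the other agents' actions only increases these marginals: $f(S^\star_i, \emptyset) - f(S'_i, \emptyset) \geq f(S^\star_i, S^\star_{-i}) - f(S'_i, S^\star_{-i})$. Combining the two shows that the truncated profile in which agent $i$ plays $S^\star_i$ and every other agent plays $\emptyset$ is subset stable with respect to the single-agent contract $(\alpha^\star_i, 0, \ldots, 0)$; this is the \emph{subset stability is preserved under removal of actions} fact highlighted in the introduction. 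Third, applying Lemma~\ref{lemma:anyequilibrium} (the Doubling Lemma) to this subset-stable profile, every equilibrium $S'$ of $\contract' = (2\alpha^\star_i + \epsilon, 0, \ldots, 0)$ satisfies $f(S') \geq \tfrac{1}{2} f(S^\star_i)$. The proof of Lemma~\ref{lemma:anyequilibrium} carries through without adding $\epsilon$ for the agents $i' \neq i$, because the contribution of any agent with $S_{i'} = \emptyset$ to the lemma's telescoping sum vanishes and no ratio $\alpha_{i'}/(2\alpha_{i'}+\epsilon_{i'})$ needs to be controlled for those agents.

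Combining these facts, the principal's utility under $\contract'$ is at least $(1 - 2\alpha^\star_i - \epsilon) \cdot \tfrac{1}{2} f(S^\star_i) \geq (1 - 2\alpha^\star_i - \epsilon) \cdot \tfrac{\phi}{2} f(S^\star)$, which is a $\Theta(\phi)$ fraction of the benchmark $(1-\sum_j \alpha^\star_j) f(S^\star) \leq (1-\alpha^\star_i) f(S^\star)$ as long as $\alpha^\star_i$ is bounded away from $1/2$. Note that $\contract'$ satisfies the ``single agent'' alternative exactly, because $\alpha'_{i'} = 0$ strictly and the equilibrium $(S'_i, \emptyset_{-i})$ with $S'_i$ agent $i$'s best response is always feasible when $\alpha'_{i'} = 0$.

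The main obstacle I anticipate is the regime in which $\alpha^\star_i$ is close to or exceeds $1/2$: the doubled contract then violates $\sum \alpha'_i \leq 1$ and the principal's utility is non-positive. The saving grace is that in this regime the benchmark is itself small, $(1-\sum_j \alpha^\star_j) f(S^\star) \leq (1-\alpha^\star_i) f(S^\star) \leq f(S^\star)/2$, so a much weaker single-agent utility suffices. I plan to handle this regime with an alternative single-agent contract obtained by applying the FPTAS of Section~\ref{sec:fptas-single-agent} to agent $i$: first, the bound $c(S^\star_i) \leq \alpha^\star_i f(S^\star_i)$ (which follows from $S^\star$ being an equilibrium together with submodularity, by comparing to the deviation $S'_i = \emptyset$) lower-bounds the single-agent welfare of $S^\star_i$ by $(1-\alpha^\star_i) f(S^\star_i)$; then the welfare-to-utility gap bound translates this into enough single-agent principal utility to match the $\phi/1120$ target against the (now small) benchmark. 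The final contract is the better of the Doubling contract $\contract'$ above and this alternative, and the constant $1120$ in the statement absorbs the losses across both branches as well as the $\phi$-gap between $f(S^\star_i)$ and $f(S^\star)$.
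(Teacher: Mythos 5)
Your first branch is correct and is a clean use of the tools: the truncated profile $(S^\star_i, \emptyset_{-i})$ is indeed subset stable for $(\alpha^\star_i, 0, \ldots, 0)$ by the submodularity argument you give, and the Doubling Lemma does carry through with $\alpha'_{i'}=0$ for $i'\neq i$ (both the left- and right-hand contributions of such agents in the telescoping chain vanish when $S_{i'}=\emptyset$). The paper reaches the same conclusion by a slightly different route: it first reduces to ``$\alpha'_i\leq 3/4$ for all $i$ or single agent'' (Lemma~\ref{lem:red2}), and the single-agent branch there also rests on subset stability preserved under removal of actions plus the Doubling Lemma. So up to this point you and the paper are on parallel tracks.

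The gap is in your treatment of the regime $\alpha^\star_i \gtrsim 1/2$. You correctly observe that the literal Doubling-Lemma contract $(2\alpha^\star_i+\epsilon,0,\ldots,0)$ then exceeds total payment $1$, and you propose to fall back on the single-agent FPTAS via Observation~\ref{obs:welfare-single} (or Proposition~\ref{prop:welfare-single}). But the welfare-to-utility gap invoked there is $2^m$ (or $m$ for subadditive $f$), \emph{not} a constant. Concretely: you lower-bound the single-agent welfare of $S^\star_i$ by $(1-\alpha^\star_i)f(S^\star_i)$, so the single-agent principal utility you can certify is $(1-\alpha^\star_i)f(S^\star_i)/2^m$; meanwhile the target $\tfrac{\phi}{1120}(1-\sum_j\alpha^\star_j)f(S^\star)$ can be as large as $\tfrac{1}{1120}(1-\alpha^\star_i)f(S^\star_i)$ (using $f(S^\star)<f(S^\star_i)/\phi$). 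These are off by a factor of $2^m/1120$, so ``the benchmark is small'' does not save you --- the ratio, which is what matters, stays uncontrolled as $m$ grows. The $\alpha^\star_i$ close to $1$ case is exactly the bottleneck, and it requires a different mechanism.

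The paper's way around it, which you would need to import, is the ``double toward 1'' move: set $\alpha'_i=\tfrac{1+\alpha^\star_i}{2}$ (so the contract is always feasible) and compare agent $i$'s utilities at $\alpha'_i$ and at $\alpha^\star_i$ to lower-bound $f(S'_i)$. This by itself controls the loss only when $f(S^\star_{-i})$ is small relative to $(1-\alpha^\star_i)f(S^\star_i)$; when it is not, the paper instead \emph{drops} agent $i$ (shows $S^\star_{-i}$ is subset stable for $\vec\beta$ with $\beta_i=0$, then doubles $\vec\beta$, which is feasible precisely because $\alpha^\star_i>3/4$ forces $\sum_{i'\neq i}\alpha^\star_{i'}<1/4$). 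In other words, when the large-$\alpha$ agent is problematic, sometimes the right answer is to single it out by pushing $\alpha_i$ toward $1$, and sometimes the right answer is to discard it and keep everyone else --- a dichotomy your single-branch ``always isolate agent $i$'' plan cannot reproduce, and which your FPTAS fallback does not substitute for.
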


We prove Theorem~\ref{theorem:nolargeorsingleagent} below. 
Before, let's observe that with Theorem~\ref{theorem:nolargeorsingleagent} in hand, the proof of Main Theorem~\ref{mainthm:1} becomes straightforward.

\begin{algorithm}
\caption{Robust Approximation for the Optimal Contract}\label{alg:meta}
   \hspace*{\algorithmicindent} \textbf{Input:}  Costs $c_1,\ldots,c_m \in \reals_{\geq 0}$, value and demand oracle access to a submodular function $f:2^A \rightarrow \reals_{\geq 0}$ \\
    \hspace*{\algorithmicindent} \textbf{Output:}  A contract $\contract$ that approximates the optimal contract
\begin{algorithmic}[1]
\State Run Algorithm~\ref{alg:cap} on $c_1,\ldots,c_m$, $f$, with parameters $\epsilon=\frac{1}{8n}$, $\rho=\frac{1}{6}$ and receive a contract $\contract_{\textsf{multi-agent}}$ with guarantee $\Lambda_{\textsf{multi-agent}}$
\State Run Algorithm~\ref{alg:robust-single-agent} on $c_1,\ldots,c_m$, $f$, and receive a contract $\contract_{\textsf{single-agent}}$ with guarantee $\Lambda_{\textsf{single-agent}}$
\If{$\Lambda_{\textsf{multi-agent}} > \Lambda_{\textsf{single-agent}} $}
\State $\contract = \contract_{\textsf{multi-agent}}$
\Else
\State $\contract = \contract_{\textsf{single-agent}}$
\EndIf
\State \Return $\contract$
\end{algorithmic}
\end{algorithm}

\begin{proof}[Proof of Main Theorem~\ref{mainthm:1}]
We show that Algorithm~\ref{alg:meta} 
satisfies the assertion of the theorem.  

By construction, the returned contract has the property that every equilibrium has principal utility at least $\Lambda = \max\{\Lambda_{\textsf{multi-agent}}, \Lambda_{\textsf{single-agent}}\}$. We still have to show that $\Lambda = \Omega(1) u_p(\contract^\star, S^\star)$.

To this end, let $\phi$ be the constant from Theorem~\ref{thm:noagentislarge}. By Theorem~\ref{theorem:nolargeorsingleagent}, there exists a contract $\contract'$ with an equilibrium $S'$ fulfilling $(1-\sum_i \alpha'_i) f(S') \geq \frac{\phi}{1120} \cdot (1-\sum_i \alpha^\star_i) f(S^\star)$ and (1) $f(S'_i) \leq \phi f(S')$ for all $i \in \agents$, or (2) $\alpha'_i = 0$ and $S_i = \emptyset$ for all but one $i \in \agents$.

We will distinguish these two cases. In the first case, Theorem~\ref{thm:noagentislarge} ensures that $\Lambda_{\textsf{multi-agent}} \geq \kappa (1-\sum_i \alpha'_i) f(S') \geq \kappa \frac{\phi}{1120} \cdot (1-\sum_i \alpha^\star_i) f(S^\star)$; in the second case, Theorem~\ref{thm:robust-single-agent} ensures that $\Lambda_{\textsf{single-agent}} \geq \frac{1}{12} (1-\sum_i \alpha'_i) f(S') \geq \frac{1}{12} \frac{\phi}{1120} \cdot (1-\sum_i \alpha^\star_i) f(S^\star)$.
\end{proof}

\paragraph{Reduction to no agent is large or single agent.}

The key difficulty in proving Theorem~\ref{theorem:nolargeorsingleagent} is the lack of montonicity demonstrated in Example~\ref{example:more-or-less}. 
That is, when removing all agents except for $i$, this agent might do less. To mitigate this effect, we would like to increase $\alpha_i$. However, this is problematic if $\alpha_i$ is already close to $1$. Therefore, our key lemma for showing Theorem~\ref{theorem:nolargeorsingleagent} establishes that it is sufficient to consider contracts in which all of the $\alpha_i$ are bounded away from $1$ or all but one of them are $0$.

\begin{lemma}\label{lem:red2}
    Consider any submodular success probability function $f$. Let $\contract^\star$ be any contract and let $S^\star$ be any equilibrium of $\contract^\star$. Then there exist a contract $\contract'$ and an equilibrium $S'$ of $\contract'$ fulfilling
    $(1-\sum_i \alpha'_i) f(S') \geq \frac{1}{20} \cdot (1-\sum_i \alpha^\star_i) f(S^\star)$ and
    \begin{enumerate}
        \item $\alpha'_i \leq \frac{3}{4}$ for all $i \in \agents$, or
        \item $\alpha'_i = 0$ and $S_i = \emptyset$ for all but one $i \in \agents$.
    \end{enumerate}
\end{lemma}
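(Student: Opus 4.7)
The plan is a case analysis on $\contract^\star$, with benchmark $\Pi := (1 - \sum_i \alpha_i^\star) f(S^\star)$. We may assume $\Pi > 0$ and $\sum_i \alpha_i^\star \le 1$, else the statement is vacuous.

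\textbf{Case 1: $\alpha_i^\star \le 3/4$ for all $i$.} Take $(\contract', S') := (\contract^\star, S^\star)$; Condition~1 holds trivially and the approximation factor is $1$.

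\textbf{Case 2: some $\alpha_{i^*}^\star > 3/4$.} Since $\sum_i \alpha_i^\star \le 1$, the agent $i^*$ is unique and $\sum_{j \ne i^*} \alpha_j^\star < 1/4$. I construct two candidate contracts and argue that at least one meets the lemma's bound, splitting based on which ``half'' of $S^\star$ carries the value.

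\emph{Candidate A (drop $i^*$).} Let $\contract_0$ be $\contract^\star$ with $\alpha_{i^*}^\star$ replaced by $0$. The per-action deviation inequalities for each $j \ne i^*$ at $\contract^\star$ transport verbatim to $\contract_0$: by submodularity, removing $S^\star_{i^*}$ from the base only increases the marginal of each $k \in S^\star_j$, so $S^\star_{-i^*}$ is subset-stable under $\contract_0$. Setting $\contract_A := 2\contract_0 + \vec{\epsilon}$ for small $\epsilon$ gives every $\alpha_{A,j} < 1/2 + \epsilon < 3/4$, hence Condition~1 holds, and by the Doubling Lemma (Lemma~\ref{lemma:anyequilibrium}) every equilibrium $S'$ of $\contract_A$ satisfies $f(S') \ge \tfrac12 f(S^\star_{-i^*})$. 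The principal's utility under $\contract_A$ is then at least a universal constant (roughly $1/4$) times $f(S^\star_{-i^*})$.

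\emph{Candidate B (single agent $i^*$).} The same submodularity argument, applied inside agent $i^*$'s own actions, shows $S^\star_{i^*}$ is subset-stable under the single-agent contract $(\alpha^\star_{i^*}, 0, \ldots, 0)$. Pick $\alpha_{B,i^*} \in [\alpha^\star_{i^*}, 1)$ and set $\alpha_{B,j} = 0$ for $j \ne i^*$; take the equilibrium in which $S'_j = \emptyset$ for $j \ne i^*$ (valid since these agents have $\alpha = 0$), so that $\contract_B$ satisfies Condition~2. Re-running the proof of the Doubling Lemma with $\lambda := \alpha_{B,i^*}/\alpha^\star_{i^*}$ in place of $2$ (the subset-stability inequality then gains a factor $\alpha^\star_{i^*}/\alpha_{B,i^*} = 1/\lambda$ rather than $1/2$) yields $f(S') \ge (1 - 1/\lambda) f(S^\star_{i^*})$, and thus principal utility at least $(1 - \alpha_{B,i^*})(1 - \alpha^\star_{i^*}/\alpha_{B,i^*}) f(S^\star_{i^*})$.

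\emph{Combining via subadditivity.} By subadditivity, $f(S^\star_{i^*}) + f(S^\star_{-i^*}) \ge f(S^\star)$. If $f(S^\star_{-i^*}) \ge (1-\alpha_{i^*}^\star) f(S^\star)/5$, Candidate~A alone beats the target $\Pi/20 \le (1-\alpha^\star_{i^*}) f(S^\star)/20$. Otherwise $f(S^\star_{i^*}) \ge (1-(1-\alpha^\star_{i^*})/5) f(S^\star)$ is essentially all of $f(S^\star)$, and Candidate~B (with $\alpha_{B,i^*}$ tuned so that the two factors $(1-\alpha_{B,i^*})$ and $(1-\alpha^\star_{i^*}/\alpha_{B,i^*})$ are balanced) covers the remaining case. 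The main obstacle is exactly this tuning in Candidate~B: the Doubling Lemma cannot be applied as-is because $2\alpha^\star_{i^*} > 3/2 > 1$, so the usable ratio $\lambda$ is confined to $(1, 1/\alpha^\star_{i^*}) \subseteq (1, 4/3)$, and $(1-1/\lambda)$ degenerates to $0$ as $\alpha^\star_{i^*} \to 1$. What rescues the argument is that the benchmark $\Pi \le (1-\alpha^\star_{i^*}) f(S^\star)$ degenerates in lockstep, so choosing $\alpha_{B,i^*}$ near the midpoint of $\alpha^\star_{i^*}$ and $1$ keeps Candidate~B's utility proportional to $(1-\alpha^\star_{i^*})^2 f(S^\star_{i^*})/(1+\alpha^\star_{i^*})$, which when combined with the case split above produces the loose constant $1/20$ claimed in the lemma.
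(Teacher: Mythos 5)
Your overall architecture matches the paper's: handle the trivial ``all $\alpha_i^\star \le 3/4$'' case, then (with a unique large agent $i^*$) split on whether $f(S^\star_{-i^*})$ is large (where dropping $i^*$ and doubling the rest works, via subset stability and the Doubling Lemma --- your Candidate~A is essentially the paper's second case) or small (incentivize only $i^*$). The gap is in Candidate~B.

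The ``$\lambda$-Doubling Lemma'' you propose does give $f(S') \ge (1-1/\lambda) f(S^\star_{i^*})$, but this bound is intrinsically too weak here, and the ``degenerates in lockstep'' claim is false. Write $\delta := 1-\alpha^\star_{i^*}$. Since $\alpha_{B,i^*}$ must lie in $(\alpha^\star_{i^*},1)$, the factor $(1-\alpha_{B,i^*})(1-\alpha^\star_{i^*}/\alpha_{B,i^*})$ is maximized at $\Theta(\delta^2)$ (writing $\alpha_{B,i^*} = 1-s\delta$, it equals $\delta^2 s(1-s)/(1-s\delta) \le \delta^2/4$), so Candidate~B's utility is $O(\delta^2) f(S^\star_{i^*})$. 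The benchmark, however, is only $O(\delta) f(S^\star)$: with all other $\alpha^\star_j = 0$ and $f(S^\star_{-i^*})$ negligible, $\Pi = \delta f(S^\star)$, and the ratio $\delta^2/\delta \to 0$ as $\delta \to 0$. Concretely, at $\alpha^\star_{i^*} = 0.99$, $f(S^\star_{i^*}) = 1$, $f(S^\star_{-i^*}) = 0$, Candidate~B yields about $2.5\cdot 10^{-5}$ against a target of $5\cdot 10^{-4}$. No choice of $\alpha_{B,i^*}$ fixes this, so ``the loose constant $1/20$'' cannot be reached by this route.

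The root cause is that the Doubling Lemma only leverages \emph{subset} stability of $S^\star_{i^*}$ under $\alpha^\star_{i^*}$ in isolation, whose ``headroom'' $\lambda$ is capped at $1/\alpha^\star_{i^*} \le 4/3$. The paper's Case~1 instead uses the \emph{full} equilibrium condition on $S^\star_{i^*}$ at $\contract^\star$ (stability against \emph{all} deviations $S'_{i^*}$, not just subsets), combined with subadditivity and monotonicity of $f$. Adding the best-response inequality for $S'_{i^*}$ under $\contract'$ yields
\[
f(S'_{i^*}) \;\ge\; f(S^\star_{i^*}) - \frac{\alpha^\star_{i^*}}{\alpha'_{i^*} - \alpha^\star_{i^*}}\, f(S^\star_{-i^*}),
\]
where the loss term is proportional to $f(S^\star_{-i^*})$, which the case split makes small (in the paper, the threshold is $(1-\alpha^\star_{i^*}) f(S^\star_{i^*}) \ge 4 f(S^\star_{-i^*})$). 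Under that condition and $\alpha'_{i^*} = (1+\alpha^\star_{i^*})/2$ one gets $f(S'_{i^*}) \ge \tfrac12 f(S^\star_{i^*})$, a \emph{constant}-fraction guarantee, and the final utility scales as $\Theta(\delta) f(S^\star_{i^*})$ --- linear in $\delta$, matching the benchmark. To repair your proof, replace the $\lambda$-Doubling Lemma in Candidate~B with this direct two-sided argument (best response under $\contract'$ vs.\ equilibrium under $\contract^\star$ plus subadditivity/monotonicity), so that the loss is governed by $f(S^\star_{-i^*})$ rather than by $f(S^\star_{i^*})/\lambda$.
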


\begin{proof}
    Consider some contract $\contract^\star$ and some equilibrium $S^\star$ of $\contract^\star$. 
    Clearly, we can focus on the case where $\sum_{i} \alpha^\star_i < 1$. Note that in this case there is nothing to show when $\alpha^\star_i \leq 3/4$ for all $i$. Moreover, there can be at most one agent $i$ such that $\alpha^\star_i > 3/4$. Let $i$ denote the index of this agent.
        
    \medskip
    
    {\bf Case 1:} Let's first consider the case where $(1-\alpha_i^\star) f(S_i^\star) \geq 4 f(S_{-i}^\star)$. Then, because $\alpha^\star_i > 3/4$, we have $f(S_i^\star) \geq 16 f(S_{-i}^\star)$. 
    By subadditivity of $f$, this implies that $f(S_i^\star) \geq 16 (f(S^\star) - f(S_i^\star))$, or equivalently, $f(S_i^\star) \geq \frac{16}{17} f(S^\star)$.  
    
    Consider contract $\contract'$ with $\alpha'_i = \frac{1 + \alpha_i^\star}{2}$ and $\alpha'_{i'} = 0$ for $i' \neq i$. Note that $\alpha'_i > \alpha^\star_i$. For agents $i' \neq i$ it's a best response to play $S'_i = \emptyset$ (independent of what agent $i$ does). Let $S'_i$ be any best response by agent $i$ (when $S'_{i'} = \emptyset$ for all $i' \neq i$).
    We claim that $f(S'_i) \geq \frac{1}{2} f(S_i^\star)$.
    This shows the claim because then
    \begin{align*}
    \bigg(1 - \sum_{i'} \alpha'_{i'}\bigg) f(S') = (1-\alpha_i')f(S'_i) &= \frac{1}{2}(1-\alpha_i^\star) f(S'_i)\\ 
    &\geq 
    \frac{1}{4}(1-\alpha_i^\star) f(S_i^\star) \geq \frac{1}{4} \bigg(1 - \sum_{i'} \alpha^\star_{i'}\bigg) f(S^\star_i) \geq \frac{4}{17} \bigg(1 - \sum_{i'} \alpha^\star_{i'}\bigg) f(S^\star),
    \end{align*}
    where the second equality follows by the definition of $\alpha_i'$. Note that the sum is over {\em all} $i' \in [n]$.
    
    Towards showing that $f(S') \geq \frac{1}{2} f(S^\star_i)$, first observe that under $\contract'$ agent $i$ weakly prefers $S'_i$ over $S^\star_i$. Since $S'_{i'} = \emptyset$ for all $i' \neq i$ this shows that
    \begin{equation}
    \alpha_i' f(S'_i) - c(S_i') \geq \alpha_i' f(S_i^\star) - c(S_i^\star).
    \label{eq:1}
    \end{equation}
    On the other hand, under $\contract^\star$ agent $i$ weakly prefers $S^\star_i$ over $S'_i$ and thus, given the choices $S^\star_{-i}$ of the agents other than $i$, it must hold that
    \[
    \alpha_i^\star f(S_i^\star \mid S_{-i}^\star) - c(S_i^\star) \geq \alpha_i^\star f(S_i' \mid S_{-i}^\star) - c(S_i').
    \]
    By subadditivity of $f$, we have $f(S_i^\star \mid S_{-i}^\star) = f(S_i \cup S_{-i}) - f(S_{-i}) \leq f(S^\star_i)$. By monotonicity of $f$, we have $f(S'_i \mid S^\star_{-i}) = f(S'_i \cup S^\star_{-i}) - f(S^\star_{-i}) \geq f(S'_i) - f(S^\star_{-i})$. We thus obtain
    \begin{align}
    \alpha_i^\star f(S_i^\star) - c(S_i^\star) \geq \alpha_i^\star f(S_i') - \alpha_i^\star f(S_{-i}^\star) - c(S_i').
    \label{eq:2}
    \end{align}
    By summing up~\eqref{eq:1} and~\eqref{eq:2}, we get 
    \[
    (\alpha_i' - \alpha_i^\star) f(S'_i) \geq (\alpha_i' - \alpha_i^\star) f(S_i^\star) - \alpha_i^\star f(S_{-i}^\star),
    \]
    or equivalently (using $\alpha_i' > \alpha_i^\star$)
    \[
    f(S'_i) \geq f(S_i^\star) - \frac{\alpha_i^\star}{\alpha_i' - \alpha_i^\star} f(S_{-i}^\star).
    \]
    
    By the definition of $\alpha_i'$, we have $\frac{\alpha_i^\star}{\alpha_i' - \alpha_i^\star} = \frac{2 \alpha_i^\star}{1 - \alpha_i^\star} \leq \frac{2}{1 - \alpha_i^\star}$. Applying the condition of Case 1, we furthermore have $\frac{1}{1 - \alpha_i^\star} f(S_{-i}^\star) \leq \frac{1}{4} f(S_i^\star)$. So, $\frac{\alpha_i^\star}{\alpha_i' - \alpha_i^\star} f(S_{-i}^\star) \leq \frac{1}{2} f(S_i^\star)$ and therefore $f(S'_i) \geq \frac{1}{2}f(S^\star_i)$. This concludes the argument for this case.

    \medskip
    
    {\bf Case 2:} Next consider the case where $(1-\alpha^\star_i)f(S^\star_i) < 4 f(S^\star_{-i})$. In this case we have
    \begin{align}
        \bigg(1 - \sum_{i'} \alpha^\star_{i'}\bigg) f(S^\star) &= \bigg(1 - \alpha^\star_i - \sum_{i'\neq i} \alpha^\star_{i'}\bigg) \left(f(S_i^\star \mid S_{-i}^\star) + f(S_{-i}^\star)\right) \notag\\
        &\leq (1 - \alpha^\star_i) f(S_i^\star \mid S_{-i}^\star) + f(S_{-i}^\star) \leq (1 - \alpha^\star_i) f(S_i^\star) + f(S_{-i}^\star) \leq 5 f(S_{-i}^\star), \label{eq:lbonf}
    \end{align}
    where the last inequality follows by the condition of Case 2, and the one before follows by subadditivity of $f$.
    
    In order to define a contract without agent $i$, we will use Lemma~\ref{lemma:anyequilibrium}. To this end, we will show that the set $S^\star_{-i} = S^\star \setminus S^\star_i$ is subset stable for contract $\contract^\star$ (and in fact also for contract $\vec{\beta}$ which is equal to $\contract^\star$ in all coordinates but coordinate $i$, where $\beta_i = 0$). 
    
    To see this, observe that for all $i' \neq i$ and all $S_{i'}' \subseteq S_{i'}^\star$ we have
    \[
    \alpha^\star_{i'} f(S^\star) - c(S_{i'}^\star) \geq \alpha^\star_{i'} f(S_{i'}', S_{-i'}^\star) - c(S_{i'}'),
    \]
    or equivalently
    \[
    \alpha^\star_{i'} f(S_{i'}^\star \setminus S_{i'}' \mid S_{i'}' \cup S_{-i'}^\star) \geq c(S_{i'}^\star) - c(S_{i'}').
    \]
    By submodularity of $f$, we have $f(S_{i'}^\star \setminus S_{i'}' \mid S_{i'}' \cup S_{-i'}^\star) \leq f(S_{i'}^\star \setminus S_{i'}' \mid S_{i'}' \cup (S_{-i'}^\star\setminus S^\star_i))$. Moreover, since $S'_{i'} \subseteq S^\star_{i'}$, it holds that $f(S_{i'}^\star \setminus S_{i'}' \mid S_{i'}' \cup (S_{-i'}^\star\setminus S^\star_i)) = f(S_{i'}^\star \cup (S_{-i'}^\star\setminus S^\star_i) ) - f(S_{i'}' \cup (S_{-i'}^\star\setminus S^\star_i))$. We thus obtain
    \[
    \alpha^\star_{i'} f(S_{i'}^\star, S^\star_{-i'} \setminus S_i^\star) - c(S_{i'}^\star) \geq \alpha^\star_{i'} f(S_{i'}', S_{-i'}^\star \setminus S_i^\star) - c(S_{i'}'). 
    \]
    
    Consider contract $\vec{\beta}$ with $\beta_{i'} = \alpha_{i'}^\star$ for $i' \neq i$ and $\beta_i = 0$. Since $\alpha^\star_i > 3/4$, we can choose $\epsilon > 0$ small enough so that $\sum_{i' \neq i} 2\alpha^\star_{i'} + n\epsilon \leq 1/2$. 
    Consider any equilibrium $\Seq{}$ with respect to $\contract' = 2 \vec{\beta} + \vec{\epsilon}$. 
    Then, by Lemma~\ref{lemma:anyequilibrium}, we have $f(\Seq{}) \geq \frac{1}{2} f(S^\star_{-i})$. So, the principal utility of contract $2 \vec{\beta} + \vec{\epsilon}$ in equilibrium $\Seq{}$ is 
    \[
    \left(1 - \sum_{i'\neq i} 2 \alpha^\star_{i'} - n \epsilon\right) f(\Seq{}) \geq \frac{1}{4} f(S_{-i}^\star) \geq \frac{1}{20} \left(1 - \sum_{i'} \alpha^\star_{i'}\right) f(S^\star),
    \]
    where for the first inequality we used that $f(\Seq{}) \geq \frac{1}{2}f(S^\star_{-i})$ and $\sum_{i' \neq i} 2\alpha^\star_{i'} + n\epsilon \leq 1/2$. The second inequality holds by Inequality~\eqref{eq:lbonf}. As we have $\sum_{i'} \alpha'_{i'} = \sum_{i' \neq i} 2\alpha^\star_{i'} + n\epsilon \leq 1/2$, in particular, we also have $\alpha'_{i'} \leq 3/4$ for all $i'$. This completes the argument for this case.
\end{proof}

We are now ready to prove Theorem~\ref{theorem:nolargeorsingleagent}.

\begin{proof}[Proof of Theorem~\ref{theorem:nolargeorsingleagent}]
    By Lemma~\ref{lem:red2}, there is a contract $\alpha'$ incentivizing $S'$ such that $(1-\sum_i \alpha'_i) f(S') \geq \frac{1}{20} \cdot (1-\sum_i \alpha^\star_i) f(S^\star)$ and $\alpha'_i \leq \frac{3}{4}$ for all $i \in \agents$, or $\alpha'_i = 0$ and $S'_i = \emptyset$ for all but one $i \in \agents$.

    In the latter case, there is nothing to be shown. We are done as well if $f(S_i') \leq \phi f(S')$ for all $i \in \agents$. Otherwise, there is an agent $i$ such that $f(S_i') > \phi f(S')$ and $\alpha_i' \leq \frac{3}{4}$. Now consider the contract $\contract''$ that sets $\alpha_i'' = \frac{7}{8}$ for this agent $i$ and $\alpha_{i'}'' = 0$ for all agents $i' \neq i$. For agents $i' \neq i$ it is a best response to play $S''_{i'} = \emptyset$. 
    Let $S_i''$ be any best response by agent $i$ under this choice of $S''_{i'}$ for $i' \neq i$.

    We now have
    \[
    \frac{7}{8} f(S_i'') \geq \alpha_i'' f(S_i'') - c(S_i'') \geq \alpha_i'' f(S_i') - c(S_i') \geq \frac{1}{8} f(S_i') + \alpha_i' f(S_i') - c(S_i') \geq \frac{1}{8} f(S_i').
    \]

    So,
    \[
    (1 - \alpha_i'') f(S_i'') = \frac{1}{8} f(S_i'') \geq \frac{1}{56} f(S_i') \geq \frac{\phi}{56} f(S') \geq \frac{\phi}{56} (1-\sum_{i'} \alpha'_{i'}) f(S') \geq \frac{\phi}{1120} (1-\sum_{i'} \alpha^\star_{i'}) f(S^\star),
    \]
    as claimed.
\end{proof}

\section{No PTAS for Multiple Agents, Not Even with Demand Queries}
\label{sec:noptas}

In this section, we show that even in the case of binary actions (the model studied in \cite{DuettingEFK23}), there is an absolute constant $\eta<1$ for which no algorithm can guarantee an approximation of $\eta$ with a polynomial number of value and demand queries.
This result strengthens a result by \cite{EzraFS24}, who showed that no poly-time algorithm can get $0.7$ approximation with access to value oracle alone.

\begin{theorem}\label{thm:upper}
    There exists a constant $\eta<1$ for which any algorithm that uses a sub-exponential number of value and demand queries to the submodular success probability function returns a $\eta$ -approximate optimal contract with an exponentially small probability in $n$.  
\end{theorem}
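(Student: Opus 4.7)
The plan is to construct, for each $k$-subset $G \subseteq [n]$ (with $k$ chosen sub-linearly, e.g.\ $k = \Theta(\sqrt{n})$), a symmetric submodular success probability function $f_G$ with uniform agent cost $c$, so that $f_G(S) = \psi(|S|, |S \cap G|)$ for some fixed $\psi \colon \mathbb{N}^2 \to \reals_{\geq 0}$ independent of $G$. The function $\psi(s,g)$ will be non-increasing in $g$ for fixed $s$, and strictly decreasing in a ``bad'' region where $g$ exceeds some threshold $\tau(s)$ slightly above $sk/n$. Away from this region $\psi(s,g) = \psi(s,0)$, so $f_G$ agrees pointwise with a baseline submodular function $f_\emptyset$ that encodes no information about $G$. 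Submodularity of $f_G$ will be enforced by writing it as a concave function of $|S|$ minus a controlled, submodular penalty depending on $|S \cap G|$ through a coverage-type construction; I would verify the diminishing-returns inequality directly in the two regions and at the threshold.

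The gap between the hard and the baseline instance comes from the third, crucial, property: the marginal $\psi(k,k) - \psi(k-1,k-1) = f_G(i \mid G \setminus \{i\})$ for $i \in G$ is noticeably larger than the uniform marginal $\psi(s+1,0) - \psi(s,0)$ available on any set of mostly bad agents. Hence the principal can incentivize exactly $G$ at total payment $\Theta(kc)/\delta$ (where $\delta$ is the enhanced marginal on $G$), achieving utility $\Theta(k)$, while on $f_\emptyset$ the best achievable utility is $(1-\eta)\Theta(k)$ for some constant $\eta > 0$ because every symmetric contract must pay according to the smaller uniform marginal. Quantifying this gap reduces to a one-dimensional optimization over the symmetric contract space and yields the absolute constant $\eta < 1$ in the statement.

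For the query lower bound I would invoke Yao's principle with $G$ uniform over $k$-subsets and fix an arbitrary deterministic algorithm $A$ that issues $T = 2^{n^{o(1)}}$ value and demand queries. For a value query on $S$ of size $s$, $|S \cap G|$ is hypergeometric with mean $sk/n$ and, by a Chernoff/hypergeometric tail bound, exceeds $\tau(s)$ with probability $\exp(-n^{\Omega(1)})$; conditioned on staying below $\tau(s)$ the answer $f_G(S)$ equals $f_\emptyset(S)$ and therefore leaks no information about $G$. For a demand query at price vector $p$, I would show that, after a canonical (symmetry-respecting) tie-breaking rule applied to the baseline $\arg\max_S (f_\emptyset(S) - p(S))$, the returned set lies in the safe region with the same exponentially high probability, because $f_G \leq f_\emptyset$ pointwise with equality on safe sets and the baseline maximizer is governed only by $|S|$. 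A union bound over $T$ queries makes the entire transcript independent of $G$ with probability $1 - o(1)$, so $A$'s output, being a deterministic function of the transcript, cannot target $G$ except on a set of measure $o(1)$; any contract not targeting $G$ achieves utility at most $(1-\eta)$ of the optimum, giving the theorem.

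The main obstacle I expect is the simultaneous design of $\psi$ so that (i)~$f_G$ is submodular, (ii)~the marginals on $G$ are enhanced enough to produce a constant-factor utility gap, and (iii)~every demand query at arbitrary real-valued prices can be answered identically on $f_G$ and on $f_\emptyset$ with overwhelming probability. The tension is that enhanced marginals on $G$ naturally create incentives for a demand oracle to prefer mixed sets where the $G$-fraction is unusually high, potentially leaking $G$. Resolving this requires placing the special behaviour of $f_G$ only on subsets that are uniformly dominated at every price vector the algorithm might reasonably issue, and the handling of tie-breaking in demand responses (so replies are a function of $|S|$ alone and not of $|S \cap G|$) is the most delicate technical step.
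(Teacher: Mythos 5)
Your proposal follows the same high-level strategy as the paper's proof, and crucially you have identified the paper's key conceptual twist: rather than hiding a set $T$ by inflating its value (as in the XOS hardness constructions of \cite{DuettingEFK23,EzraFS24}), you hide $T$ by \emph{deflating the values of its subsets}, so that demand queries --- which are good at finding high-value sets, not low-value sets --- cannot locate it, yet the marginals $f_T(i \mid T \setminus \{i\})$ are boosted and so $T$ becomes cheap to incentivize. The symmetric form $\psi(|S|,|S\cap T|)$, the uniform-cost instance, and the hypergeometric concentration argument via Yao's principle are all as in the paper. Two points where the approaches genuinely diverge: (a) you propose to reason \emph{directly} about demand-oracle responses, choosing a canonical tie-breaking that returns the $f_\emptyset$-maximizer whenever it is still an $f_T$-maximizer; the paper instead proves (Lemma~\ref{lem:demand}) that a demand query to $f_T$ can be \emph{simulated} by $O(|\actions|)$ value queries, reducing the whole argument to value queries and sidestepping tie-breaking issues entirely. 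Both routes work, but the simulation is cleaner and does not require carefully specifying what the oracle does in the small-probability bad event. (b) Your parameter choices are off in ways that would need repair: $k=\Theta(\sqrt n)$ only gives probability $\exp(-\Omega(\sqrt n))$ per query, whereas the theorem (and the paper's choice $k=2\lfloor n/10\rfloor$) asks for $\exp(-\Omega(n))$; and a threshold $\tau(s)$ that is merely ``slightly above'' the hypergeometric mean $sk/n$ would \emph{not} give exponential concentration --- the paper uses the much more generous threshold $|S\cap T|>k/2$ (for $|S|\le k$), which for linear $k$ sits at a constant multiple of the mean and so does yield the required tail bound.

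The substantive gap is the one you yourself flag as the ``main obstacle'': the entire theorem hinges on producing an explicit $\psi$ that is simultaneously monotone, submodular, agrees with $f_\emptyset$ on safe sets, and has the required marginal gap, and you leave this unconstructed. Your proposed template --- a concave function of $|S|$ minus a submodular penalty in $|S\cap G|$ --- does not actually match the paper's solution: the paper's third branch $\sqrt{x+\tfrac{k}{2}} + \tfrac{y-k/2}{\sqrt k + \sqrt{x+k/2}}$ is \emph{not} separable as $\text{concave}(|S|) - \text{penalty}(|S\cap T|)$, and it is precisely this non-separable coupling that makes the submodularity verification in Claim~\ref{cl:submodularity} delicate (it requires checking cross second differences and the behaviour at the region boundaries). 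Whether a construction of your proposed additive form can be made to satisfy all three constraints is not obvious and would need to be worked out; absent that, the proof is a blueprint rather than a proof.
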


Our proof relies on the construction of a family of functions $\{f_T:2^\actions\rightarrow R_{\geq 0}\}_T$, defined using the following function, parameterized by $k$:

\[
f(x,y) = \begin{cases}
    \sqrt{k} & \text{ if $x+y \geq k$} \\
    \sqrt{x+y} & \text{ if $x+y < k$ and $y \leq \frac{k}{2}$} \\
    \sqrt{x + \frac{k}{2}}  +  \frac{y- \frac{k}{2}}{\sqrt{k} + \sqrt{x + \frac{k}{2}}} & \text{ otherwise,}
\end{cases}
\]
where
$$ f_T(S)= f(|S\setminus T|, |S\cap T|) \mbox{ with parameter } k=|T|.$$

The construction of the hard example is different in nature than the constructions given in \cite{DuettingEFK23,EzraFS24} for XOS functions. The main idea in  \cite{DuettingEFK23} is to hide a single good set of agents with a higher value which cannot be learned efficiently using value and demand queries. This increased value decreases the payments to all agents, leading to an attractive set of agents. However, in order to make all its subsets non-attractive, the construction heavily uses the non-monotonicity of the marginals in XOS functions. Therefore, this approach cannot be extended to submodular functions.
The hardness result of \cite{EzraFS24} for XOS functions in the case of value queries also relies on hiding a small set of good agents with high value, which can be incentivized by relatively low payments. However, since the attractive set has a much higher value than a random set of the same size, this type of hidden set can be learned relatively easily by demand queries.
Therefore, establishing a hardness result in our case requires new ideas.

Our construction is based on creating a good set of agents whose value is indistinguishable from other sets of the same size (so demand queries would not be useful in identifying the hidden set). 
Thus, the only way to make the set more attractive is by making it cheaper to incentivize. This is done by decreasing the value of its subsets (instead of increasing the value of the set itself as was done in the former constructions). This decrease in value makes the marginal contribution of every agent higher,  which in turn decreases the payments necessary to incentivize them, while keeping the attractive set (and all its subsets) hidden, as demand oracles are better in finding sets with higher values than sets with lower values.

\begin{claim}
    For every set $T\subset A$ of even size, the function $f_T$ is monotone and submodular.
    \label{cl:submodularity}
\end{claim}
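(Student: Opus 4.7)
}

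The strategy is to reduce to monotonicity and concavity properties of the continuous two-variable function $f(x,y)$, then deduce the discrete set-function properties of $f_T$ by noting that every added action to $S$ increases exactly one of $|S\setminus T|$, $|S\cap T|$ by one. First I would verify that $f$ is continuous on $\mathbb{R}_{\geq 0}^2$, which amounts to checking agreement of the three pieces on the two boundaries $x+y=k$ and $y=k/2$ (with $x+y<k$). The second boundary is trivial: at $y=k/2$ both the Case 2 and Case 3 formulas give $\sqrt{x+k/2}$. For the first, at $y=k-x$ with $x\le k/2$ one verifies $\sqrt{x+k/2}+\frac{k/2-x}{\sqrt{k}+\sqrt{x+k/2}}=\sqrt{k}$ after clearing denominators (the numerator simplifies to $\sqrt{k}(\sqrt{x+k/2}+\sqrt{k})$). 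Note that Case 3 is non-empty only when $x<k/2$, since $y>k/2$ and $x+y<k$ force $x<k/2$; this will be used repeatedly.

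Next I would prove monotonicity of $f$ in each coordinate, which immediately yields monotonicity of $f_T$. Case 1 is constant; Case 2 is obviously increasing. In Case 3, $\partial_y f=\frac{1}{\sqrt{k}+\sqrt{x+k/2}}>0$, and
\[
\partial_x f=\frac{1}{2\sqrt{x+k/2}}\left(1-\frac{y-k/2}{(\sqrt{k}+\sqrt{x+k/2})^2}\right)\geq 0,
\]
since $y-k/2<k/2\leq k\leq(\sqrt{k}+\sqrt{x+k/2})^2$.

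For submodularity of $f_T$ it suffices to verify three discrete inequalities at all integer points: (a) $f(x+2,y)+f(x,y)\leq 2f(x+1,y)$, (b) $f(x,y+2)+f(x,y)\leq 2f(x,y+1)$, and (c) $f(x+1,y+1)+f(x,y)\leq f(x+1,y)+f(x,y+1)$. Conditions (a) and (b) say $f$ is concave along horizontal and vertical integer lines. Because $|T|$ is even, $k/2$ is an integer, so the boundaries $y=k/2$ and $x+y=k$ fall exactly at integer values and it suffices to check piecewise concavity together with a single ``kink'' condition at each boundary. Within each piece, concavity in $x$ and $y$ follows from the second-derivative computations: in Case 2, $\partial_{xx}f=\partial_{yy}f=-\tfrac14(x+y)^{-3/2}\leq 0$; in Case 3, $\partial_{yy}f=0$ (linear in $y$) and, setting $u=\sqrt{x+k/2}$ and $s=\sqrt{k}$,
\[
\partial_{xx}f=\frac{(y-k/2)(s+3u)-(s+u)^3}{4u^3(s+u)^3},
\]
which is non-positive because $y-k/2\leq k-x-k/2=s^2-u^2$ and one verifies algebraically that $(s^2-u^2)(s+3u)=(s-u)(s+u)(s+3u)\leq(s+u)^3$ by dividing by $s+u$ and noting $(s+u)^2-(s-u)(s+3u)=4u^2\geq 0$. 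At the kink $y=k/2$ between Cases 2 and 3, the $y$-derivative drops from $\frac{1}{2\sqrt{x+k/2}}$ to $\frac{1}{\sqrt{k}+\sqrt{x+k/2}}$, which is a decrease because $x<k/2$ in Case 3 forces $\sqrt{k}>\sqrt{x+k/2}$; the $x$-derivative is continuous across this boundary. At $x+y=k$ all partials drop to $0$, consistent with concavity. Condition (c) reduces to the mixed partial being non-positive: in Case 2, $\partial_{xy}f=-\tfrac14(x+y)^{-3/2}\leq 0$; in Case 3, $\partial_{xy}f=-\frac{1}{2\sqrt{x+k/2}(\sqrt{k}+\sqrt{x+k/2})^2}\leq 0$; in Case 1 it vanishes. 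Continuity of $f$ across the piecewise boundaries together with these sign conditions implies $f(x+1,y+1)-f(x+1,y)-f(x,y+1)+f(x,y)\leq 0$ at every integer point by integrating the mixed partial over the unit square (boundary crossings contribute non-positively because at each crossing the one-sided $y$-derivative only drops as $x$ increases).

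The main obstacle is the Case 3 computation, specifically verifying $\partial_{xx}f\leq 0$: the formula is most naturally written with the change of variables $u=\sqrt{x+k/2}$, and the non-positivity only holds after exploiting the geometric constraint $y-k/2\leq k/2-x=s^2-u^2$ imposed by $x+y<k$. Once this algebraic inequality is in hand, together with the boundary matching above, all three discrete submodularity conditions follow uniformly.
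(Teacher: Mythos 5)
Your proof is correct and follows essentially the same strategy as the paper: reduce both properties of the set function $f_T$ to calculus on the two-variable function $f(x,y)$, verify sign conditions on (second) partial derivatives within each of the three pieces, and then handle the piecewise boundaries separately. The only structural difference is a formal one. The paper computes the discrete first differences $f_x(x,y)=f(x{+}1,y)-f(x,y)$ and $f_y(x,y)=f(x,y{+}1)-f(x,y)$ explicitly, shows these are nonnegative (monotonicity), and then argues submodularity by taking continuous derivatives of $f_x,f_y$ plus two explicit transition inequalities at $y=k/2$. You instead work purely with continuous second partials $\partial_{xx}f,\partial_{yy}f,\partial_{xy}f\leq 0$ and translate to the three discrete second-difference conditions by an integration/kink argument. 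These are interchangeable; your version is arguably a bit cleaner in that it cleanly separates interior smoothness from boundary drops, and you handle the $x+y=k$ transition explicitly (the paper leaves it implicit, dismissing it as trivial since every marginal drops to zero there). The key algebraic step in both proofs is identical: in the third piece, $\partial_{xx}f\leq 0$ (equivalently the paper's $\partial f_x/\partial x\leq 0$) only holds after substituting $u=\sqrt{x+k/2}$, $s=\sqrt{k}$ and exploiting the constraint $y-k/2\leq s^2-u^2$, exactly as you describe, and your verification $(s+u)^2-(s-u)(s+3u)=4u^2\geq 0$ is the same computation as the paper's final bound in that case.
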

\begin{proof}

Let 
\[
f_x(x,y) = f(x+1,y)-f(x,y) = \begin{cases}
    0 & \text{ if $x+y \geq k$} \\
        \frac{1}{\sqrt{x+y+1} + \sqrt{x+y}} & \text{ if $x+y < k$ and $y \leq \frac{k}{2}$} \\
    \frac{1}{\sqrt{x +1+ \frac{k}{2}}+\sqrt{x + \frac{k}{2}}}  +  \frac{y- \frac{k}{2}}{\sqrt{k} + \sqrt{x +1+ \frac{k}{2}}} -  \frac{y- \frac{k}{2}}{\sqrt{k} + \sqrt{x + \frac{k}{2}}}  & \text{ otherwise}
\end{cases}
\]
and 
\[
f_y(x,y) = f(x,y+1)-f(x,y) = \begin{cases}
    0 & \text{ if $x+y \geq k$} \\
    \frac{1}{\sqrt{x+y+1} + \sqrt{x+y}} & \text{ if $x+y < k$ and $y < \frac{k}{2}$} \\
     \frac{1}{\sqrt{k} + \sqrt{x + \frac{k}{2}}}  & \text{ otherwise.}
\end{cases}
\]
To show that $f_T$ is monotone, it is equivalent to prove that $f_x(x,y),f_y(x,y)\geq 0$ for every $x,y$.
The only case that is not trivial is for $f_x$, where we are in the third case for which \begin{eqnarray*}
f_x(x,y) & = &   \frac{1}{\sqrt{x +1+ \frac{k}{2}}+\sqrt{x + \frac{k}{2}}}  + \underbrace{  \frac{y- \frac{k}{2}}{\sqrt{k} + \sqrt{x +1+ \frac{k}{2}}} -  \frac{y- \frac{k}{2}}{\sqrt{k} + \sqrt{x + \frac{k}{2}}}}_{\leq  0} \\ & \geq &  
\frac{1}{\sqrt{x +1+ \frac{k}{2}}+\sqrt{x + \frac{k}{2}}}  +  \frac{ \frac{k}{2}-x}{\sqrt{k} + \sqrt{x +1+ \frac{k}{2}}} -  \frac{\frac{k}{2}-x}{\sqrt{k} + \sqrt{x + \frac{k}{2}}} \\ & = & \frac{1}{\sqrt{k} + \sqrt{1 + \frac{k}{2} + x}} \geq 0,
\end{eqnarray*}
where the first inequality holds since in this case, $x+y \leq k$.

To show submodularity, it is sufficient to show that the derivative of all cases of $f_x$ and $f_y$ with respect to both $x,y$ are non-positive, and that $f_x$ and $f_y$ decrease in the transitions between the cases of the function.
The only non-trivial case for the derivative is again the third case of $f_x$ for which  \begin{eqnarray*}
\frac{f_x(x,y)}{\delta x} & = & 
\underbrace{\frac{-1}{\sqrt{k/2 + x} \sqrt{2 + k + 2 x} (\sqrt{k + 2 x} + \sqrt{2 + k + 2 x})}}_{\textit{increasing in } x} 
\\ & + & \underbrace{\frac{y-\frac{k}{2} }{2 \sqrt{\frac{k}{2} + x} (\sqrt{k} + \sqrt{\frac{k}{2} + x})^2} - \frac{y-\frac{k}{2} }{2 \sqrt{1 + \frac{k}{2} + x} (\sqrt{k} + \sqrt{1 + \frac{k}{2} + x})^2}}_{\textit{decreasing in } x \textit{ and incresing in } y}\\ & \leq & 
-\frac{1}{\sqrt{k} \sqrt{2k} (\sqrt{2k } + \sqrt{2k})} +\frac{\frac{k}{2} }{2 \sqrt{\frac{k}{2} } (\sqrt{k} + \sqrt{\frac{k}{2} })^2} - \frac{\frac{k}{2} }{2 \sqrt{1 + \frac{k}{2} } (\sqrt{k} + \sqrt{1 + \frac{k}{2} })^2} \leq 0,
\end{eqnarray*}
where for the first inequality we use that $x\in[0,\frac{k}{2}]$ and that $y\leq k$, and the last inequality holds for every $k>0$.

We also need to show that in the transitions between the cases, the submodularity is maintained. That is, we need to show that (1) $f_x(x,k/2) \geq f_x(x,k/2+1) $ for all $x \leq k/2$, and (2) that $f_y(x,k/2-1) \geq f_y(x,k/2)$ for all $x\leq k/2$.
The first inequality holds since 
\begin{eqnarray*}
f_x(x,k/2)  & = & \frac{1}{\sqrt{x +1+ \frac{k}{2}}+\sqrt{x + \frac{k}{2}}} \\ &  \geq &  \frac{1}{\sqrt{x +1+ \frac{k}{2}}+\sqrt{x + \frac{k}{2}}}  +  \underbrace{\frac{1}{\sqrt{k} + \sqrt{x +1+ \frac{k}{2}}} -  \frac{1}{\sqrt{k} + \sqrt{x + \frac{k}{2}}}}_{ \leq 0 } =  f_x(x,k/2+1), 
\end{eqnarray*}
and the second inequality holds since 
\begin{eqnarray*}
f_y(x,k/2-1) & = & \frac{1}{\sqrt{x+\frac{k}{2}} + \sqrt{x+\frac{k}{2}-1}} \geq  \frac{1}{\sqrt{k} + \sqrt{x + \frac{k}{2}}} =  f_y(x,k/2), 
\end{eqnarray*}
where the inequality is since $x \leq \frac{k}{2}$.
\end{proof}

\bigskip

Consider an instance defined by $f_T$ for some set $T$ where $
c_i = \frac{1}{8k^{3/2}}$ for every $i\in\actions$.
Let $g_T$ be the corresponding principal's utility function with respect to $f_T$.

\begin{lemma}\label{lem:sub-opt}
    For any set $X$ with $\lvert X \cap T \rvert \leq \frac{k}{2}$, we have $g_T(X) < (0.97 + \frac{1}{3 \sqrt{k}}) \cdot \max_{X^\ast} g_T(X^\ast)$.
\end{lemma}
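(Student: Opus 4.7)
My plan is to bound the ratio $g_T(X)/\max_{X^\ast}g_T(X^\ast)$ explicitly by producing a lower bound on the benchmark via the planted set $X^\ast=T$, and an upper bound on $g_T(X)$ that uses the structural consequence of $|X\cap T|\leq k/2$.

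First I would compute $g_T(T)$ as a lower bound on the benchmark. Since $|T|=k$, $f_T(T)=f(0,k)=\sqrt k$. For any $i\in T$, removing $i$ moves us from case~1 into case~3 at the point $(0,k-1)$, and the common marginal is $f_T(T)-f_T(T\setminus\{i\})=1/(\sqrt k+\sqrt{k/2})$ (read off the derivative formula $f_y(0,k-1)$). Hence the minimum-payment contract incentivizing $T$ uses $\alpha_i=c_i(\sqrt k+\sqrt{k/2})$ for each $i\in T$; plugging in $c_i=1/(8k^{3/2})$ gives a total payment of $(1+1/\sqrt 2)/8$ and thus $g_T(T)=\sqrt k\cdot(7-1/\sqrt 2)/8$.

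Next, for any $X$ with $y:=|X\cap T|\leq k/2$ and $x:=|X\setminus T|$, let $s=x+y$. The hypothesis $y\leq k/2$ is exactly what guarantees that the pair $(x,y)$ and every single-removal remain inside cases~1 and~2 of $f$, where the value equals $\sqrt{\min(s,k)}$ and every removal yields the common marginal $1/(\sqrt s+\sqrt{s-1})$. Thus, either $s>k$ and every marginal vanishes so that $X$ is not an equilibrium of any contract (costs are strictly positive), or $s\leq k$ and the minimum-payment contract gives the closed form
\[
g_T(X)=\sqrt s\left(1-\frac{s(\sqrt s+\sqrt{s-1})}{8k^{3/2}}\right).
\]
Substituting $u=s/k\in(0,1]$ reduces the optimization to comparing with the clean continuous surrogate $h(u)=\sqrt u-u^2/4$, whose unique critical point on $(0,1]$ is $u=1$; combined with $\sqrt{1-1/k}\geq 1-1/k$, this yields the upper bound $g_T(X)\leq (3/4)\sqrt k+1/(8\sqrt k)$.

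Dividing the two bounds gives $g_T(X)/g_T(T)\leq (6+1/k)/(7-1/\sqrt 2)$. The constant term is $12/(14-\sqrt 2)<0.954<0.97$, and a short elementary check shows $(1/k)/(7-1/\sqrt 2)\leq 1/(3\sqrt k)$ for all $k\geq 1$, so the claimed strict inequality will follow. The main obstacle I anticipate is the discrete maximization in $s$: one needs to verify that $s=k$ is the \emph{discrete} maximizer (not merely the critical point of the continuous surrogate) and to carefully handle the degenerate edges, namely $s=1$ where $\sqrt{s-1}=0$, and $s=k$ where the value case changes and the equilibrium-feasibility constraint becomes tight.
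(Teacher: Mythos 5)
Your proposal is correct and follows essentially the same strategy as the paper: lower-bound the benchmark by the principal's utility at $X^\ast=T$, derive the closed form for $g_T(X)$ when $|X\cap T|\le k/2$ (so every singleton removal stays in the $\sqrt{x+y}$ regime and the common marginal is $\sqrt{s}-\sqrt{s-1}$), bound the cost term via $\sqrt{s-1}\ge\sqrt{s}-O(1/\sqrt{s})$, and observe that the resulting surrogate $\sqrt{s}-s^2/(4k^{3/2})$ is increasing on $[0,k]$ so the discrete maximum sits at $s=k$. The only differences are cosmetic: you keep $g_T(T)=\sqrt{k}\,(7-1/\sqrt 2)/8$ exactly where the paper rounds down to $0.78\sqrt k$, and your sharper inequality $\sqrt{s-1}\ge\sqrt s(1-1/s)$ yields $g_T(X)\le \tfrac34\sqrt k+\tfrac{1}{8\sqrt k}$ in place of the paper's $\tfrac34\sqrt k+\tfrac14$; the ``obstacle'' you flag about the discrete maximizer is already resolved by the monotonicity of the surrogate.
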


\begin{proof}
We first show that $\max_{X^\ast} g_T(X^\ast) > 0.78 \sqrt{k}$. 
In particular, we show that $g_T(T) > 0.78 \sqrt{k}$.
First observe that $f_T(T) = \sqrt{k}$. Furthermore, for every $i \in T$, we have
\[
f_T(T \setminus \{i\}) = \sqrt{\frac{k}{2}} + \left(\sqrt{k} - \sqrt{\frac{k}{2}}\right) \frac{k - 1 - \frac{k}{2}}{\frac{k}{2}} = \sqrt{\frac{k}{2}} + \sqrt{k} \left(1 - \frac{1}{\sqrt{2}}\right) \left(1 - \frac{2}{k}\right).
\]
So the marginal of every element is given by
\[
f_T(i \mid T \setminus \{i\}) =  f_T(T) - f_T(T \setminus \{i\}) = \left(1 - \frac{1}{\sqrt{2}}\right) \frac{2}{\sqrt{k}}  = \frac{2-\sqrt{2}}{\sqrt{k}},
\]
implying 
\[
\frac{c_i}{f_T(i \mid T \setminus \{i\})} = \frac{1}{8 k^{3/2}} \frac{\sqrt{k}}{2 - \sqrt{2}} = \frac{1}{k} \frac{1}{8(2 - \sqrt{2})}
\]
and so
\[
1 - \sum_{i \in T} \frac{c_i}{f_T(i \mid T \setminus \{i\})} = 1 - \frac{1}{8(2 - \sqrt{2})} > 0.78.
\]
This implies that
\[
g_T(T) = \left( 1 - \sum_{i \in T} \frac{c_i}{f_T(i \mid T \setminus \{i\})}\right) f_T(T) > 0.78 \sqrt{k}.
\]
Given that $g_T(T) > 0.78 \sqrt{k}$, to prove the lemma it suffices to show that $g_T(X) \leq \frac{3}{4} \sqrt{k} + \frac{1}{4}$. 

If $\lvert X \cap T \rvert \leq \frac{k}{2}$ and $\lvert X \rvert \leq k$, we have for all $i \in X$
\[
\frac{c_i}{f_T(i \mid X \setminus \{i\})} = \frac{c_i}{\sqrt{\lvert X \rvert} - \sqrt{\lvert X \rvert - 1}} = c_i (\sqrt{\lvert X \rvert} + \sqrt{\lvert X \rvert - 1}) \geq 2 c_i \sqrt{\lvert X \rvert - 1} \geq 2 c_i (\sqrt{\lvert X \rvert} - 1).
\]
Therefore,
\[
g_T(X) = \left( 1 - \sum_{i \in X} \frac{c_i}{f_T(i \mid X \setminus \{i\})}\right) f_T(X) \leq (1 - \lvert X \rvert 2 c_i (\sqrt{\lvert X \rvert} - 1)) \sqrt{\lvert X \rvert} = \sqrt{\lvert X \rvert} - \frac{\lvert X \rvert^2}{4 k^{3/2}} + \frac{\lvert X \rvert^{3/2}}{4 k^{3/2}} \leq \frac{3}{4} \sqrt{k} + \frac{1}{4},
\]
as desired, where we use that $x \mapsto \sqrt{x} - \frac{x^2}{4 k^{3/2}}$ is non-decreasing for $x \in [0, k]$.

Furthermore, if $\lvert X \rvert > k$, then $f_T(i \mid X \setminus \{i\}) = 0$ and so $g_T(X) = -\infty$.
\end{proof}
We next show how to implement a demand query for $f_T$ using access to value queries for $f_T$.
\begin{lemma}
\label{lem:demand}
    For every set $T \subseteq \actions $ of even size $4 \leq k<|\actions|$,  
    answering a demand query for $f_T$ can be done with polynomially many value queries to $f_T$ (without knowing the set $T$).
\end{lemma}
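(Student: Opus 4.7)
Sort items by the query's price vector $p$ as $\pi_1,\ldots,\pi_m$ with $p_{\pi_1}\le\cdots\le p_{\pi_m}$, and write $\tau_\ell=|\{\pi_1,\ldots,\pi_\ell\}\cap T|$ and $P(\ell)=\sum_{i\le\ell}p_{\pi_i}$. The plan is to recover $T$ using $O(m)$ value queries and then to read off the demand set by enumerating signatures $(x,y)=(|S\setminus T|,|S\cap T|)$. The workhorse is an algebraic simplification: substituting $u=\sqrt{|R|+k/2-|R\cap T|}$ into the case-3 formula gives $f_T(R)=(|R|+u\sqrt{k})/(\sqrt{k}+u)$, whose derivative in $u$ equals $(k-|R|)/(\sqrt{k}+u)^2$. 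Hence for any $R$ with $|R|<k$, $f_T(R)$ is strictly monotone in $|R\cap T|$ inside case~3, and strictly less than the case-2 value $\sqrt{|R|}$. Thus a single value query on such an $R$ determines whether $|R\cap T|>k/2$, and recovers $|R\cap T|$ exactly when it does.

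\emph{Step 1 (prefix sweep).} Query $f_T(\{\pi_1,\ldots,\pi_\ell\})$ for $\ell=1,\ldots,k-1$. Suppose every answer equals $\sqrt{\ell}$ (``Case~B'': $\tau_{k-1}\le k/2$). Then any case-3 candidate $S$ with $|S|<k$ satisfies $f_T(S)<\sqrt{|S|}$ and $p(S)\ge P(|S|)$, because the cheapest $|S|$-prefix has $\tau\le k/2$ (hence is case~2) and differs from $S$; so $S$ is dominated by that prefix. Case-1 candidates are dominated by $\{\pi_1,\ldots,\pi_k\}$. The demand set is therefore $\{\pi_1,\ldots,\pi_{\ell^\dagger}\}$ for $\ell^\dagger\in\arg\max_\ell(\sqrt{\min(k,\ell)}-P(\ell))$, computable from $p$ alone.

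\emph{Step 2 (identify $T$ in Case~A).} Otherwise, let $\ell^*$ be the smallest $\ell$ with $f_T(\{\pi_1,\ldots,\pi_\ell\})<\sqrt{\ell}$. Since $\tau$ increases by at most $1$ per step, one reads off $\tau_{\ell^*}=k/2+1$, $\tau_{\ell^*-1}=k/2$, and $\pi_{\ell^*}\in T$; and for $\ell\in\{\ell^*+1,\ldots,k-1\}$ the corresponding prefix query already determines $\tau_\ell$, so $\pi_\ell\in T$ iff $\tau_\ell>\tau_{\ell-1}$. To classify the remaining items, issue one further query per item: for $i<\ell^*$ query $R_i:=\{\pi_1,\ldots,\pi_{k-1}\}\setminus\{\pi_i\}$ of size $k-2$, and for $i\ge k$ query $R'_i:=(\{\pi_1,\ldots,\pi_{k-1}\}\setminus\{\pi_{\ell^*}\})\cup\{\pi_i\}$ of size $k-1$. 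In each case the value $|R\cap T|$ differs by exactly $1$ between $\pi_i\in T$ and $\pi_i\notin T$, so by the workhorse fact (together with the strict gap between cases~2 and~3 at the boundary $\tau_{k-1}=k/2+1$) the two possible values of $f_T(R)$ are distinct. The step uses $O(m)$ value queries and pins down $T$ completely.

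\emph{Step 3 (enumerate signatures).} With $T$ in hand, for every $(x,y)$ with $0\le x\le m-k$ and $0\le y\le k$, the cheapest set of this signature is the $y$ cheapest items of $T$ together with the $x$ cheapest items of $A\setminus T$; its $p$-cost is immediate from sorting and $f(x,y)$ is given by the closed-form expression. Enumerating over the $O(mk)$ signatures and returning the maximizer of $f(x,y)-p(S(x,y))$ yields a demand set. The main technical obstacle is the distinguishability check in Step~2 at the boundary $\tau_{k-1}=k/2+1$, where one of the two possible $f_T(R)$ values is the case-2 value $\sqrt{|R|}$ while the other is a case-3 value; verifying that the two are distinct is mechanical from the closed form $(|R|+u\sqrt{k})/(\sqrt{k}+u)$, but it is where the bulk of the routine calculation lives.
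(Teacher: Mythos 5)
Your proof is correct and follows essentially the same route as the paper's: sort by price, sweep prefixes of sizes $1,\ldots,k-1$, either conclude the demand is a prefix (when all answers are $\sqrt{\ell}$) or locate the first drop, reconstruct $T$ with $O(m)$ additional single-swap value queries, and finish by enumerating the $O(mk)$ signatures $(x,y)$. Two small remarks. First, you never say how the algorithm learns $k=|T|$, which it needs to set the sweep length and the signature grid; the paper obtains it from the single query $f_T(\actions)=\sqrt{k}$, which you should include since the lemma is stated as ``without knowing $T$.'' Second, the ``main technical obstacle'' you flag at the boundary $\tau_{k-1}=k/2+1$ is already dispatched by the computation you give: writing $f_T(R)=(|R|+u\sqrt{k})/(\sqrt{k}+u)$ with $u=\sqrt{|R|+k/2-|R\cap T|}$ and noting $\partial_u f=(k-|R|)/(\sqrt{k}+u)^2>0$ for $|R|<k$ shows $f_T(R)$ is strictly decreasing in $|R\cap T|$ on $\{k/2,k/2+1,\ldots\}$, with the $y=k/2$ endpoint coinciding with the case-2 value $\sqrt{|R|}$; so the two candidate values in every swap query are automatically distinct, with no extra case analysis at the case-2/case-3 seam.
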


\begin{proof}
We present an algorithm that answers a demand set using poly many value queries. 

Let $p_1,\ldots,p_n$ be the 
price vector whose demand is desired; that is, 
we want to find a set $S$ that maximizes  $f_T(S) -\sum_{j\in S} p_j$. Without loss of generality assume that $p_1 \leq p_2\leq \ldots \leq p_n$.

We first observe that the algorithm can learn $k$ (i.e., the size of $T$), by the single value query $f_T(\actions)$. Indeed, since $T \subseteq \actions$, $f_T(\actions) = \sqrt{k}$. 

Next, the algorithm issues value queries of the form $f_T([i])$ for all $i\in [k]$.
If $f_T([i]) = \sqrt{i}$ for all $i\in [k]$, then the algorithm returns the set $S$ that maximizes $f_T(S) -\sum_{j\in S} p_j$ among all these sets. This must be a demand set since $[i]$ is the cheapest set of size $i$, no set of size $i$ has value greater than $\sqrt{i}$, and there is always a demand set of size at most $k$.

Otherwise, there exists $i \in [k]$ such that $f_T([i]) \neq \sqrt{i}$. 
We show that in this case the algorithm can find the set $T$ with poly-many additional value queries. We then show that if the algorithm knows $T$ then it can find a demand set with poly-many value queries.

Let $i^\star$ be the minimal $i$ such that $f_T([i]) \neq \sqrt{i}$. By the definition of $f_T$, we get that $i^\star<k$ and $|[i^\star] \cap T |> \frac{k}{2}$.
Since $[i^\star-1] =\sqrt{i^\star-1}$ and that $i^\star-1 <k$, it must be that $|[i^\star-1] \cap T| \leq \frac{k}{2} $, which implies that $i^\star \in T$.

Then, for each  $j \notin [i^\star]$ the algorithm issues the query $f_T([i^\star-1]\cup\{j\})$. If for $j$ it holds that $ f_T([i^\star-1]\cup\{j\}) = f_T([i^\star]) $, it means that either $j\in T$, otherwise, $f_T([i^\star-1]\cup\{j\})  = \sqrt{i^\star}$ and $j \notin T$.
Then, let $j^\star$ be an arbitrary action in $i \setminus [i^\star]$ (such exists since $i^\star <k$). The for each $j\in [i^\star]$ the algorithm issues the query  $ f_T([i^\star] \cup \{j^\star\} \setminus \{j\})  $, for which   if $ f_T([i^\star] \cup \{j^\star\} \setminus \{j\})  = f_T([i^\star])$ then $j \in T $ iff $j^\star \in T$. Since whether $j^\star$ in $T$ is already known, it means that we know for each $i\in \actions$ whether it is in $T$ or not. 

Once the identity of $T$ is known, one can answer a demand query as follows.
Go over all options for $x,y$ satisfying $0 \leq x+y \leq k$, and consider the set $S$ composed of the $x$ cheapest elements in $\actions \setminus T$, and the $y$ cheapest elements in $T$. The set $S$ among these sets that maximizes $f_T(S) -\sum_{j\in S} p_j$ is a demand.

This algorithm returns a demand set in polynomial time using $O(|\actions|)$ value queries, completing the proof of the lemma.
\end{proof}

We are now ready to present the proof of Theorem~\ref{thm:upper}.

\begin{proof}[Proof of Theorem~\ref{thm:upper}]
To prove the theorem, it suffices to show that the statement holds with respect to a deterministic algorithm against a randomized input, where the set $T$ that defines the function $f_T$ is chosen uniformly at random from all sets of size $k = 2 \cdot  \lfloor \frac{n}{10}\rfloor$. 
(Note that restricting the sets $T$ that the adversary chooses from only strengthens our statement.)

By Lemma~\ref{lem:demand}, it suffices to consider a deterministic algorithm that issues only value queries, and we may assume without loss of generality that the returned set is queried (otherwise, just add this query at the end of the algorithm).

By Lemma~\ref{lem:sub-opt}, for large enough $k$, in order to get better than $\eta = 0.98$ approximation, for large enough $n$, the algorithm must ask a value query for a set $S$ for which  $|S| \leq k $ and $|S \cap T| > \frac{k}{2}$. We call such a query a {\em good} query.

Since the probability of a query $S$ to be good is increasing in the size of the set $S$ (up to size $k$), the probability that a single query is successful is upper bounded by the probability that a hypergeometric random variable with parameters $(n,k,k)$ is greater than $k/2$. For our choice of $k$ (namely, $k = 2 \cdot  \lfloor \frac{n}{10}\rfloor$), this probability is exponentially small in $n$.

We note that, since we bound the probability that the algorithm issues a good query, it is without loss of generality to assume that the queries are non-adaptive. Indeed, at any point in time, it can be assumed that none of the previous queries were good, and for queries that are not good, no information beyond this fact is learned.

In summary, every query is good with exponentially small probability, thus, by the union bound, the probability of issuing a good query is exponentially small for a sub-exponential number of queries.
This concludes the proof.
\end{proof}

We remark that our proof applies to submodular success probability functions, but does not rule out the existence of a PTAS (or FPTAS) for the special case of gross substitutes functions. Indeed, the following observation shows that our construction does not satisfy the gross substitutes condition.

\begin{observation}
For every $T\subset \actions$ of even size $k\geq 4$, the function $f_T$ is not gross substitutes.  
\end{observation}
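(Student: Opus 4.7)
My plan is to use the standard characterization of the gross substitutes property for monotone submodular set functions via the $M^\natural$-exchange axiom (due to Fujishige and Yang): $f$ is gross substitutes if and only if for every $A, B \subseteq \actions$ and every $i \in A \setminus B$, there exists $j \in (B \setminus A) \cup \{\star\}$ (where $\star$ denotes ``no element'') such that
\[
f(A) + f(B) \leq f((A \setminus \{i\}) \cup \{j\}) + f((B \cup \{i\}) \setminus \{j\}),
\]
with the convention $(A \setminus \{i\}) \cup \{\star\} = A \setminus \{i\}$ and $(B \cup \{i\}) \setminus \{\star\} = B \cup \{i\}$. I will exhibit a triple $(A, B, i)$ for which this inequality is \emph{strictly violated} for every admissible $j$, which immediately implies that $f_T$ is not gross substitutes (regardless of the precise cosmetic form of the definition used in the paper, which reduces to the standard one when $q \geq p$).

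Fix an even $k \geq 4$ and $T \subseteq \actions$ of size $k$; by the standing assumption $k < |\actions|$, we may choose $z \in \actions \setminus T$. Let $T_0 \subsetneq T$ with $|T_0| = k/2$, and set $A = T$, $B = T_0 \cup \{z\}$, and $i \in A \setminus B = T \setminus T_0$. Since $B \setminus A = \{z\}$, the only candidates for $j$ are $\star$ and $z$. Using the piecewise definition of $f_T$---together with the key observation that the case~3 formula $\sqrt{x + k/2} + (y - k/2)/(\sqrt{k} + \sqrt{x + k/2})$ evaluates to $\sqrt{k}$ precisely when $x + y = k$, so that case~3 continuously extends case~1 at the boundary---I will compute
\begin{align*}
f_T(A) &= \sqrt{k}, & f_T(B) &= \sqrt{k/2 + 1},\\
f_T(A \setminus \{i\}) &= \sqrt{k/2} + \tfrac{k/2 - 1}{\sqrt{k} + \sqrt{k/2}}, & f_T(B \cup \{i\}) &= \sqrt{k/2 + 1} + \tfrac{1}{\sqrt{k} + \sqrt{k/2 + 1}},\\
f_T((A\setminus\{i\})\cup\{z\}) &= \sqrt{k}, & f_T((B\cup\{i\})\setminus\{z\}) &= \sqrt{k/2} + \tfrac{1}{\sqrt{k} + \sqrt{k/2}}.
\end{align*}

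Substituting, the exchange deficit for $j = \star$ telescopes (using $\sqrt{k} - \sqrt{k/2} = (k/2)/(\sqrt{k}+\sqrt{k/2})$) to
\[
\frac{1}{\sqrt{k} + \sqrt{k/2}} - \frac{1}{\sqrt{k} + \sqrt{k/2 + 1}} > 0,
\]
while the deficit for $j = z$ telescopes (using $\sqrt{k/2 + 1} - \sqrt{k/2} = 1/(\sqrt{k/2+1} + \sqrt{k/2})$) to
\[
\frac{1}{\sqrt{k/2 + 1} + \sqrt{k/2}} - \frac{1}{\sqrt{k} + \sqrt{k/2}} > 0,
\]
which is strictly positive because $\sqrt{k/2 + 1} < \sqrt{k}$ whenever $k > 2$, hence for $k \geq 4$. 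Both options fail, so the $M^\natural$-exchange axiom is violated at $(A, B, i)$ and $f_T$ is not gross substitutes.

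The main obstacle is the apparent case distinction in computing $f_T(B \cup \{i\})$: the argument $(x, y) = (1, k/2 + 1)$ satisfies $x + y = k$ when $k = 4$ (case~1) but $x + y < k$ when $k > 4$ (case~3). Fortunately, the case~3 expression continuously extends case~1 along the hyperplane $x + y = k$ (a quick computation shows both evaluate to $\sqrt{k}$ there), so a single closed-form expression suffices and the algebraic argument is uniform in $k$; no separate treatment of $k = 4$ is required.
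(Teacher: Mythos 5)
Your proof is correct, but it takes a genuinely different route from the paper's. The paper invokes the triplet characterization of gross substitutes from Reijnierse--van Gellekom--Potters: for a submodular $g$, gross substitutes holds iff for all $S$ and all $\{a,b,c\}$ disjoint from $S$, $\max\{g(a,c\mid S)+g(b\mid S),\, g(b,c\mid S)+g(a\mid S)\}\geq g(a,b\mid S)+g(c\mid S)$. Picking $a,b\in T$, $c\notin T$, and $S\subseteq T\setminus\{a,b\}$ with $|S|=k/2$ reduces this to checking a single numeric inequality $f(1,\tfrac{k}{2}+1)+f(0,\tfrac{k}{2}+1)\geq f(1,\tfrac{k}{2})+f(0,\tfrac{k}{2}+2)$, which fails. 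You instead use the $M^\natural$-exchange axiom (Fujishige--Yang), picking $A=T$, $B=T_0\cup\{z\}$ with $|T_0|=k/2$, and showing both candidate exchange partners ($j=\star$ and $j=z$) leave a strictly positive deficit. I checked your six function evaluations (the observation that the case-3 formula collapses to $\sqrt{k}$ on $x+y=k$, so it covers $k=4$ uniformly, is correct) and both telescoping computations; they are all right. What the paper's route buys is a smaller ``neighborhood'': all four sets live near size $k/2$, so only cases 2 and 3 of $f$ are touched and no boundary-extension observation is needed. What your route buys is direct verifiability against the standard exchange-axiom definition of $M^\natural$-concavity, without passing through the (less widely known) triplet theorem. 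Both are legitimate; yours is a bit more computation-heavy but equally rigorous.
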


\begin{proof}
According to the triplet condition in \cite{ReijniersePG02}, a function $g:2^\actions\rightarrow \reals_{\geq 0}$ is gross substitutes if and only if it is submodular and for any set $S$, and any three elements $\{i,j,k\} \not\in S$, it holds that
\[
\max\{g(i,k \mid S) + g(j \mid S), g(j,k \mid S) + g(i \mid S)\} \geq g(i,j \mid S) + g(k \mid S).
\]
Consider two arbitrary elements $i_1,i_2\in T$, and arbitrary element $j\not\in T$, and let $S \subseteq T \setminus \{i_1,i_2\}$ be an arbitrary set of size $k/2$.

The triplet condition requires that
\[
\max\{f_T(i_1,j \mid S) + f_T(i_2 \mid S), f_T(i_2,j \mid S) + f_T(i_1 \mid S)\} \geq f_T(i_1,i_2 \mid S) + f_T(j \mid S),
\]
which is equivalent to
\[
   f(1, \frac{k}{2}+1) + f(0, \frac{k}{2} + 1) \geq f(1, \frac{k}{2}) + f(0, \frac{k}{2} + 2).
\]
However,
\begin{eqnarray*}
f(1, \frac{k}{2}+1) + f(0, \frac{k}{2} + 1) & = & 
\sqrt{\frac{k}{2}+1} + \frac{1}{\sqrt{k} + \sqrt{\frac{k}{2}+1}}+  \sqrt{\frac{k}{2}} + \frac{1}{\sqrt{k} + \sqrt{\frac{k}{2}}}\\ 
& < & 
\sqrt{\frac{k}{2}+1} + \sqrt{\frac{k}{2}} + \frac{2}{\sqrt{k} + \sqrt{\frac{k}{2}}} \\
& = & 
 f(1, \frac{k}{2}) + f(0, \frac{k}{2} + 2), \end{eqnarray*}
 which concludes the proof.
\end{proof}

\section{Guarantees Against the First Best}\label{sec:first-best}
In this section, we analyze the performance of the best contract against the first best benchmark of social welfare when ignoring incentive constraints, i.e., $\opt = \max_{S} (f(S) -c(S))$.

We first show that when there is only one agent with $m$ actions and a subadditive reward function, then the principal's utility of the optimal contract is at least  $\frac{\opt}{m}$.

\begin{proposition} \label{prop:welfare-single}
For every subadditive reward function $f:2^\actions \rightarrow \reals_{\geq 0}$ over $|A|=m$ actions and a single agent with costs $c_1,\ldots,c_m$, there exists a contract $\alpha$ that guarantees a utility for the principal of at least  $\frac{\opt}{m}$, where $\opt=\max_{S} (f(S)-c(S))$.
\end{proposition}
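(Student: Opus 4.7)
The plan is to use the envelope-theorem representation of $\mathrm{OPT}$ as an integral of the success probability as a function of the contract, then split this integral using the first times various actions enter the best response, and finally exhibit a contract whose principal utility matches one of the resulting terms. Concretely, let $S(\alpha)$ denote a best response at contract $\alpha$, and let $u_A(\alpha) = \alpha f(S(\alpha)) - c(S(\alpha))$ be the agent's utility. Since $u_A$ is the maximum over finitely many linear functions of $\alpha$, it is convex and piecewise linear with derivative $f(S(\alpha))$ wherever it is differentiable; as $u_A(0)=0$ and $u_A(1) = f(S^\star) - c(S^\star) = \mathrm{OPT}$, we obtain
\[
\mathrm{OPT} \;=\; \int_0^1 f(S(t))\,dt.
\]

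Next, for each action $j$ let $\alpha^j \in [0,1]$ be the smallest contract at which $j$ lies in some best response (set $\alpha^j = 1$ if $j$ never enters). Relabel the actions so that $0 = \alpha^{j_0} \leq \alpha^{j_1} \leq \dots \leq \alpha^{j_m} \leq \alpha^{j_{m+1}} = 1$. By definition, for every $t \in [\alpha^{j_i},\alpha^{j_{i+1}})$, no action $j_k$ with $k > i$ can appear in $S(t)$, so $S(t) \subseteq \{j_1,\dots,j_i\}$. Monotonicity of $f$ and then subadditivity give $f(S(t)) \leq f(\{j_1,\dots,j_i\}) \leq \sum_{k=1}^i f(\{j_k\})$. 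Substituting into the integral and swapping the order of summation,
\[
\mathrm{OPT} \;\leq\; \sum_{i=0}^{m}(\alpha^{j_{i+1}}-\alpha^{j_i})\sum_{k=1}^i f(\{j_k\}) \;=\; \sum_{k=1}^m (1-\alpha^{j_k})\,f(\{j_k\}).
\]

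To finish, observe that at contract $\alpha^{j_k}$ the action $j_k$ is in the best response, so by monotonicity $f(S(\alpha^{j_k})) \geq f(\{j_k\})$, giving principal utility at least $(1-\alpha^{j_k})f(\{j_k\})$. Hence the optimal linear contract achieves at least $\max_k (1-\alpha^{j_k})f(\{j_k\}) \geq \tfrac{1}{m}\sum_{k=1}^m (1-\alpha^{j_k}) f(\{j_k\}) \geq \mathrm{OPT}/m$. The only delicate point is justifying the envelope identity cleanly when multiple best responses coexist at some $\alpha$; this is handled by noting that $u_A$ is piecewise linear, so the identity holds for \emph{any} measurable selection of best responses $S(\cdot)$ (the differences occur on a finite set). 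With this in hand, the rest is bookkeeping, and the core of the argument is the subadditivity step, which is exactly the property we have assumed.
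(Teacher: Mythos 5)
Your proof is correct and follows essentially the same strategy as the paper's: write $\mathrm{OPT}$ as the integral of the success probability over $\alpha \in [0,1]$, split the integral at the points where new actions first enter a best response, bound each piece via subadditivity, and then observe that the entry contract $\alpha^j$ together with any best response containing $j$ gives the principal at least $(1-\alpha^j)f(\{j\})$, so the maximum over $j$ is at least $\mathrm{OPT}/m$. The only (cosmetic) difference is that the paper's formal proof indexes the decomposition by critical $\alpha$'s and groups of actions entering together (the $T_i\setminus T_{i-1}$ sets), whereas you index directly by individual actions; your cleaner formulation actually tracks the description in the paper's ``Our Techniques'' section even more closely than the paper's own Section~6 writeup.
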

\begin{proof}
For this proof, we consider the definition of critical points defined in \cite{DuttingEFK21}.
A critical point is a value of $\alpha \in (0,1)$ for which the best responses to contract $\alpha +\epsilon$ are different than the best responses to contract $\alpha -\epsilon$ for every $\epsilon>0$.
In \cite{DuttingEFK21}, it was shown that this set is always finite. Furthermore, if we order these critical points and denote them by $(0=\alpha_0<\alpha_1,...< \alpha_{k}\leq 1)$, then there exists a sequence of action sets  
$S_0 , S_1, ... ,S_k$ such that for each $i =1,\ldots k $,  the sets $S_i,S_{i-1}$ are best responses to contract $\alpha_i$, the set $S_0$ is the best response contract $\alpha_0=0$ that maximizes the principal utility (i.e., $S_0 = \{ j\in\actions \mid c_j=0\}$). Moreover, the set $S_k$ is the best response under contract $\alpha=1$, it holds that $f(S_0) < f(S_1) <\ldots < f(S_k)$, and for every $i=1,\ldots,k$ it holds that  \begin{equation}
  \alpha_i = \frac{c(S_i) - c(S_{i-1})}{f(S_i) - f(S_{i-1})} . \label{eq:alpha-i}
\end{equation} 

Based on these observations, defining $\alpha_{k+1} = 1$, the social welfare can be rewritten by
\begin{eqnarray}    
\opt  & =& f(S_k) -c(S_k) = f(S_0) + \sum_{i=1}^k f(S_i) - f(S_{i-1}) - (c(S_i)- c(S_{i-1})) \nonumber \\ & \stackrel{\eqref{eq:alpha-i}}{=} & f(S_0) + \sum_{i=1}^k f(S_i) - f(S_{i-1}) - \alpha_i(f(S_i)- f(S_{i-1})) = f(S_0) + \sum_{i=1}^k (1-\alpha_i)(f(S_i) - f(S_{i-1})) \nonumber \\ &= & f(S_0) + \sum_{i=1}^k (\sum_{t=i}^k \alpha_{t+1}-\alpha_t)(f(S_i) - f(S_{i-1}))  = f(S_0) + \sum_{t=1}^k\sum_{i=1}^t ( \alpha_{t+1}-\alpha_t)(f(S_i) - f(S_{i-1}))      \nonumber \\ & = & f(S_0) + \sum_{t=1}^k( \alpha_{t+1}-\alpha_t)(f(S_t )-f(S_0))  =  \sum_{t=0}^k( \alpha_{t+1}-\alpha_t)f(S_t ),   \nonumber
\end{eqnarray}
where the first equality is since $S_k$ is the best response under contract $\alpha=1$, which by definition maximizes the social welfare; the second equality is by telescoping and since $c(S_0) =0$.

Let $\ell_1=0$ if $S_0 \neq \emptyset$, otherwise let $\ell_1 = 1$. So $S_{\ell_1}$ is the first non-empty set. For $i>1$ let $\ell_i = \min\{ \ell > \ell_{i-1} \mid   S_\ell  \not\subseteq \cup_{j=1}^{\ell_{i-1}}   S_j \}$ (if no such $\ell$ exists, then let $\ell_i=\infty$). That is, $\ell_i$ is the smallest $\ell \geq \ell_{i-1}$ such that $S_\ell$ contains an action that is not contained in any $S_1, \ldots, S_{\ell_{i-1}}$. Let $d$ be the last index for which $\ell_d \neq \infty$, and let $T_i = \cup_{j=1}^{\ell_i} S_j$. Note that $d \leq m$.
For convenience, let $\alpha_{\ell_{d+1}} =1$, and let $T_0 = \emptyset$. Then, we can bound the expression for social welfare derived above by
\begin{eqnarray}
\opt = \sum_{t=0}^k( \alpha_{t+1}-\alpha_t)f(S_t ) & = & \sum_{i=1}^d\sum_{t=\ell_i}^{\ell_{i+1}-1}( \alpha_{t+1}-\alpha_t)f(S_t ) 
\leq 
\sum_{i=1}^{d} (\alpha_{\ell_{i+1}}-\alpha_{\ell_i}) f(T_i)  \nonumber \\
& \leq &  \sum_{i=1}^{d} (\alpha_{\ell_{i+1}}-\alpha_{\ell_i}) \sum_{j=1}^i f(T_j \setminus T_{j-1})   = \sum_{j=1}^d f(T_j \setminus T_{j-1}) \sum_{i=j}^{d} (\alpha_{\ell_{i+1}}-\alpha_{\ell_i}) \nonumber \\
& = & \sum_{j=1}^d f(T_j \setminus T_{j-1})  (1-\alpha_{\ell_j}), \label{eq:bound-sw}
\end{eqnarray}
where the first inequality is since for every $t < \ell_{i+1}$, $S_t \subseteq T_i$; and the second inequality is by subadditivity.
 
On the other hand, since $ \cup_{j=1}^{\ell_i} S_j =T_i$, and $\cup_{j=1}^{\ell_i-1} S_j =\cup_{j=1}^{\ell_{i-1}} S_j =T_{i-1}$
we get that $T_i \setminus T_{i-1} \subseteq S_{\ell_i}$, which by combining with 
monotonicity of $f$ it implies that $f(S_{\ell_i}) \geq f(T_i \setminus T_{i-1})$.

So, in total, the best contract among $\alpha_{\ell_1}, \alpha_{\ell_d}$ gives a principal utility of
\[
\max_{i \in [d]} (1-\alpha_{\ell_i})f(S_{\ell_i}) \geq \frac{1}{d} \sum_{i = 1}^d (1-\alpha_{\ell_i})  f(T_i \setminus T_{i-1}) \geq \frac{1}{d} \opt.
\]
Since $d \leq m$, this shows the claim.
\end{proof}

Notably, Proposition~\ref{prop:welfare-single} is tight. 
Indeed,
it was shown in \cite{DuttingRT19} that even in the non-combinatorial model (which corresponds to $f$ being of the form $f(S) = \max_{j \in S} f(j)$), the gap between the social welfare and the maximum principal's utility can be at least $m$.

We next show how to generalize Proposition~\ref{prop:welfare-single} to the case of multiple agents.
\begin{theorem}\label{thm:gap-multi-best}
        For every subadditive reward function $f:2^\actions \rightarrow \reals_{\geq 0}$ over $|A|=m$ actions and multiple agents with action costs $c_1,\ldots,c_m$, there exists a contract $\contract$ with a corresponding equilibrium in which the utility for the principal is at least $\frac{\opt}{m}$.
\end{theorem}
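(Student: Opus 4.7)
The plan is to reduce the multi-agent problem to the single-agent case of Proposition~\ref{prop:welfare-single} by isolating one agent at a time, and then picking the agent whose single-agent bound is strongest.

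First, I would set up a subadditivity-based split of $\opt$. Let $S^\star$ attain $\opt = f(S^\star) - c(S^\star)$, write $S^\star_i = S^\star \cap \actions_i$ and $m_i = |\actions_i|$, so that $\sum_{i \in \agents} m_i = m$. Additivity of $c$ and subadditivity of $f$ give
\[
\opt \;=\; f(S^\star) - c(S^\star) \;\leq\; \sum_{i \in \agents} \bigl(f(S^\star_i) - c(S^\star_i)\bigr) \;\leq\; \sum_{i \in \agents} \opt_i,
\]
where $\opt_i := \max_{S \subseteq \actions_i}(f(S) - c(S)) \geq 0$ is the welfare optimum of the single-agent instance obtained by restricting $f$ to $2^{\actions_i}$ (this restriction is still subadditive, monotone, and normalized). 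Combined with the elementary inequality $\max_i \opt_i/m_i \geq \sum_i \opt_i / \sum_i m_i$ (valid whenever the numerators are non-negative and the denominators positive), this already yields $\max_{i \in \agents} \opt_i/m_i \geq \opt/m$.

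Next I would apply Proposition~\ref{prop:welfare-single} to the single-agent instance on $\actions_i$, obtaining a contract $\alpha^{(i)} \in [0,1]$ whose single-agent best response $S^{(i)} \subseteq \actions_i$ yields principal utility $(1-\alpha^{(i)})f(S^{(i)}) \geq \opt_i/m_i$.

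Finally I would lift each $\alpha^{(i)}$ to a multi-agent contract $\contract^{(i)}$ by placing weight $\alpha^{(i)}$ on agent $i$ and weight $0$ on every other agent, and exhibit $S = (S^{(i)}, \emptyset, \ldots, \emptyset)$ as a Nash equilibrium: every agent $i' \neq i$ has $\alpha^{(i)}_{i'} = 0$ and so achieves utility $-c(S_{i'}) \leq 0$, maximized (weakly) at $S_{i'} = \emptyset$ regardless of what agent $i$ does; and agent $i$, facing $S_{-i} = \emptyset$, is precisely in the single-agent situation that defines $S^{(i)}$. The principal's utility in this equilibrium is $(1-\alpha^{(i)})f(S^{(i)}) \geq \opt_i/m_i$, and choosing $i$ to maximize $\opt_i/m_i$ yields a contract achieving principal utility at least $\opt/m$.

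I do not expect a real obstacle here: the non-monotonicity phenomenon emphasized earlier (e.g., Example~\ref{example:more-or-less}) is entirely sidestepped because every non-selected agent is paid nothing, which makes ``do nothing'' weakly dominant for them and lets agent $i$ genuinely face the single-agent problem. The only minor cases to note are $\opt \leq 0$ (the zero contract already gives principal utility $\geq 0 \geq \opt/m$) and agents with $m_i = 0$ (which contribute $\opt_i = 0$ and drop out of the averaging inequality).
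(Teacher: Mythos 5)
Your proof is correct and takes essentially the same route as the paper: decompose $\opt$ across agents using subadditivity of $f$ and additivity of $c$, invoke Proposition~\ref{prop:welfare-single} on each single-agent subinstance, and select the agent with the best ratio $\opt_i/m_i$. The only cosmetic difference is that the paper phrases the averaging step probabilistically (choose agent $i$ with probability $m_i/m$ and take the best realization) whereas you invoke the mediant inequality directly; you are also somewhat more explicit in verifying that $(S^{(i)},\emptyset,\ldots,\emptyset)$ is a genuine Nash equilibrium under the lifted contract, which the paper leaves implicit.
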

\begin{proof}
    For every agent $i$, let $m_i = |A_i|$.
    Let $S^\star$ be the set that maximizes the social welfare.
    By Proposition~\ref{prop:welfare-single} we know that for every agent $i$ exists contract $\alpha_i$ that guarantees for the principal a utility of at least $\frac{f(S^\star_i)-c(S^*_i)}{m_i}$ assuming that all the other agents do nothing.
    Consider a distribution over contracts, which chooses agent $i$ with probability $\frac{m_i}{m}$, offering him a contract $\alpha_i$ and all other agents receive $0$.
    Under the best equilibrium of each such contract, we get that the expected utility of the principal is at least \begin{equation} 
     \sum_i \frac{m_i}{m}\frac{f(S^\star_i)-c(S^*_i)}{m_i} = \sum_i \frac{f(S^\star_i)-c(S^*_i)}{m}  \geq \frac{f(S^\star)-c(S^\star)}{m} = \frac{\opt}{m}. \label{eq:welfare}
     \end{equation}
    Thus, there exists $i^\star$ such that the best contract (where only agent $i^\star$ takes actions and receive payments) achieves a principal's utility of at least $\frac{\opt}{m}$, which concludes the proof.
\end{proof}

Recall that if $f$ is submodular our algorithm ensures (as stated in Theorem~\ref{theorem:nolargeorsingleagent}) that \emph{every} equilibrium of the computed contract is a constant-factor approximation of the \emph{best} equilibrium of the \emph{best} contract. As an immediate corollary of Theorem~\ref{thm:gap-multi-best}, we therefore obtain the following bound that also lower-bounds the worst equilibrium.
\begin{corollary}
        \label{thm:gap-multi-submod}
        For every submodular reward function $f:2^\actions \rightarrow \reals_{\geq 0}$ over $|A|=m$ actions and multiple agents with action costs $c_1,\ldots,c_m$,  there exists a contract $\contract$ for which every corresponding equilibrium guarantees a utility for the principal of at least $\Omega(\frac{\opt}{m})$.
\end{corollary}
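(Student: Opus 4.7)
The plan is to derive this as an immediate consequence of two results already established in the paper: Theorem~\ref{thm:gap-multi-best} (which handles subadditive rewards, hence in particular submodular ones) and Main~Theorem~\ref{mainthm:1} (which produces a robust constant-factor approximation for submodular rewards that holds in \emph{every} equilibrium). The logic is transitivity: Theorem~\ref{thm:gap-multi-best} gives a contract whose \emph{best} equilibrium extracts at least $\opt/m$ for the principal; Main~Theorem~\ref{mainthm:1} gives a (different) contract whose \emph{worst} equilibrium is within a constant factor of the best equilibrium of the best contract. Chaining these two bounds yields the desired worst-equilibrium lower bound of $\Omega(\opt/m)$.

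More concretely, let $f$ be submodular, and note that submodularity implies subadditivity, so Theorem~\ref{thm:gap-multi-best} applies: there exists some contract $\contract^\circ$ with a corresponding equilibrium $S^\circ$ such that $u_p(S^\circ, \contract^\circ) \geq \opt/m$. Letting $(\contract^\star, S^\star)$ denote the globally optimal contract-equilibrium pair (whose existence is guaranteed by Proposition~\ref{prop:sup-max}), we immediately have
\[
u_p(S^\star, \contract^\star) \;\geq\; u_p(S^\circ, \contract^\circ) \;\geq\; \frac{\opt}{m}.
\]

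Now apply Main~Theorem~\ref{mainthm:1}: there is a contract $\contract$, computable in polynomial time with poly-many value and demand queries, such that every equilibrium $S' \in \nash{\contract}$ satisfies
\[
u_p(S', \contract) \;\geq\; \kappa \cdot u_p(S^\star, \contract^\star)
\]
for an absolute constant $\kappa > 0$. Combining the two displays yields $u_p(S', \contract) \geq \kappa \cdot \opt/m = \Omega(\opt/m)$ for every $S' \in \nash{\contract}$, which is exactly the statement of the corollary.

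There is essentially no obstacle here: the work has already been done in Theorem~\ref{thm:gap-multi-best} (existence of a good benchmark contract) and in Main~Theorem~\ref{mainthm:1} (robustification from best-equilibrium to worst-equilibrium at the cost of only a constant factor). The only mild subtlety is to make explicit that the ``$\Omega$'' absorbs the constant $\kappa$ from Main~Theorem~\ref{mainthm:1}, and to note that the contract witnessing the corollary is the one produced by Main~Theorem~\ref{mainthm:1}, not the one from Theorem~\ref{thm:gap-multi-best} (the latter is used only to lower-bound the benchmark $u_p(S^\star,\contract^\star)$).
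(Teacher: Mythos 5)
Your proof is correct and matches the paper's own argument: the paper likewise derives the corollary by combining Theorem~\ref{thm:gap-multi-best} (as a lower bound on the benchmark $u_p(S^\star,\contract^\star)$, using that submodular implies subadditive) with the robust, every-equilibrium constant-factor guarantee of the algorithm (Main~Theorem~\ref{mainthm:1} / Theorem~\ref{theorem:nolargeorsingleagent}). You spell out the chaining a bit more explicitly than the paper's one-line remark, but the logical content is identical.
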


For completeness, we next show that for the multiple agent case, the gap between the social welfare and the optimal principal's utility can be as large as $\Omega(n)$ even in the binary action model of \cite{DuettingEFK23} and even under additive reward functions. It was shown in \cite{cacciamani2024multi} that this gap is generally unbounded when the reward function is supermodular.  
\begin{example}
    Consider a setting with $n$ agents with binary actions where the reward function  $f:2^{\actions} \rightarrow \reals_{\geq 0} $ is $f(S) =\frac{|S|}{n}$ while the cost for every agent is $\frac{1}{2n}$.
    The social welfare of this instance is when all agents exert effort leading to a welfare of $\frac{n}{n}-n\cdot \frac{1}{2n} =\frac{1}{2}$. However, no contract can guarantee the principal a utility of more than $\frac{1}{2n}$ as the principal needs to pay each agent at least $\alpha_i \geq \frac{1}{2}$ for him to exert effort, thus the principal cannot incentivize more than one agent while having a positive utility. The optimal contract that incentivizes at most one agent achieves a principal's utility of  $\frac{1}{2n}$.
\end{example}

If we assume $f$ only be subadditive but not necessarily submodular, the gap to the welfare under worst-case equilibrium can indeed be larger. We first give a positive result, showing that the gap is no larger than $O(n \cdot m)$.
\begin{proposition}
        For every subadditive reward function $f:2^\actions \rightarrow \reals_{\geq 0}$ over $|A|=m$ actions and $n$ agents with action costs $c_1,\ldots,c_m$, there exists a contract $\alpha$ for which every corresponding equilibrium guarantees a utility for the principal of at least $\Omega(\frac{\opt}{n\cdot m})$.
\end{proposition}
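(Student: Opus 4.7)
The plan is to extend Theorem~\ref{thm:gap-multi-best} to the worst-equilibrium regime by using the same agent-by-agent decomposition, paying an additional factor of $n$ for going from the best equilibrium to every equilibrium.

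First I would find a single ``heavy'' agent to target. For each agent $i\in\agents$, let $\opt_i=\max_{T\subseteq A_i}\bigl(f(T)-c(T)\bigr)$ denote the welfare the principal could extract from agent $i$'s actions in isolation. Subadditivity of $f$ applied iteratively to $S^\star=\bigcupdot_i S^\star_i$ gives $f(S^\star)\leq\sum_i f(S^\star_i)$, and additivity of $c$ gives $c(S^\star)=\sum_i c(S^\star_i)$, so
\[
\sum_{i\in\agents}\opt_i\;\geq\;\sum_{i\in\agents}\bigl(f(S^\star_i)-c(S^\star_i)\bigr)\;\geq\;f(S^\star)-c(S^\star)\;=\;\opt.
\]
By pigeonhole there exists an agent $i^\star$ with $\opt_{i^\star}\geq \opt/n$. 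Applying Proposition~\ref{prop:welfare-single} to the single-agent problem defined by the restriction of $f$ to $A_{i^\star}$ yields a linear contract $\alpha^{(i^\star)}\in[0,1)$ and best response $S^\dagger\subseteq A_{i^\star}$ with $(1-\alpha^{(i^\star)})\,f(S^\dagger)\geq \opt_{i^\star}/m_{i^\star}\geq \opt/(nm)$.

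Next I would lift this to the multi-agent contract $\contract$ that sets $\alpha_{i^\star}=\alpha^{(i^\star)}$ (up to a small strict-incentive perturbation) and $\alpha_{i'}=0$ for all $i'\neq i^\star$. In any equilibrium $S$, each agent $i'\neq i^\star$ can only take zero-cost actions (any positive-cost action would yield them strictly negative utility), and agent $i^\star$ best-responds to $S_{-i^\star}$. The target bound is $u_p(S,\contract)=(1-\alpha^{(i^\star)})\,f(S)=\Omega(\opt/(nm))$, which since $f(S)\geq \max\{f(S_{i^\star}),f(S_{-i^\star})\}$ by monotonicity, is certified as soon as either $f(S_{i^\star})$ or $f(S_{-i^\star})$ is a constant fraction of $f(S^\dagger)$. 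I would therefore split into two cases: (i) if agent $i^\star$'s best response $S_{i^\star}$ still has $f(S_{i^\star})\geq \tfrac{1}{2}f(S^\dagger)$ then we are done; (ii) otherwise, agent $i^\star$'s incentive constraint against the alternative deviation $S^\dagger$ (using monotonicity $f(S^\dagger\cup S_{-i^\star})\geq f(S^\dagger)$) rearranges to a lower bound on $f(S_{-i^\star})$, from which $f(S)\geq f(S_{-i^\star})=\Omega(f(S^\dagger))$ would follow.

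The main obstacle is case (ii). The naive derivation only gives $f(S)\geq f(S^\dagger)-c(S^\dagger)/\alpha^{(i^\star)}$, which together with the individual rationality $c(S^\dagger)\leq \alpha^{(i^\star)} f(S^\dagger)$ only yields the trivial bound $f(S)\geq 0$; this is why the Doubling Lemma was pivotal in the submodular case and why we pay the extra factor of $n$ here. To close the gap, I would strengthen the reference contract used for the single-agent subproblem so that the agent's utility at $S^\dagger$ strictly dominates the empty set by an additive slack proportional to $\opt_{i^\star}/m_{i^\star}$---this is achievable by tracking the ``new-action'' partition in the proof of Proposition~\ref{prop:welfare-single} and by boosting $\alpha^{(i^\star)}$ by a constant factor that still leaves $1-\alpha^{(i^\star)}=\Omega(1-\alpha^{(i^\star)}_{\mathrm{orig}})$. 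Carrying this slack through the rearrangement of the incentive inequality then gives $f(S_{-i^\star})\geq \Omega(f(S^\dagger))$ in case (ii), completing the $\Omega(\opt/(nm))$ bound in every equilibrium.
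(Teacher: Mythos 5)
You correctly identify the overall structure (reduce to a single agent, apply Proposition~\ref{prop:welfare-single}, boost the contract, split into cases), and you also correctly diagnose the main obstacle: the naive rearrangement of the incentive constraint against $S^\dagger$ only gives $f(S)\geq f(S^\dagger)-c(S^\dagger)/\alpha_{i^\star}$, which is vacuous by IR. Unfortunately the fix you propose does not close this gap.

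The issue with the ``additive slack'' plan is quantitative. Any slack $\Delta$ you can guarantee at $S^\dagger$ by boosting the contract from $\alpha^\star$ to $\alpha_{i^\star}$ (while keeping $\alpha_{i^\star}\leq 1$ and $1-\alpha_{i^\star}=\Omega(1-\alpha^\star)$) is of the form $\Delta=\Theta\bigl((1-\alpha^\star)\,f(S^\dagger)\bigr)$, since $\alpha_{i^\star} f(S^\dagger)-c(S^\dagger) = (\alpha_{i^\star}-\alpha^\star)f(S^\dagger) + (\alpha^\star f(S^\dagger)-c(S^\dagger))$ and the second summand can be exactly $0$. Carrying this through the equilibrium IC gives $f(S)\geq \Delta/\alpha_{i^\star}$, hence a principal utility of $(1-\alpha_{i^\star})\Delta/\alpha_{i^\star} = \Theta\bigl((1-\alpha^\star)^2 f(S^\dagger)\bigr)$ rather than the needed $\Theta\bigl((1-\alpha^\star)f(S^\dagger)\bigr)$. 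The same loss of a factor $\tfrac{1-\alpha^\star}{1+\alpha^\star}$ appears if you instead push the IC to ``lower-bound $f(S_{-i^\star})$'' as in your case (ii): one only gets $f(S_{-i^\star})\geq \frac{1-\alpha^\star}{2(1+\alpha^\star)}f(S^\dagger)$, which is not $\Omega(f(S^\dagger))$ when $\alpha^\star\to 1$. There is no ``new-action partition'' slack to rescue this, because Proposition~\ref{prop:welfare-single} may already be tight at an $\alpha^\star$ arbitrarily close to $1$.

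The second, related, missing ingredient is that your plan never controls $f(S_{-i^\star})$ from \emph{above}. When only one agent receives a positive share, the others can still play any of their zero-cost actions, so $S_{-i^\star}\subseteq A^0$, and $f(S_{-i^\star})$ can be as large as $f(A^0)$. The paper's proof handles this with an explicit case split on $\max_i f(A_i^0)$: if $f(A^0_{i_0^\star})\geq \tfrac{\opt}{4nm}$ the trivial contract $\vec{\epsilon}$ on that agent already achieves $\Omega(\tfrac{\opt}{nm})$; otherwise $f(A^0)\leq n\cdot f(A^0_{i_0^\star})<\tfrac{\opt}{4m}$ and one runs a \emph{contradiction} argument: assume $f(S)<\tfrac12 f(S^\dagger)$ and show agent $i^\star$ would then prefer to deviate to $S^\dagger$ under the boosted contract $\alpha_{i^\star}=\tfrac{1+\alpha^\star}{2}$. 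Notice this uses $f(S_{-i^\star})\leq f(A^0)<\tfrac{\opt}{4m}$ as an \emph{upper} bound on the other agents' contribution, and the case-split hypothesis $f(S)<\tfrac12 f(S^\dagger)$ (not $f(S_{i^\star})<\tfrac12 f(S^\dagger)$) to absorb the $\tfrac{1-\alpha^\star}{2}f(S_{i^\star})$ term --- two moves in the opposite direction from the one you sketch. Without both the zero-cost case split and the contradiction structure, the argument does not go through when $\alpha^\star$ is close to $1$ or when $f(A^0)$ is nonnegligible.

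As a minor point, the paper picks $i^\star=\arg\max_i \opt_i/m_i$, giving $\opt_{i^\star}/m_{i^\star}\geq\opt/m$, so the $n$ in the final bound really comes from bounding $f(A^0)$ by $n\cdot f(A^0_{i_0^\star})$, not from a pigeonhole over $\opt_i$.
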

\begin{proof}
    Let $A^0$ be the set of actions with a cost of $0$, let $A^0_i = A_i \cap A^0$, let $m_i = |A_i|$, let $i_0^\star = \arg\max_{i \in \agents} f(A_i^0)$, let $\opt_i = \max \{f(S_i) -c(S_i) \mid S_i \subseteq A_i\}$,  and let $i^\star = \arg\max_{i \in \agents} \frac{\opt_i}{m_i}$. 
    We know by Equation~\eqref{eq:welfare} that $\opt_{i^\star} \geq \frac{\opt}{m}$.
    We consider two cases:
    If $f(A_{i^\star_0}^0) \geq \frac{\opt}{4n\cdot m}$, then we analyze the equilibria of contract $\contract$ where agent $i^\star_0$ is offered $\epsilon >0  $, and all other agents are offered $0$.
    Let $S$ be some equilibrium with respect to $\contract$.
    Assume towards contradiction that $f(S) < f(A_{i^\star_0}^0)$.  Then agent $i^\star_0$ can deviate to $A_{i^\star_0}^0$  which will strictly increase his utility since $$\epsilon \cdot f(S_{-i^\star_0} \cup A_{i^\star_0}^0 ) -c(A_{i^\star_0}^0) \geq \epsilon \cdot f( A_{i^\star_0}^0 )  > \epsilon \cdot f(S)  \geq \epsilon \cdot f(S) -c(S_i) . $$
    So in this case, the principal's utility under equilibrium $S$, is at least $ (1-\epsilon) f(A_{i^\star_0}^0) = \Omega(\frac{\opt}{n\cdot m}) $.

    Now consider the case that $f(A_{i^\star_0}^0) < \frac{\opt}{4n\cdot m}$. We know by Proposition~\ref{prop:welfare-single} that if assuming that all agents but $i^\star$ do nothing, then there exists a contract $\alpha_{i^\star}^\star$ with a corresponding best response $S^\star_{i^\star}$ that gives the principal a utility of at least $\frac{\opt_{i^\star}}{m_{i^\star}}$. Consider a contract $\contract$ where $\alpha_{i^\star} = \frac{1+\alpha^\star_{i^\star}}{2}$, and   $\alpha_i = 0$ for $i \neq i^\star$. Let $S$ be an equilibrium with respect to $\contract$. 
    Assume towards contradiction that $f(S) < \frac{f(S_{i^\star}^\star)}{2}$ then we know that 
    \begin{equation}
        \alpha_{i^\star}^\star f(S_{i^\star}^\star) -c(S_{i^\star}^\star) \geq \alpha_{i^\star}^\star f(S_{i^\star}) -c(S_{i^\star}). \label{eq:alpha-star}
    \end{equation} 
    On the other  hand, the utility of agent $i^\star$ satisfies 
    \begin{eqnarray}
         \alpha_{i^\star} f(S) -c(S_{i^\star})  & \leq  & \alpha_{i^\star} (f(S_{i^\star}) + f(S_{-i^\star})) -c(S_{i^\star})  \nonumber \\ & \leq & f(S_{-i^\star}) +  \alpha_{i^\star} f(S_{i^\star})  -c(S_{i^\star}) \nonumber \\ 
         & =  &f(S_{-i^\star}) + \frac{1-\alpha_{i^\star}^\star}{2}f(S_{i^\star})  +  \alpha_{i^\star}^\star f(S_{i^\star})  -c(S_{i^\star}) \nonumber \\ 
         & \stackrel{\eqref{eq:alpha-star}}{\leq} &  f(S_{-i^\star}) + \frac{1-\alpha_{i^\star}^\star}{2}f(S_{i^\star})  +  \alpha_{i^\star}^\star f(S_{i^\star}^\star) -c(S_{i^\star}^\star) \nonumber \\ 
         & \leq & f(A^0) +  \frac{1-\alpha_{i^\star}^\star}{2}f(S)  +  \alpha_{i^\star}^\star f(S_{i^\star}^\star) -c(S_{i^\star}^\star) \nonumber \\
& < & n \cdot f(A_{i_0^\star}^0) +  \frac{1-\alpha_{i^\star}^\star}{2}\frac{f(S_{i^\star}^\star)}{2}  +  \alpha_{i^\star}^\star f(S_{i^\star}^\star) -c(S_{i^\star}^\star) \nonumber \\
& < &
n \cdot \frac{\opt}{4n\cdot m}+  \frac{1-\alpha_{i^\star}^\star}{2}\frac{f(S_{i^\star}^\star)}{2}  +  \alpha_{i^\star}^\star f(S_{i^\star}^\star) -c(S_{i^\star}^\star) \nonumber \\
& \leq &
 \frac{(1-\alpha_{i^\star}^\star)f(S_{i^\star}^\star)}{4}+  \frac{1-\alpha_{i^\star}^\star}{2}\frac{f(S_{i^\star}^\star)}{2}  +  \alpha_{i^\star}^\star f(S_{i^\star}^\star) -c(S_{i^\star}^\star) \nonumber \\ 
 & = &    \alpha_{i^\star} f(S_{i^\star}^\star) -c(S_{i^\star}^\star), \nonumber
    \end{eqnarray}
    where the first inequality is by subadditivity; the second inequality is since  $\alpha_{i^\star} \leq 1$; the fourth inequality is since agents that are not $i^\star$ don't take costly actions since their contract is $0$ and by monotonicity of $f$; the fifth inequality is by subadditivity of $f$, and by our assumption towards contradiction; the sixth inequality holds since we consider the case where $f(A_{i^\star_0}^0) < \frac{\opt}{4n\cdot m}$; the seventh inequality is  by Proposition~\ref{prop:welfare-single}.

    Thus, it holds that $f(S) \geq \frac{f(S_{i^\star}^\star)}{2}$, and the principal's utility is at least $$ (1-\alpha_{i^\star}) f(S) \geq \frac{1-\alpha_{i^\star}^\star}{2} \frac{f(S_{i^\star}^\star)}{2} = \Omega \left(\frac{\opt}{n\cdot m}\right) , $$
    which concludes the proof.
\end{proof}

Finally, we show that for subadditive reward functions (or even XOS) the gap between the welfare and the worst case equilibrium under any contract can be quadratic.
\begin{example}\label{ex:xos}
Consider an instance with $n=2k$ agents denoted by $\agents$, and binary actions.
The reward function $f:2^\actions \rightarrow \reals_{\geq 0}$ is $$ f(S) = \frac{\max\{|S|, k^2 \cdot |S \setminus [k]|, 2 \cdot \ind{S \neq \emptyset}\}}{k^3}.$$

The function is XOS since it is the maximum over the additive functions $\frac{|S |}{k^3}$, $  \frac{k^2 \cdot|S \setminus [k]|}{k^3}$, and the function $g_i(S) = \frac{2 \cdot \ind{i \in S}}{k^3}$ for every $i$.
The costs of exerting effort for agents in $[k]$ are $0$, and the costs of agents in $[n]\setminus [k]$ are $\frac{k^2-k/2}{k^3}$.

The social welfare of this instance is when all agents exert effort, leading to a social welfare of $$\opt = 1 - k \cdot \frac{k^2-k/2}{k^3} = \frac{1}{2k}.$$
On the other hand, let $\contract$ be some contract.
If for all agents $i\in\agents \setminus [k]$ it holds that $\alpha_i <\frac{2k-1}{2k}$, then the set  $\{1\}$ is an equilibrium. To see this, observe that agent $i \in \actions\setminus [ k]$ does not want to deviate since 
$$ \alpha_i\cdot f(\{i,1\}) -c(i) \leq \frac{2k-1}{2k} \frac{1}{k^2} - \frac{k^2-k/2}{k^3} = 0 \leq \alpha_i \cdot f(1).$$
Agent $i \in [k] \setminus \{1\}$ does not want to deviate since $$ \alpha_i \cdot  f(\{i,1\}) -c(i) = \alpha_i  \cdot f(1) .$$
Thus, in this case, there is an equilibrium with a principal's utility of at most $f(1) = \frac{2}{k^3}=O\left(\frac{\opt}{n^2}\right)$.

If exists an agent $i\in\agents \setminus [k]$ with $\alpha_i \geq\frac{2k-1}{2k}$ then, there must be at most one such agent.
Now consider the equilibrium where all agents in $[k]$ exert effort. Then agent $i$ (and also all agents in $\agents\setminus [k]$) don't want to exert effort since $$ \alpha_{i} f([k]) = \alpha_i \cdot \frac{1}{2k^2}  > \alpha_i \cdot \frac{1}{k}-\frac{k^2-k/2}{k^3}= \alpha_i f([k] \cup \{i\}) -c(i) ,$$
where the inequality is since $\alpha_i\leq 1$.
Now, the utility of the principal under this equilibrium is at most $$(1-\alpha_i) f([k]) \leq \frac{1}{2k}  \cdot \frac{1}{k^2} = O\left(\frac{\opt}{n^2}\right).$$
\end{example}

\bibliographystyle{plain}
\bibliography{references}

\appendix
\section{Proof of Proposition~\ref{prop:potential-function}}\label{apx:potential-function}

Fix $\contract \in [0,1]^n$. Consider $\phi(S): 2^\actions \rightarrow \reals$, with $\phi(S) = f(S) - \sum_{i \in \agents} \sum_{j \in S_i} \frac{c_j}{\alpha_i}$. If $\alpha_i = 0$, we define $c_j/\alpha_i = \infty$ when $c_j > 0$, and we let $c_j/\alpha_i = 0$ when $c_j = 0$. 

Note that when $\alpha_i = 0$, any set of actions $S_i \subseteq \actions_i$ that contains an item $j \in S_j$ such that $c_j > 0$ is strictly dominated (e.g., by the empty set $\emptyset$). We thus assume that for agents $i$ such that $\alpha_i = 0$ all actions in $\actions_i$ have cost $c_j = 0$. Note that this in particular means that agent $i$ receives a utility of zero for all possible sets of actions $S_i \subseteq \actions_i$.

This is a somewhat subtle point, because although the agent with $\alpha_i = 0$ is indifferent which set of actions with zero cost he chooses, the principal derives different rewards from different choices; and also the strategic choices of the other agents are influenced (through the payments they receive which depend on the principal's reward).

\begin{claim}\label{cla:gen-ord-pot}
Function $\phi: 2^\actions \rightarrow \reals$ is a generalized ordinal potential function. 
\end{claim}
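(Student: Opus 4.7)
The plan is to show something slightly stronger than claimed: after the harmless preprocessing step explained right before the claim (every action of an agent with $\alpha_i = 0$ has cost zero), the function $\phi$ is an \emph{exact} potential function up to per-agent positive scaling, which in particular makes it a generalized ordinal potential.

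First, I would fix an arbitrary agent $i \in \agents$, an arbitrary profile $S_{-i}$ of the other agents, and two candidate sets $S_i, S_i' \subseteq \actions_i$. The generalized ordinal potential condition requires that $u_i(S_i', S_{-i}, \contract) > u_i(S_i, S_{-i}, \contract)$ implies $\phi(S_i', S_{-i}) > \phi(S_i, S_{-i})$. I would split into cases according to whether $\alpha_i > 0$ or $\alpha_i = 0$.

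The case $\alpha_i > 0$ is the main (but straightforward) computation. Since the terms of $\phi$ corresponding to agents $i' \neq i$ depend only on $S_{-i}$, they cancel in the difference, and we obtain
\[
\phi(S_i', S_{-i}) - \phi(S_i, S_{-i}) = \bigl( f(S_i', S_{-i}) - f(S_i, S_{-i}) \bigr) - \frac{c(S_i') - c(S_i)}{\alpha_i} = \frac{u_i(S_i', S_{-i}, \contract) - u_i(S_i, S_{-i}, \contract)}{\alpha_i},
\]
using the normalization $r = 1$. Because $\alpha_i > 0$, strict improvement in $u_i$ translates to strict improvement in $\phi$ (and equality to equality), which is in fact the exact potential property for this agent after rescaling.

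For the case $\alpha_i = 0$, I would invoke the paragraph preceding the claim: under the stated convention, every action in $\actions_i$ has zero cost, so $u_i(S_i, S_{-i}, \contract) = 0$ for all $S_i \subseteq \actions_i$. Consequently, agent $i$ never strictly prefers any deviation, and the generalized ordinal potential implication is vacuously satisfied. Combining the two cases across all agents yields the claim. The ``main obstacle'' is really just bookkeeping around the degenerate case $\alpha_i = 0$; once the convention $c_j/\alpha_i = 0$ for $c_j = 0$ is in place and dominated positive-cost actions are pruned, the argument reduces to a one-line algebraic identity.
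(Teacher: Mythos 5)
Your proof is correct and follows essentially the same line as the paper's: for $\alpha_i > 0$, dividing the utility difference by $\alpha_i$ gives exactly the $\phi$-difference, and the $\alpha_i = 0$ case is degenerate under the stated pruning convention. You state the rescaled exact-potential identity slightly more explicitly and spell out why the $\alpha_i = 0$ case is vacuous, which the paper merely asserts ("we only need to consider $i$ such that $\alpha_i > 0$"), but there is no substantive difference in the argument.
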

\begin{proof}
Recall that agent $i$'s utility is $u_i(S,\contract) = \alpha_i f(S) - \sum_{j \in S_i} c_j$. We need to show that for any $i, S_i,S'_i,S_{-i}$
\[
u_i(S_i,S_{-i},\contract) > u_i(S'_i,S_{-i},\contract) \implies \phi(S_i,S_{-i}) > \phi(S'_i,S_{-i}).
\]

We only need to consider $i$ such that $\alpha_i > 0$. Note that, since $\alpha_i > 0$, if $u_i(S_i,S_{-i},\contract)  > u_i(S'_i,S_{-i},\contract)$, then 
\[
f(S_i,S_{-i}) - \sum_{j \in S_i} \frac{c_j}{\alpha_i} > f(S'_i,S_{-i}) - \sum_{j \in S'_i} \frac{c_j}{ \alpha_i}.
\]
Thus, $u_i(S_i,S_{-i},\contract) > u_i(S'_i,S_{-i},\contract)$ implies that
\[
\phi(S_i,S_{-i}) - \phi(S'_i,S_{-i}) = \left(f(S_i,S_{-i}) - \sum_{j \in S_i} \frac{c_j}{\alpha_i}\right) - \left(f(S'_i,S_{-i}) - \sum_{j \in S'_i} \frac{c_j}{\alpha_i}\right) > 0,
\]
as claimed.
\end{proof}

Proposition~\ref{prop:potential-function} follows from Lemma 2.5 of \cite{MondererS96} which shows that any finite game that admits a generalized ordinal potential function has the finite improvement property, and hence at least one (pure) Nash equilibrium.

\section{Demand Queries and Equilibria}

Fix $\contract \in [0,1]^n$. Recall the definition of the potential function $\phi(S): 2^\actions \rightarrow \reals$, with $\phi(S) = f(S) - \sum_{i \in \agents} \sum_{j \in S_i} \frac{c_j}{\alpha_i}$. Consider a demand query with respect to $f$ at prices $p_j = c_j/\alpha_i$ for $j \in \actions_i$ such that $\alpha_i > 0$, and $p_j = 0$ for $j \in \actions_i$ such that $c_i = 0$. 

As discussed in Appendix~\ref{apx:potential-function}, We can assume that for agents such that $\alpha_i = 0$, the action sets $\actions_i$ are pruned so that $c_j = 0$ for all $j \in \actions_i.$

\begin{claim}
Any $S \subseteq \actions$ that maximizes $\phi(S)$ is a (pure) Nash equilibrium. 
\end{claim}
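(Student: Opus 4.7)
The plan is to read the claim off of the generalized ordinal potential property already established in Claim~\ref{cla:gen-ord-pot}. Since $\phi$ is a generalized ordinal potential for the game induced by $\contract$, every strict unilateral improvement by an agent strictly increases $\phi$. A maximizer of $\phi$ therefore cannot admit any such improvement, which is precisely the definition of a (pure) Nash equilibrium.

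More concretely, I would argue by contrapositive. Suppose $S \in \arg\max_T \phi(T)$ is not a Nash equilibrium. Then some agent $i \in \agents$ possesses an alternative action set $S'_i \subseteq \actions_i$ with $u_i(S'_i, S_{-i}, \contract) > u_i(S_i, S_{-i}, \contract)$. Applying Claim~\ref{cla:gen-ord-pot} to this strict improvement yields $\phi(S'_i, S_{-i}) > \phi(S_i, S_{-i}) = \phi(S)$, contradicting the assumed maximality of $\phi(S)$. Hence every maximizer of $\phi$ is a pure Nash equilibrium of the game induced by $\contract$.

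The only minor subtlety I anticipate is the handling of agents with $\alpha_i = 0$, and this is already addressed by the conventions introduced just before the claim. After pruning the action set of every zero-$\alpha_i$ agent to those actions with $c_j = 0$, each such agent obtains utility $0$ regardless of their chosen $S_i \subseteq \actions_i$, so the best-response condition holds trivially for them. Moreover, the convention $c_j/\alpha_i = +\infty$ when $c_j > 0$ and $\alpha_i = 0$ ensures that any maximizer of $\phi$ automatically avoids dominated positive-cost actions assigned to zero-$\alpha$ agents, so the contrapositive argument remains well-defined on all deviations.

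To connect back to the algorithmic role of the claim, I would observe that a demand query to $f$ at the price vector $p_j = c_j/\alpha_i$ for $j \in \actions_i$ returns a set $S$ maximizing $f(S) - \sum_{j \in S} p_j = \phi(S)$. Thus a single demand query computes a pure Nash equilibrium of $\contract$, delivering the connection between demand oracles and equilibria promised in Section~\ref{sec:model}. I do not foresee any real obstacle: the argument is essentially a one-line deduction from Claim~\ref{cla:gen-ord-pot}, with the only care point being the bookkeeping around zero-$\alpha$ agents, which the preamble to the claim has already settled.
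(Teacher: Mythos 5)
Your proof is correct and takes essentially the same route as the paper: invoke Claim~\ref{cla:gen-ord-pot} (the generalized ordinal potential property) and argue by contradiction that a maximizer of $\phi$ cannot admit a strictly improving unilateral deviation. Your handling of the $\alpha_i = 0$ bookkeeping and the remark on demand queries match the surrounding discussion in the paper.
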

\begin{proof}
Note that if $S \subseteq \actions$ maximizes $\phi(S)$, then since $\phi$ is a generalized ordinal potential function (Claim~\ref{cla:gen-ord-pot}), if for some agent $i$ there existed a $S'_i$ such that $u_i(S'_i, S_{-i},\contract) > u_i(S_i, S_{-i},\contract)$, then we would also have $\phi(S'_i,S_{-i}) > \phi(S_i,S_{-i})$, in contradiction to $S$ being a maximum of $\phi$. So  , for all $i$ and all $S'_i$, we must have $u_i(S_i,S_{-i},\contract) \geq u_i(S'_i,S_{-i},\contract)$.

Note that, in particular, since for every $i$ and any $S'_{-i}$, it holds that $u_i(\emptyset,S'_{-i},\contract) = \alpha_i f(\emptyset,S'_{-i}) \geq 0$, we must have $u_i(S_i,S'_{-i}) \geq 0$.
\end{proof}

More generally, there is a one-to-one correspondence between equilibria and local optima of the potential function, where ``local'' refers to any change in actions controlled by a single agent.

We next show that when the reward function is XOS, then there is an instance in which for every contract, there is a bad equilibrium. This is true even in the case of binary action for every agent.
\begin{proposition} \label{prop:bad-xos}
There exists an instance with XOS reward function $f:2^\actions\rightarrow \reals_{\geq 0}$  and costs $c$, for which for every contract $\contract$ there exists an equilibrium $S$ such that the principal's utility is at most $O\left(\frac{u_p(S^\star, \contract^\star)}{|\actions|}\right)$,
where $\contract^\star,S^\star$ are the best contract with the best corresponding equilibrium.
\end{proposition}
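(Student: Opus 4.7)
The plan is to verify that the XOS instance already constructed in Example~\ref{ex:xos} directly witnesses the proposition. Recall that construction has $n=2k$ binary-action agents (so $|\actions|=n=2k$), XOS reward
\[
f(S)=\frac{\max\{|S|,\,k^2|S\setminus[k]|,\,2\cdot\ind{S\neq\emptyset}\}}{k^3},
\]
zero costs for agents in $[k]$, and cost $(k^2-k/2)/k^3$ for each agent in $[n]\setminus[k]$. The example already provides an upper bound on the worst-case equilibrium utility for every contract, so it only remains to establish a matching lower bound on $u_p(S^\star,\contract^\star)$.

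For this lower bound, I would take the contract $\contract=\vec{0}$ together with the action set $S=[k]$ and argue it is a Nash equilibrium: agents in $[k]$ have zero cost and zero payment and are thus indifferent (``take action'' is a best response; for strict best responses one can perturb to $\alpha_i=\epsilon\to 0$ for $i\in[k]$ and pass to the limit), while agents in $[n]\setminus[k]$ strictly prefer not to take action since their payment is zero but their cost is positive. Under this equilibrium the principal earns $u_p([k],\vec{0})=f([k])=1/k^2$, so $u_p(S^\star,\contract^\star)\geq 1/k^2$.

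For the upper bound I invoke Example~\ref{ex:xos}'s two-case analysis: for any contract $\contract$, either (i) $\alpha_i<(2k-1)/(2k)$ for all $i\in[n]\setminus[k]$, in which case $\{1\}$ is shown to be an equilibrium with principal utility at most $f(\{1\})=2/k^3$, or (ii) some (necessarily unique, since $\sum_i\alpha_i\leq 1$) agent $i^\star\in[n]\setminus[k]$ has $\alpha_{i^\star}\geq(2k-1)/(2k)$, in which case $[k]$ is an equilibrium with utility at most $(1-\alpha_{i^\star})f([k])\leq 1/(2k^3)$. Combining the two bounds and using $|\actions|=2k$,
\[
\tfrac{2}{k^3}=\tfrac{2}{k}\cdot\tfrac{1}{k^2}\leq\tfrac{4}{|\actions|}\cdot u_p(S^\star,\contract^\star),
\]
giving the claimed $O(u_p(S^\star,\contract^\star)/|\actions|)$ bound. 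The only mild obstacle is the non-strict tie-breaking that makes $[k]$ a best response (rather than strictly best) under the zero contract, which is routinely handled by the $\epsilon$-perturbation above; otherwise the proposition follows immediately from Example~\ref{ex:xos}.
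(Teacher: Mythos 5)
Your proof is correct, but it takes a genuinely different route from the paper. The paper constructs a fresh, simpler instance specifically for this proposition: $n$ symmetric agents, each with binary actions and identical positive cost $c_i = 1/(2n^2)$, reward $f(S) = 2/n$ for singletons and $|S|/n$ otherwise. There the optimal contract pays every agent $1/(2n)$ and incentivizes all $n$ agents as the best equilibrium, yielding $u_p(S^\star, \contract^\star) \geq 1/2$, while any contract admits a singleton equilibrium with utility at most $2/n$. You instead reuse the asymmetric instance of Example~\ref{ex:xos}, where the lower bound on $u_p(S^\star,\contract^\star)$ comes from the zero-cost block $[k]$ being a free equilibrium of $\contract=\vec 0$ with value $1/k^2$. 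Both arguments share the same structural idea, namely an XOS singleton bonus ($2\cdot\ind{S\neq\emptyset}$ scaled appropriately) that makes a bad singleton equilibrium unavoidable, so neither has a substantive advantage, though the paper's instance is arguably cleaner (symmetric, one parameter $n$, no case distinction by cost). Your reuse is economical and shows Example~\ref{ex:xos} also witnesses this proposition.

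Two minor remarks. First, the $\epsilon$-perturbation you flag is unnecessary: the paper's definition of a pure Nash equilibrium requires only weak best responses ($\geq$), so the zero-cost agents in $[k]$ being indifferent under $\contract=\vec 0$ already makes $[k]$ an equilibrium. Second, your parenthetical ``necessarily unique, since $\sum_i \alpha_i \leq 1$'' is not quite justified by the model (the sum of $\alpha_i$'s can exceed $1$), but it is also not needed: if two agents in $[n]\setminus[k]$ had $\alpha_i \geq (2k-1)/(2k)$ then $\sum_i \alpha_i > 1$ and every equilibrium gives negative principal utility, which trivially satisfies the claimed bound, and otherwise the bound $(1-\alpha_{i^\star})f([k]) \leq 1/(2k^3)$ goes through as you state.
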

\begin{proof}
    Consider and instance with $n>2$ agents with binary actions where the reward function $f:2^\actions\rightarrow \reals_{\geq 0}$ is $$ f(S) = \begin{cases} \frac{2}{n} & \text{ if $\lvert S \rvert = 1$} \\ 
    \frac{\lvert S \rvert}{n} & \text{ otherwise}\end{cases} $$
    and let the cost of all agents be $c_i= \frac{1}{2\cdot n^2}$. The reward function is XOS since it is the maximum over the additive function $g(S) =\frac{|S|}{n}$ and the functions $g_i(S) = \frac{2 \cdot \ind{ i \in S}}{n}$ for every agent $i$.
    The optimal contract $\contract^\star$ is to pay each agent $\alpha_i=\frac{1}{2n}$ of the reward for which the optimal equilibrium is $S^\star =\actions$, which leads to a principal utility of $u_p(S^\star, \contract^\star) = (1-n\cdot \frac{1}{2n}) \cdot \frac{n+0}{n} = \frac{1}{2} $.
    However, for every contract $\contract$ there exists an equilibrium in which at most one agent exerts effort.
        
    To see this,  let $S$ be a set of agents of size at least two which is an equilibrium with respect to $\contract$. (We know that an equilibrium exists, and if all equilibria are at most of size one, then we are done.) 
    It means that for every $i \in S$ it holds that $\alpha_i f( i \mid S \setminus \{i\}) \geq c_i$ as otherwise the agent would prefer to not exert effort. By rearranging we get that $\alpha_i \geq \frac{c_i}{f( i \mid S \setminus \{i\})} \geq \frac{\nicefrac{1}{2n^2}}{\nicefrac{1}{n}} = \frac{1}{2n}$. Let $i^\star$ be an arbitrary agent in $S$ (such exists since $|S| \geq 2$).
    We claim that the set $S' = \{i^\star\}$ is also an equilibrium with respect to $\contract$.
    This is true since for agent $i^\star$ it holds that $\alpha_{i^\star} f( i^\star \mid S' \setminus \{i^\star\}) = \alpha_{i^\star} \frac{2}{n} \geq  \frac{1}{2n^2} =c_{i^\star}$.
    For every agent $i \neq i^\star$ it holds that $f(\{i, i^\star\}) = \frac{2}{n} = f(\{i^\star\})$ and so it holds that  $\alpha_{i} f( i \mid S' \setminus \{i\}) = \alpha_{i^\star} 0 \leq  \frac{1}{2n^2} =c_{i^\star}$.

    Thus, any contract has an equilibrium that incentivizes at most one agent, which leads to a utility of at most $\frac{2}{n}$.
\end{proof}

\begin{proposition}\label{prop:sup-max}
    For every instance defined by $f:2^\actions\rightarrow \reals_{\geq 0},c$, the supremum is attained at some contract $\contract^\star$ with a corresponding set $S^\star$.
\end{proposition}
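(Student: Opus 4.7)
The plan is to exchange the order of the two suprema and rewrite the outer optimization as a finite one. Since the set of actions $A$ is finite, there are only $2^{|A|}$ possible candidate sets $S \subseteq A$. So it suffices to show that, for each fixed $S$, the supremum of $u_p(S,\contract)$ over all contracts $\contract \in [0,1]^n$ for which $S \in \nash{\contract}$ is attained, and then to take the maximum of these (finitely many) maxima.

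The key observation is that, for each fixed $S$, the set
\[
C(S) \;=\; \{\contract \in [0,1]^n : S \in \nash{\contract}\}
\]
is a closed subset of the compact cube $[0,1]^n$. Indeed, $S \in \nash{\contract}$ is equivalent to finitely many inequalities
\[
\alpha_i\bigl(f(S) - f(S'_i, S_{-i})\bigr) \;\geq\; c(S_i) - c(S'_i) \qquad \forall i \in \agents, \; \forall S'_i \subseteq \actions_i,
\]
each of which is a closed linear constraint in $\contract$. Hence $C(S)$ is an intersection of finitely many closed half-spaces with $[0,1]^n$, so it is closed and bounded, and therefore compact.

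With this in hand, for any fixed $S$ with $C(S) \neq \emptyset$, the map $\contract \mapsto u_p(S,\contract) = \bigl(1 - \sum_{i} \alpha_i\bigr) f(S)$ is continuous (in fact linear) in $\contract$, so by the extreme value theorem it attains its maximum on $C(S)$ at some $\contract^\star(S)$. We can then write
\[
\sup_{\contract'} \sup_{S' \in \nash{\contract'}} u_p(S',\contract') \;=\; \max_{\substack{S \subseteq \actions \\ C(S) \neq \emptyset}} \; u_p\bigl(S,\contract^\star(S)\bigr),
\]
and since this is a maximum over a finite nonempty collection (it is nonempty because, e.g., $S = \emptyset$ is an equilibrium under $\contract = \vec{0}$, so $C(\emptyset) \neq \emptyset$), the supremum is attained. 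Picking an $S^\star$ achieving the outer maximum and $\contract^\star = \contract^\star(S^\star)$ yields the desired pair. No step here looks delicate: the only thing to be mindful of is that the equilibrium conditions are indeed finite (since each $\actions_i$ is finite) and involve only weak inequalities, which is exactly what gives closedness of $C(S)$.
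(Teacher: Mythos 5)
Your proof is correct and follows essentially the same route as the paper's: fix $S$, observe that the set of contracts incentivizing $S$ is compact (finitely many weak linear inequalities intersected with $[0,1]^n$), apply the extreme value theorem to the continuous map $\contract \mapsto u_p(S,\contract)$, and then take a maximum over the finitely many candidate sets $S$. You spell out the compactness and nonemptiness (via $S=\emptyset$, $\contract=\vec 0$) a bit more explicitly than the paper, which is a welcome addition but not a different argument.
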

\begin{proof}
    For every set of actions $S$, the set of contracts that incenticizes it $T_S =\{ \contract \mid S \in \nash{\contract}\}$ is the set of actions that satisifirs for every $i$ and $S_i'\subseteq A_i$ that $\alpha_i f(S) -c(S_i) \geq \alpha_i f(S_i',S_{-i}) -c(S_i')$. 
    Since when fixing the actions of the agents, the principal's utility is a continuous function, and the set   $T_S$ is compact, we get by the Extreme Value Theorem that there exists a contract $\contract_S \in T_S$ for which   
    $$u_p(S, \contract_S)=\sup_{\contract}u_p(S, \contract).$$
    Let $S^\star = \arg\max u_p(S, \contract_S)$.
    By definition, $S^\star \in \nash{\contract_{S^\star}}$.
    Then, for every $\alpha$, and $S \in \nash{\contract}$ it holds that 
    $$u_p(S, \contract) \leq  u_p(S, \contract_S) \leq u_p(S^\star, \contract_{S^\star}),$$
    which concludes the proof.
\end{proof}

\end{document}